\documentclass{article}
\usepackage{fullpage}
\usepackage[utf8]{inputenc}

\usepackage{graphicx}
\usepackage{amsmath,amsthm,amssymb}
\usepackage{algorithm}
\usepackage{algpseudocode}
\usepackage{multirow}
\usepackage{makecell}

\newtheorem{theorem}{Theorem}[section]

\newtheorem{lemma}[theorem]{Lemma}

\theoremstyle{definition}
\newtheorem{remark}{Remark}
\newtheorem{definition}{Definition}[section]

\newcommand{\istrut}[2][0]{\rule[- #1 mm]{0mm}{#1 mm}\rule{0mm}{#2 mm}}
\newcommand{\rb}[2]{\raisebox{#1 mm}[0mm][0mm]{#2}}

\newcommand{\paren}[1]{\mathopen{}\left( #1 \right)\mathclose{}}

\newcommand{\out}{\operatorname{out}}
\newcommand{\dist}{\operatorname{dist}}
\newcommand{\MSSP}{\textsf{MSSP}}
\newcommand{\VD}{\mathrm{VD}}

\newcommand{\VDout}{\VD^*_{\out}}
\newcommand{\VDfarout}{\VD^*_{\operatorname{farout}}}
\newcommand{\Vor}{\operatorname{Vor}}
\newcommand{\boundary}{\partial}
\newcommand{\bydef}{\stackrel{\operatorname{def}}{=}}
\newcommand{\flag}{\operatorname{flag}}
\newcommand{\parent}{\operatorname{par}}
\newcommand{\ET}{\mathsf{ET}}

\newcommand{\Dist}{\boldsymbol{\mathsf{Dist}}}

\newcommand{\PointLocate}{\boldsymbol{\mathsf{PointLocate}}}
\newcommand{\CentroidSearch}{\boldsymbol{\mathsf{CentroidSearch}}}
\newcommand{\Navigation}{\boldsymbol{\mathsf{Navigation}}}
\newcommand{\SitePathIndicator}{\boldsymbol{\mathsf{SitePathIndicator}}}
\newcommand{\ChordIndicator}{\boldsymbol{\mathsf{ChordIndicator}}}
\newcommand{\MaximalChord}{\boldsymbol{\mathsf{MaximalChord}}}
\newcommand{\AdjacentPiece}{\boldsymbol{\mathsf{AdjacentPiece}}}
\newcommand{\PieceSearch}{\boldsymbol{\mathsf{PieceSearch}}}

\newcommand{\chord}[1]{\overrightarrow{#1}}

\newcommand{\Patrascu}{P\v{a}tra\c{s}cu}

\newcommand{\ignore}[1]{}

\title{Planar Distance Oracles with Better 
Time-Space Tradeoffs\thanks{This work was supported by NSF grants CCF-1637546 and CCF-1815316, and a grant from IIIS, Tsinghua University.}}

\author{Yaowei Long\\
IIIS, Tsinghua University
\and
Seth Pettie\\ University of Michigan \\ 
}

\date{}

\begin{document}
\maketitle

\begin{abstract}
In a recent breakthrough, Charalampopoulos, Gawrychowski, Mozes, and Weimann~\cite{CharalampopoulosGMW19} showed that
exact distance queries on planar graphs could be answered in $n^{o(1)}$ time by a data structure 
occupying $n^{1+o(1)}$ space, i.e., up to $o(1)$ terms, 
optimal exponents in time (0) and space (1) can be achieved \emph{simultaneously}.
Their distance query algorithm is recursive: it makes successive calls to a 
point-location algorithm for 
planar Voronoi diagrams, which involves many recursive distance queries.  
The depth of this recursion is non-constant
and the branching factor logarithmic, leading to $(\log n)^{\omega(1)} = n^{o(1)}$ query times.

In this paper we present a new way to do 
point-location in planar Voronoi diagrams, 
which leads to a new exact distance oracle.  
At the two extremes of our space-time tradeoff curve
we can achieve either
\[
\begin{array}{lll}
\mbox{$n^{1+o(1)}$ space } 
 & \mbox{and}
 & \mbox{ $\log^{2+o(1)} n$ query time, or }\\
&&\\
\mbox{$n\log^{2+o(1)} n$ space }
 &\mbox{and}
 &\mbox{$n^{o(1)}$ query time.}\\
\end{array}
\]
All previous oracles 
with $\tilde{O}(1)$ query time
occupy space $n^{1+\Omega(1)}$, 
and all previous oracles
with space $\tilde{O}(n)$ 
answer queries in $n^{\Omega(1)}$ time.
\end{abstract}

\thispagestyle{empty}
\setcounter{page}{0}

\newpage

\section{Introduction}

A \emph{distance oracle} is a data structure that answers distance
queries (or approximate distance queries) 
w.r.t.~some underlying graph or metric space.
On general graphs there
are many well known distance oracles that pit
space against multiplicative approximation~\cite{ThorupZ05},
space against mixed multiplicative/additive approximation~\cite{PatrascuR14,AbrahamG11}, 
and, in sparse graphs, space against query time~\cite{Agarwal14,SommerVY09}.
Refer to Sommer \cite{Sommer14} for a survey on distance oracles.

Whereas \emph{approximation} seems to be a necessary ingredient 
to achieve any reasonable space/query time on general graphs,
structured graph classes may admit \emph{exact} distance oracles
with attractive time-space tradeoffs.  In this paper we continue a long line of work~\cite{ArikatiCCDSZ96,Djidjev96,ChenX00,FakcharoenpholR06,Klein05,Wulff-Nilsen10,Nussbaum11,Cabello12,MozesS2012,Cohen-AddadDW17,GawrychowskiMWW18,CharalampopoulosGMW19} 
focused on exact distance oracles for weighted, directed planar graphs.

\paragraph{History.} Between 1996-2012, work of 
Arikati et al.~\cite{ArikatiCCDSZ96},
Djidjev~\cite{Djidjev96}, Chen and Xu~\cite{ChenX00}, 
Fakcharoenphol and Rao~\cite{FakcharoenpholR06}, Klein~\cite{Klein05},
Wulff-Nilsen~\cite{Wulff-Nilsen10}, Nussbaum~\cite{Nussbaum11},
Cabello~\cite{Cabello12}, and Mozes and Sommer~\cite{MozesS2012}
achieved space $\tilde{O}(S)$ and query time $\tilde{O}(n/\sqrt{S})$, 
for various ranges of $S$ that ultimately covered the full range $[n,n^2]$.

In 2017, Cabello~\cite{Cabello19} 
introduced \emph{planar} Voronoi diagrams as a tool for 
solving metric problems in planar graphs, such as diameter and sum-of-distances.
This idea was incorporated into new planar distance oracles, leading
to $\tilde{O}(n^{5/2}/S^{3/2})$ query time~\cite{Cohen-AddadDW17} 
for $S\in [n^{3/2},n^{5/3}]$ and $\tilde{O}(n^{3/2}/S)$ query time~\cite{GawrychowskiMWW18} 
for $S\in [n,n^{3/2}]$.
Finally, in a major 
breakthrough Charalampopoulos, Gawrychowski, Mozes, and Weimann~\cite{CharalampopoulosGMW19} 
demonstrated that up to $n^{o(1)}$ factors, 
\emph{there is no tradeoff} between space and query time, i.e., 
space $n^{1+o(1)}$ and query time 
$n^{o(1)}$ can be achieved simultaneously.
In more detail, 
they proved that space $O(n^{4/3}\sqrt{\log n})$ allows
for query time $O(\log^2 n)$, 
space $\tilde{O}(n^{1+\epsilon})$ allows for query time 
$O(\log n)^{1/\epsilon-1}$, 
and space 
$O(n\log^{2+1/\epsilon} n)$  allows for query time
$O(n^{2\epsilon})$.

The Charalampopoulos et al.~structure is based on a hierarchical $\vec{r}$-decomposition of the graph, 
$\vec{r}=(n,n^{(m-1)/m},\ldots,n^{1/m})$.  
(See Section~\ref{sect:prelims}.)
Given $u,v$, it iteratively finds the last
boundary vertex $u_i$ on the shortest $u$-$v$ path
that lies on the boundary of the level-$i$ region
containing $u$.  Given $u_{i-1}$, finding $u_i$
amounts to solving a \emph{point location} 
problem on an \emph{external} Voronoi diagram,
i.e., a Voronoi diagram of the \emph{complement} 
of a region in the hierarchy.
Each point location query is solved via a kind of binary search,
and each step of the binary search involves 
3 \emph{recursive} distance queries that begin
at a ``higher'' level in the hierarchy.  
This leads to a tradeoff between 
space $\tilde{O}(n^{1+1/m})$ and 
query time $O(\log n)^{m-1}$. 

See Table~\ref{tab:priorwork} for a summary of 
the space-time tradeoffs exact and approximate planar 
distance oracles.
\begin{table}[]
    \centering
\begin{tabular}{|l|l|l|}
\multicolumn{1}{l}{\large\sc Reference} &
\multicolumn{1}{l}{\large\sc Space} &
\multicolumn{1}{l}{\large\sc Query Time}\\
     \hline\hline
     \istrut[4]{5.5}\parbox{110pt}{\small Arikati, Chen, Chew\\
     Das, Smid \& Zaroliagis}\hfill{\small 1996} & $S\in[n^{3/2},n^{2}]$ & $O\paren{\frac{n^{2}}{S}}$  \\
     \hline
     \multirow{2}*{\rb{-1}{\small Djidjev}}\hfill\multirow{2}*{\rb{-1}{\small 1996}} 
               & \istrut[2.5]{4.5}$S\in[n,n^{2}]$ & $O\paren{\frac{n^{2}}{S}}$ \\\cline{2-3}
             ~ & \istrut[2.5]{4.5}$S\in[n^{4/3},n^{3/2}]$ & $O\paren{\frac{n}{\sqrt{S}}\log n}$ \\
     \hline
     \small Chen \& Xu \hfill{\small 2000} & \istrut[2.5]{4.5}$S\in[n^{4/3},n^{2}]$ & $O\paren{\frac{n}{\sqrt{S}}\log\paren{\frac{n}{\sqrt{S}}}}$  \\
     \hline
     \small Fakcharoenphol \& Rao \hfill{\small 2006} & \istrut[2.5]{4.5}$O(n\log n)$ & $O(\sqrt{n}\log^{2} n)$  \\\hline
     \small Wulff-Nilsen\hfill{\small 2010} & \istrut[2.5]{4.5}$O(n^{2}\frac{\log^4\log n}{\log n})$ & $O(1)$ \\
     \hline
     \multirow{2}*{\small Nussbaum}\hfill\multirow{2}*{\small 2011}
                                 & \istrut[2.5]{4.5} $O(n)$ &
                                 \istrut[2.5]{4.5} $O(n^{1/2+\epsilon})$
                                 \\\cline{2-3}
     ~                           & \istrut[2.5]{4.5} $S\in[n^{4/3},n^{2}]$ &
                                 $O\paren{\frac{n}{\sqrt{S}}}$\\
     \hline
     \small Cabello\hfill{\small 2012} & \istrut[2.5]{4.5}$S\in[n^{4/3}\log^{1/3}n,n^{2}]$ & $O\paren{\frac{n}{\sqrt{S}}\log^{3/2} n}$ \\
     \hline
     \multirow{2}*{\small Mozes \& Sommer}\hfill\multirow{2}*{\small 2012} 
                                        & \istrut[2.5]{4.5}$S\in[n\log\log n, n^{2}]$ & $O\paren{\frac{n}{\sqrt{S}}\log^{2}n\log^{3/2}\log n}$ \\\cline{2-3}
     ~                                  & \istrut[2.5]{4.5}$O(n)$ & $O(n^{1/2+\epsilon})$ \\
     \hline
     \istrut[4]{5}\parbox{110pt}{\small Cohen-Addad, Dahlgaard\\ \& Wulff-Nilsen}\hfill{\small 2017} 
                                    & $S\in[n^{3/2},n^{5/3}]$  & $O\paren{\frac{n^{5/2}}{S^{3/2}}\log n}$  \\
     \hline
     \istrut[4]{5}\parbox{110pt}{\small Gawrychowski, Mozes,\\
     Weimann \& Wulff-Nilsen}\hfill{\small 2018} 
            & $\tilde{O}(S)$ for $S\in[n,n^{3/2}]$ & $\tilde{O}\paren{\frac{n^{3/2}}{S}}$  \\
     \hline
      \multirow{2}*{\parbox{110pt}{\small Charalampopoulos,                
      Gawrychowski, Mozes\\ \& Weimann}}\hfill\multirow{2}*{\small 2019} 
                    & \istrut[3]{4}$O(n^{4/3}\sqrt{\log n})$  & $O(\log^2 n)$  \\\cline{2-3}
                    & \istrut[2]{4}$n^{1+o(1)}$          & $n^{o(1)}$ \\
     \hline
     \multirow{2}*{{\bf new}}\hfill\multirow{2}*{\small 2020} 
            & $n^{1+o(1)}$  & $\log^{2+o(1)} n$  \\\cline{2-3}
            & $n\log^{2+o(1)} n$ & $n^{o(1)}$ \\\hline\hline
\multicolumn{3}{l}{}\\
\multicolumn{1}{l}{\large\sc $(1+\epsilon)$-Approx. Oracles} &
\multicolumn{1}{l}{\large\sc Space} &
\multicolumn{1}{l}{\large\sc Query Time}\\\hline\hline
\multirow{2}*{\small Thorup}\hfill \multirow{2}*{\small 2001} & \istrut[2.5]{4.5}$O(n\epsilon^{-1}\log^2 n)$ & $O(\log\log n + \epsilon^{-1})$\\\cline{2-3}
~ & \istrut[2.5]{4.5}$O(n\epsilon^{-1}\log n)$ & $O(\epsilon^{-1})$\hfill (Undir.)\\\hline
{\small Klein}\hfill {\small 2002} & \istrut[2.5]{4.5}$O(n(\log n + \epsilon^{-1}\log\epsilon^{-1}))$ & $O(\epsilon^{-1})$\hfill (Undir.)\\\hline
\parbox{110pt}{\small Kawarabayashi,\\                
      Klein, \& Sommer}\hfill{\small 2011} & \istrut[3.5]{5.5}$O(n)$ & $O(\epsilon^{-2}\log^2 n)$\hfill (Undir.)\\\hline
\multirow{2}*{\parbox{110pt}{\small Kawarabayashi,\\                
      Sommer, \& Thorup}}\hfill \multirow{2}*{\small 2013} 
    & \istrut[2]{4.5}$\overline{O}(n\log n)$ & $\overline{O}(\epsilon^{-1})$\hfill (Undir.)\\\cline{2-3}
~   & \istrut[2]{4.5}$\overline{O}(n)$     & $\overline{O}(\epsilon^{-1})$\hfill \ \ \ (Undir.,Unweight.)\\\hline\hline
\end{tabular}
    \caption{Space-query time tradeoffs for exact and approximate planar distance oracles.
    $\overline{O}$ hides $\log(\epsilon^{-1}\log n)$ factors.}
    \label{tab:priorwork}
\end{table}
\nocite{Thorup04}
\nocite{KawarabayashiKS11}
\nocite{KawarabayashiST13}
\nocite{Klein02}

\paragraph{New Results.} 
In this paper we develop a more direct and more efficient way 
to do point location in external Voronoi diagrams.
It uses a new persistent data structure for maintaining sets 
of non-crossing systems of \emph{chords}, which are paths
that begin and end at the boundary vertices of a region, but
are internally vertex disjoint from the region.  By applying
this point location method in the 
framework of Charalampopoulos et al.~\cite{CharalampopoulosGMW19},
we obtain a better time-space tradeoff, which is most noticeable
at the ``extremes'' when $\tilde{O}(n)$ space or $\tilde{O}(1)$ query time
is prioritized.
\begin{theorem}\label{thm:maintheorem}
Let $G$ be an $n$-vertex weighted planar digraph with no negative cycles,
and let $\kappa,m\geq 1$ be parameters.
A distance oracle occupying space 
$O(m\kappa n^{1+1/m+1/\kappa})$
can be constructed in $\tilde{O}(n^{3/2+1/m} + n^{1+1/m+1/\kappa})$
time that answers exact distance queries
in $O(2^{m}\kappa\log^{2}n\log\log n)$  time.
At the two extremes of the space-time tradeoff curve, 
we can construct oracles in $n^{3/2+o(1)}$ time with either
    \begin{itemize}
        \item $n^{1+o(1)}$ space and $\log^{2+o(1)}n$ query time, or
        \item $n\log^{2+o(1)}n$ space and $n^{o(1)}$ query time.
    \end{itemize}
\end{theorem}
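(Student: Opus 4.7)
The plan is to adopt the hierarchical $\vec{r}$-decomposition framework of Charalampopoulos et al.~\cite{CharalampopoulosGMW19} but to replace their recursive point-location subroutine with a direct one based on a persistent data structure for non-crossing chord systems. Fix $\vec{r}=(n,n^{(m-1)/m},\ldots,n^{1/m})$ and compute the induced $m$-level decomposition; for each piece $P$ at every level, precompute the external Voronoi diagram $\VDout(P)$ with sites $\boundary P$. A distance query $(u,v)$ is then answered, as in \cite{CharalampopoulosGMW19}, by iteratively finding the last boundary vertex $u_i$ of a shortest $u$-$v$ path on $\boundary P_i$, where $P_i$ is the level-$i$ piece containing $u$; each such step reduces to a single $\PointLocate$ query of $v$ in $\VDout(P_{i-1})$.

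The key new ingredient is to implement $\PointLocate$ directly, so that answering a level-$(i-1)$ query does not cascade into a deeply recursive chain of distance queries at all higher levels. I would represent each Voronoi region of $\VDout(P)$ by its bounding chords---paths in $G\setminus P$ joining two boundary vertices of $P$---and store these chord systems in a persistent balanced search tree ordered by the cyclic order of $\boundary P$, so that chord data of a parent piece is shared with its children. A single $\PointLocate$ is then resolved by an $O(\log n)$-depth binary search over chords, with each comparison realized through a $\Dist$-primitive; the $\Dist$-primitive is accelerated by an auxiliary $n^{1/\kappa}$-size table per piece, reducing each call to $O(\kappa\log n\log\log n)$ time. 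Hence each $\PointLocate$ takes $O(\kappa\log^{2}n\log\log n)$. Because a $\Dist$-call inside a level-$(i-1)$ query may spawn a constant number of follow-up $\PointLocate$ calls at levels $\ge i$, the overall query expands into a tree of depth $m$ with constant branching, producing the claimed query time $O(2^m\kappa\log^{2}n\log\log n)$.

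The space analysis sums $O(mn^{1+1/m})$ for the pieces and their Voronoi diagrams across the $m$ levels, plus an additional $n^{1/\kappa}$-factor per piece from the chord auxiliary tables, yielding the claimed $O(m\kappa n^{1+1/m+1/\kappa})$. Preprocessing is dominated by $\tilde{O}(n^{3/2+1/m})$ shortest-path computations at the finest scale, plus $O(n^{1+1/m+1/\kappa})$ for constructing the auxiliary structures. The two extremes follow by tuning: letting both $m$ and $\kappa$ be slowly-growing functions of $\log\log n$ drives $n^{1/m+1/\kappa}$ to $n^{o(1)}$ and $2^m\kappa$ to $\log^{o(1)}n$, giving $n^{1+o(1)}$ space and $\log^{2+o(1)}n$ query time, while setting $m=\kappa=\Theta(\log n\cdot\log\log\log n/\log\log n)$ drives $n^{1/m+1/\kappa}$ to $\log^{o(1)}n$ and $2^m\kappa$ to $n^{o(1)}$, giving $n\log^{2+o(1)}n$ space and $n^{o(1)}$ query time.

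The hard part will be the design and correctness of the persistent chord data structure and its interaction with $\PointLocate$. Specifically, I expect the main difficulties to be (i) showing that the common boundary of two Voronoi regions in $\VDout(P)$ is captured by $O(1)$ non-crossing chords that respect the cyclic order of $\boundary P$, so that the union of all region boundaries is a non-crossing chord system of size $O(|\boundary P|)$; (ii) designing a $\PointLocate$ procedure that uses only $O(\log n)$ $\Dist$-comparisons against chord endpoints to locate $v$ correctly even though chord endpoints are inside $G\setminus P$; and (iii) persistently sharing chord representations across nested pieces so that total auxiliary storage is $\tilde{O}(n^{1+1/\kappa})$ per level rather than per piece. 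Controlling the $2^m$ blow-up in the recursion tree---ensuring that each $\Dist$-call inside a $\PointLocate$ forces only $O(1)$ (not $\omega(1)$) follow-up queries at the next level---is the most delicate combinatorial step in the analysis.
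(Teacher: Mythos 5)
Your high-level plan matches the paper's: keep the hierarchical $\vec{r}$-decomposition and external-Voronoi framework of Charalampopoulos et al., and replace their recursive point location with a direct method built on persistent non-crossing chord structures plus a $\kappa$-parameterized \MSSP{} space/query tradeoff. But the query algorithm you actually describe does not deliver the claimed recurrence, and the missing piece is exactly the one you defer at the end. You resolve each $\PointLocate$ by ``an $O(\log n)$-depth binary search over chords, with each comparison realized through a $\Dist$-primitive,'' and then assert the recursion tree has constant branching. If each of the $O(\log n)$ comparison steps invokes a $\Dist$-primitive that itself triggers point location at the next level, the branching factor per level is $\Theta(\log n)$, not $O(1)$, and you recover precisely the $(\log n)^{\Theta(m)}$ query time of \cite{CharalampopoulosGMW19} rather than $O(2^m\kappa\log^2 n\log\log n)$. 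The entire content of the new construction is to make every step of the binary search (a descent of the centroid decomposition of $\VD^*$) answerable \emph{without} recursing to the next level: the $\Navigation$ step at each centroid --- deciding whether $v$ lies on one of three site-centroid shortest paths, or which side of a site-centroid-site chord it falls on --- is answered purely from precomputed structures (truncated \MSSP{} structures on $R_i^{\out}\cap R_{i+1}$, per-source chord trees and piece trees supporting $\MaximalChord$ and $\AdjacentPiece$ under persistence, and precomputed site/side tables), and genuine recursive calls to $\Dist$ at level $i+1$ occur only at the \emph{termination} of the centroid search, at most twice. That confinement of recursion to two calls per level is what yields $T(i)=2T(i+1)+O(\kappa\log^2 n\log\log n)$ and hence $T(0)=O(2^m\kappa\log^2 n\log\log n)$; your proposal names this as ``the most delicate combinatorial step'' but supplies no mechanism for it.

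A second, related gap: you propose one diagram $\VDout(P)$ per piece with sites $\boundary P$, but the diagrams required are additively weighted by $\omega(s)=\dist_G(q,s)$ for each possible source $q$ on the boundary of the \emph{child} region, so one must store a diagram and its centroid decomposition for every pair $(q,R_{i+1})$ with $q\in\boundary R_i$; a single per-piece diagram cannot encode where the shortest $u_i$-to-$v$ path exits $R_{i+1}$. Moreover, the chord systems that make direct navigation work are not the Voronoi cell boundaries you describe, but the projections of the site-centroid-site paths onto $R_t^{\out}$ (where $R_t$ is the deepest region excluding $v$), organized as chords of shortest-path trees restricted to $\boundary R_t$; it is the laminarity of these shortest-path chords and the resulting piece tree that permit maximal-chord queries in $O(\log n\log\log n)$ time. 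Your parameter choices for the two extremes are essentially correct, but they rest on the unproved constant-branching claim above.
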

Our new point-location routine suffices to get the query 
time down to $O(\log^3 n)$.  In order to reduce it further
to $O(\log^{2+o(1)} n)$, we develop a new dynamic tree data structure
based on Euler-Tour trees~\cite{HenzingerK99} with 
$O(\kappa n^{1/\kappa})$ update time and $O(\kappa)$ query time.
This allows us to generate \MSSP{} (multiple-source shortest paths) 
structures with a similar space-query tradeoff, specifically, 
$O(\kappa n^{1+1/\kappa})$ space and $O(\kappa\log\log n)$ query time.
Our \MSSP{} construction follows Klein~\cite{Klein05} (see also~\cite{GawrychowskiMWW18}), but uses our new 
dynamic tree in lieu of Sleator and Tarjan's Link-Cut trees~\cite{SleatorT83}, 
and uses persistent arrays~\cite{Dietz89} in lieu of~\cite{DriscollSST89} 
to make the data structure persistent.

\paragraph{Organization.}
In Section~\ref{sect:prelims} we review background on planar embeddings,
planar separators, multiple-source shortest paths, and weighted Voronoi diagrams.
In Section~\ref{sect:DistanceOracle} we introduce key parts of the data structure and describe the query algorithm, 
assuming a certain point location problem can be solved.  
Section~\ref{sect:NavigationOracle} introduces several more components
of the data structure, and shows how they can be applied to solve
this particular point location problem in near-logarithmic time.
The space and query-time claims of Theorem~\ref{thm:maintheorem} are proved
in Section~\ref{sect:analysis}.  The construction time claims of Theorem~\ref{thm:maintheorem} are proved in Appendix~\ref{sect:construction}.
Appendix~\ref{sect:Euler} gives the \MSSP{} construction based on 
Euler Tour trees.  Appendix~\ref{sect:MultipleHoles} explains how
to remove a simplifying assumption made throughout the paper, that
the boundary vertices of every region in the $\vec{r}$-decomposition 
lie on a \emph{single} hole, which is bounded by a \emph{simple} cycle.

\section{Preliminaries}\label{sect:prelims}

\subsection{The Graph and Its Embedding} 

A weighted planar graph $G=(V,E,\ell)$ is represented by an abstract embedding: for each $v\in V(G)$
we list the edges incident to $v$ according to a clockwise order around $v$.  We assume the graph has no negative weight cycles and further assume the following, without loss of generality.
\begin{itemize}
    \item All the edge-weights can be made non-negative ($\ell : E\rightarrow \mathbb{R}_{\ge 0}$)~\cite{Johnson77}.
    Furthermore, via randomized or deterministic perturbation~\cite{EricksonFL18}, 
    we can assume there are no zero weight edges, and that
    shortest paths are \emph{unique} in \emph{any} subgraph of $G$.
    \item The graph is connected and triangulated.  Assign all artificial edges weight $n\cdot \max_{e\in E(G)}\{\ell(e)\}$ so as not to affect any finite distances.
    \item If $(u,v)\in E(G)$ then $(v,u)\in E(G)$ as well.  (In the circular ordering around $v$, they are represented as a single element $\{u,v\}$.)
\end{itemize}

Suppose $P=(v_0,v_1,\ldots,v_k)$ is a path oriented from $v_0$ to $v_k$,
and $e=(v_i,u)$ is an edge not on $P$, $i\in [1,k-1]$.  
Then $e$ is to the right of $P$ 
if $e$ appears between $(v_i,v_{i+1})$ and $(v_{i-1},v_i)$ in the clockwise
order around $v_i$, and left of $P$ otherwise.

\subsection{Separators and Divisions}

Lipton and Tarjan~\cite{LiptonT80} proved that every planar graph contains
a \emph{separator} of $O(\sqrt{n})$ vertices that, once removed, breaks
the graph into components of at most 2/3 the size.  
Miller~\cite{Miller86} showed that every triangulated planar 
graph has a $O(\sqrt{n})$-size separator that consists of a simple cycle.
Frederickson~\cite{Frederickson87} defined a \emph{division} to 
be a set of edge-induced subgraphs whose union is $G$.  
A vertex in more than one region is a \emph{boundary} vertex;
the boundary of a region $R$ is denoted $\boundary R$.
Edges along the boundary between two regions appear in both regions.
The $r$-divisions of~\cite{Frederickson87} have $\Theta(n/r)$ regions 
each with $O(r)$ vertices and $O(\sqrt{r})$ 
boundary vertices.

We use a linear-time algorithm of 
Klein, Mozes, and Sommer~\cite{KleinMS13}
for computing a hierarchical $\vec{r}$-division, 
where $\vec{r}=(r_m,\ldots,r_1)$ and $n = r_m > \cdots > r_1 = \Omega(1)$.  Such an $\vec{r}$-division has the following properties:
\begin{itemize}
    \item (Division \& Hierarchy) For each $i$, $\mathcal{R}_i$ is the set of regions in an $r_i$-division of $G$, where $\mathcal{R}_m = \{G\}$ consists of the graph itself.
    For each $i<i'\leq m$ and $R_i \in \mathcal{R}_i$, there is a unique $R_{i'}\in \mathcal{R}_{i'}$ such that $E(R_i) \subseteq E(R_{i'})$.  
    The $\vec{r}$-division is therefore represented as a rooted tree of regions.
    \item (Boundaries and Holes) The $O(\sqrt{r_i})$ boundary vertices of any $R_i\in \mathcal{R}_{i}$ lie on a constant number of faces of $R_i$ called \emph{holes},
    each bounded by a cycle (not necessarily simple).
\end{itemize}
We supplement the $\vec{r}$-division with a zeroth level.  
The layer-0 $\mathcal{R}_0 = \{\{v\} \mid v\in V(G)\}$ consists of 
singleton sets, and each $\{v\}$ is attached as a (leaf) child of 
an arbitrary $R\in \mathcal{R}_1$ for which $v\in R$.  

Suppose $f$ is one of the $O(1)$ holes of region $R$ and $C_f$ the cycle around $f$.
The cycle $C_f$ partitions $E(G) - C_f$ into two parts.
Let $R^{f,\out}$ be the graph induced by the part disjoint from $R$, together with $C_f$,
i.e., $C_f$ appears in both $R$ and $R^{f,\out}$.
To keep the description of the algorithm as simple as possible,
\emph{we will assume that $\boundary R$ lies on a single simple cycle (hole)} $f_R$, 
and let $R^{\out}$ be short for $R^{f_R,\out}$.
The modifications necessary to deal with multiple
holes and non-simple boundary cycles are explained in Appendix~\ref{sect:MultipleHoles}.

\subsection{Multi-source Shortest Paths}

Suppose $H$ is a weighted planar graph with a distinguished face $f$ on vertices $S$.
Klein's \MSSP{} algorithm takes $O(|H|\log |H|)$ time and produces an $O(|H|\log |H|)$-size data structure such that given $s\in S$ and $v\in V(H)$, returns $\dist_H(s,v)$ in $O(\log |H|)$ time.  
Klein's algorithm can be viewed as continuously moving the source vertex 
around the boundary face $f$, recording all changes to the SSSP tree 
in a dynamic tree data structure~\cite{SleatorT83}.  
It is shown~\cite{Klein05} that each edge in $H$ 
enters and leaves the SSSP tree exactly once, meaning
the number of changes is $O(|H|)$.  Each change to the tree is effected in 
$O(\log |H|)$ time~\cite{SleatorT83}, and the generic persistence 
method of~\cite{DriscollSST89} allows for querying any state of the SSSP tree.
The important point is that the total space is linear in the number of updates
to the structure ($O(|H|)$) times the update time ($O(\log|H|)$).
As observed in~\cite{GawrychowskiMWW18},
this structure can also answer other useful queries in $O(\log |H|)$ time.
Lemma~\ref{lem:MSSP} is similar to~\cite{Klein05,GawrychowskiMWW18} 
except that we use a dynamic tree data structure based on 
Euler Tour trees~\cite{HenzingerK99}
rather thank Link-Cut trees~\cite{SleatorT83}, 
which allows for a more flexible tradeoff between
update and query time.  Because our data structure does not satisfy the
criteria of Driscoll et al.'s~\cite{DriscollSST89} 
persistence method for pointer-based data structures,
we use the folklore implementation of persistent 
arrays\footnote{Dietz~\cite{Dietz89}
credits this method to an oral presentation of 
Dietzfelbinger et al.~\cite{DietzfelbingerKMHRT88}, 
which highlighted it as an application of 
dynamic perfect hashing.}
to make any RAM data structure persistent, 
with doubly-logarithmic slowdown in the query time.
See Appendix~\ref{sect:Euler} for a proof of Lemma~\ref{lem:MSSP}.

\begin{lemma}\label{lem:MSSP}
(Cf.~Klein~\cite{Klein05}, Gawrychowski et al.~\cite{GawrychowskiMWW18})
Let $H$ be a planar graph, $S$ be the vertices on some distinguished face $f$,
and $\kappa \ge 1$ be a parameter.
An $O(\kappa |H|^{1+1/\kappa})$-space data structure can be computed in 
$O(\kappa|H|^{1+1/\kappa})$ time that answers the 
following queries in $O(\kappa\log\log|H|)$ time.
\begin{itemize}
    \item Given $s\in S, v\in V(H)$, return $\dist_H(s,v)$.
    \item Given $s\in S, u,v\in V(H)$, 
            return $(x,e_u,e_v)$, where $x$
            is the least common ancestor of $u$ and 
            $v$ in the SSSP tree rooted at $s$
            and $e_z$ is the edge on the path from 
            $x$ to $z$ (if $x\neq z$), $z\in \{u,v\}$.
\end{itemize}
\end{lemma}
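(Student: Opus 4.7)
The plan is to follow Klein's MSSP pipeline verbatim---rotate the source $s$ continuously around $f$ and record each of the $O(|H|)$ edge swaps it induces on the SSSP tree in a persistent dynamic tree---but substitute two components. In place of Sleator--Tarjan Link-Cut trees, I use a new variant of the Henzinger--King Euler Tour (ET) tree whose branching factor $|H|^{1/\kappa}$ is controlled by the parameter $\kappa$. In place of the Driscoll--Sarnak--Sleator--Tarjan pointer-based persistence scheme (which does not apply here, since the ET-tree is a high-degree balanced tree stored in arrays rather than a bounded-degree pointer structure), I use the folklore persistent-array data structure attributed to Dietzfelbinger et al., which supports queries on any version with an $O(\log\log |H|)$ slowdown.

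The ET-tree represents the current SSSP tree $T$ by the Euler-tour sequence of its vertices, stored in a balanced $\Theta(|H|^{1/\kappa})$-ary search tree of depth $\kappa$. At each internal node I store (i) the minimum depth (i.e., distance from $s$ in $T$) of any vertex whose occurrence lies in its subsequence, together with a witness vertex, and (ii) a lazy additive shift to be pushed down to all depths in its subsequence. Link and cut of a tree edge translate into a constant number of split/concatenate operations on this sequence, each touching $O(\kappa|H|^{1/\kappa})$ cells. The distance query $\dist_H(s,v)$ reads the depth at the leaf for $v$ by walking up the ET-tree in $O(\kappa)$ time, applying accumulated shifts along the way. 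The LCA query locates, between the occurrences of $u$ and $v$, the position of minimum depth, also in $O(\kappa)$ time; the edges $e_u$ and $e_v$ are recovered by identifying, inside the subsequence at the LCA, the first occurrence of a vertex on the path toward $u$ and toward $v$.

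The main obstacle is handling swaps efficiently. When the moving source causes a non-tree edge $(u',v)$ to replace a tree edge $(u,v)$, the depths of $v$ and its entire subtree shift by a constant amount; updating these depths one at a time would blow past the budget. The lazy shift in each ET-tree node precisely avoids this: a constant can be added to the aggregate of an entire contiguous subsequence in $O(\kappa)$ time, so the per-swap cost remains $O(\kappa|H|^{1/\kappa})$. Summed over $O(|H|)$ swaps this gives $O(\kappa|H|^{1+1/\kappa})$ array-cell modifications in total. Routing each modification through the persistent-array structure keeps the overall construction time and space at $O(\kappa|H|^{1+1/\kappa})$ and inflates each cell access by $O(\log\log|H|)$, yielding $O(\kappa\log\log|H|)$ per top-level query as claimed.
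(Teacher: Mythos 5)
Your proposal is correct and follows essentially the same route as the paper's Appendix~\ref{sect:Euler}: Klein's source-rotation with $O(|H|)$ swaps, an Euler-Tour sequence stored in a $\Theta(|H|^{1/\kappa})$-ary tree of depth $\kappa$ carrying lazy additive offsets and subtree minima (so swaps cost $O(\kappa|H|^{1/\kappa})$ and distance/range-min queries cost $O(\kappa)$, the latter needing the standard constant-time RMQ over each node's children), made persistent via persistent arrays with an $O(\log\log|H|)$ slowdown. The only cosmetic difference is that the paper's Euler tour is over directed edge occurrences rather than vertex occurrences, which makes the recovery of $e_u$ and $e_v$ via two tie-broken range-minimum queries slightly cleaner, but your mechanism is the same.
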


The purpose of the second query is to tell whether $u$ lies on the shortest $s$-$v$ path ($x=u$)
or vice versa, or to tell which direction the $s$-$u$ path branches from the $s$-$v$ path.  Once we retrieve the LCA $x$ and edges $e_u,u_v$, we get the edge $e_x$
from $x$ to its parent.  The clockwise order of $e_x,e_u,e_v$ around $x$ tells us whether $s$-$u$ branches from $s$-$v$ to the left or right.  See Figure~\ref{fig:LCA}.

\begin{figure}
    \centering
    \scalebox{.4}{\includegraphics{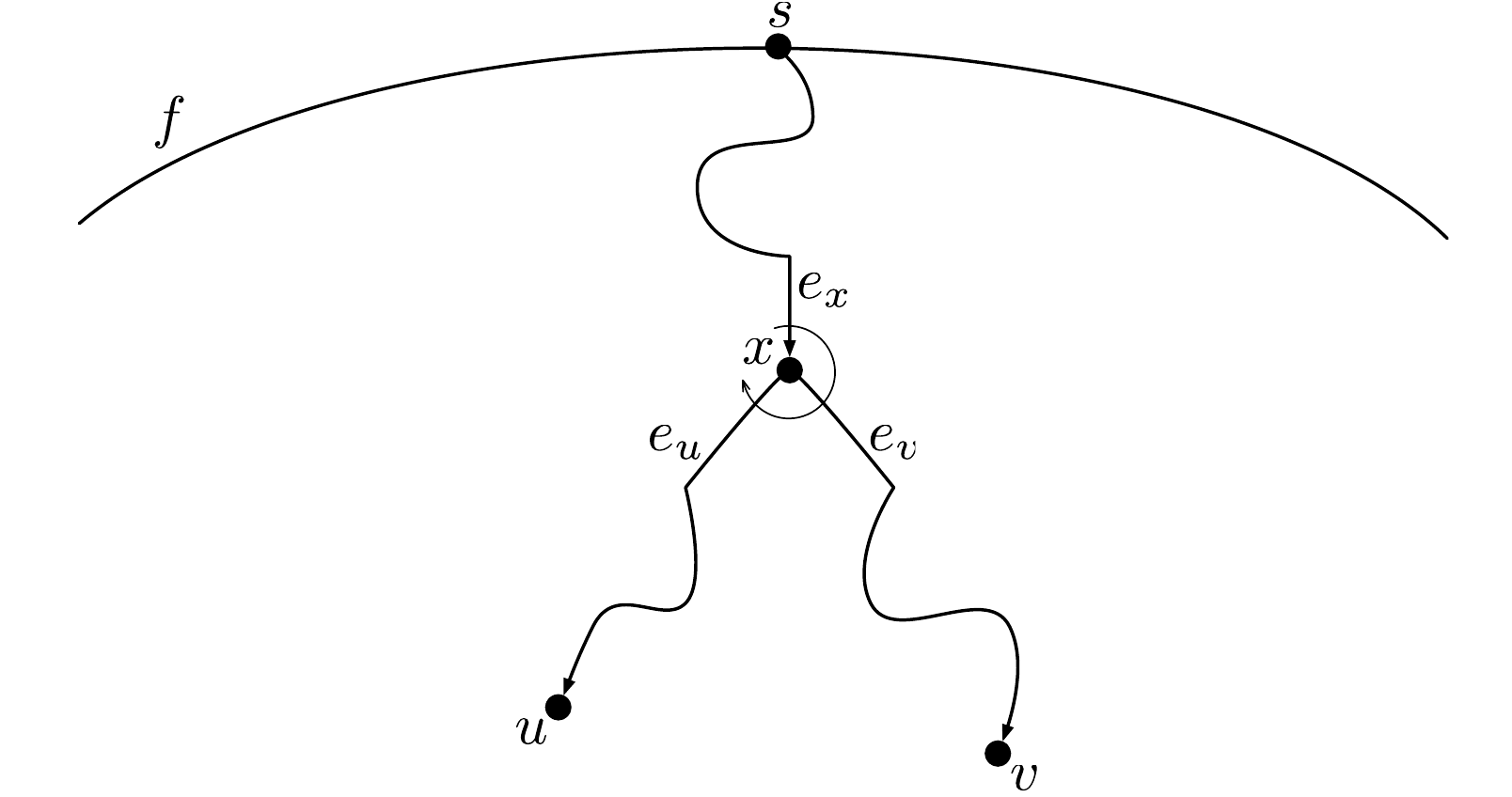}}
    \caption{The clockwise order of $e_x,e_u,e_v$ around $v$ tells us whether
    the shortest $s$-$u$ path branches from the shortest $s$-$v$ path to the right
    or left.}
    \label{fig:LCA}
\end{figure}

\subsection{Additively Weighted Voronoi Diagrams}

Let $H$ be a weighted planar graph, 
$f$ a distinguished face whose vertices $S$ are called \emph{sites},
and $\omega : S \to \mathbb{R}_{\ge 0}$ be
a weight function on sites.
We augment $H$ with large-weight edges so that it is triangulated,
except for $f$.
For $s\in S, v\in V(H)$, define
\[
d^\omega(s,v) \bydef \omega(s) + \dist_H(s,v).
\]
The \emph{Voronoi diagram} $\VD[H,S,\omega]$ is a partition
of $V(H)$ into \emph{Voronoi cells}, where for $s\in S$,
\[
\Vor(s) \bydef \{v \in V(H) \mid 
    \forall s'\neq s.\; (d^\omega(s,v), -\omega(s)) < (d^\omega(s',v), -\omega(s'))\}
\]
In other words, $\Vor(s)$ is the set of vertices that are closer to $s$
than any other site, breaking ties in favor of larger $\omega$-values.  
We usually work with the dual representation
of a Voronoi diagram.  It is constructed as follows.

\begin{itemize}
\item Define $\hat{S}$ to be the set of sites with nonempty Voronoi cells, 
i.e., $\hat{S} = \{s\in S \mid s\in \Vor(s)\}$.  
The case $|\hat{S}|=1$ is trivial, so assume $|\hat{S}|\ge 2$.
\item Add large-weight dummy edges to $H$ so that $\hat{S}$ appear 
on the boundary of a single face $\hat{f}$, but is otherwise triangulated.  Observe that this has no effect on the Voronoi cells.
\item An edge is \emph{bichromatic} if its endpoints are in different cells.  
In particular, the edges bounding $\hat{f}$ are entirely bichromatic.
Define $\VD_0^*$ to be the (undirected) subgraph of $H^*$ consisting
of the duals of bichromatic edges.
\item Obtain $\VD_1^*$ from $\VD_0^*$ by repeatedly contracting edges incident to a degree-2 vertex, terminating when there are no degree-2 vertices, or when it becomes a self-loop.\footnote{The latter case only occurs when $|\hat{S}|=2$.}  
Observe that in $\VD_1^*$, $\hat{f}^*$ has degree $|\hat{S}|$ and all other vertices have degree 3; moreover, the faces of $\VD_1^*$ are in one-to-one correspondence with the Voronoi cells.  
\item We obtain $\VD^* = \VD^*[H,S,\omega]$ by splitting $\hat{f}^*$ into $|\hat{S}|$ degree-1 vertices, each taking an edge formerly incident to $\hat{f}^*$.  
It was proved in~\cite[Lemma 4.1]{GawrychowskiMWW18} that $\VD^*$ is a single tree.\footnote{If we skipped the step of forming the face $\hat{f}$ on the site-set $\hat{S}$ and triangulating the rest, 
$\VD^*$ would still be acyclic, but perhaps disconnected.  See~\cite{GawrychowskiMWW18,CharalampopoulosGMW19}.}
\item We store with $\VD^*$ supplementary information useful for point location. Each degree-3 vertex $g^*$ in $\VD^*$ corresponds a \emph{trichromatic}
face $g$ whose three vertices, say $y_0,y_1,y_2$,
belong to different Voronoi cells.  We store in $\VD^*$ the sites $s_0,s_1,s_2\in S$ such that $y_i \in \Vor(s_i)$.  We also store a \emph{centroid decomposition} of $\VD^*$.  
A centroid of a tree $T$ is a vertex $c$ that partitions the edge set of $T$ into disjoint
subtrees $T_1,\ldots,T_{\deg(c)}$, each containing at most $(|E(T)|+1)/2$ edges, and each containing $c$ as a leaf.
The decomposition is a tree rooted at $c$, whose subtrees
are the centroid decompositions of $T_1,\ldots,T_{\deg(c)}$.
The recursion bottoms out when $T$ consists of a single edge,
which is represented as a single (leaf) node in the centroid decomposition.\footnote{I.e., internal nodes correspond to vertices of $T$; leaf nodes correspond to edges of $T$.}
\end{itemize}

\begin{figure}
    \centering
    \begin{tabular}{c@{\hspace{1.5cm}}c}
    \multicolumn{2}{c}{\scalebox{.30}{\includegraphics{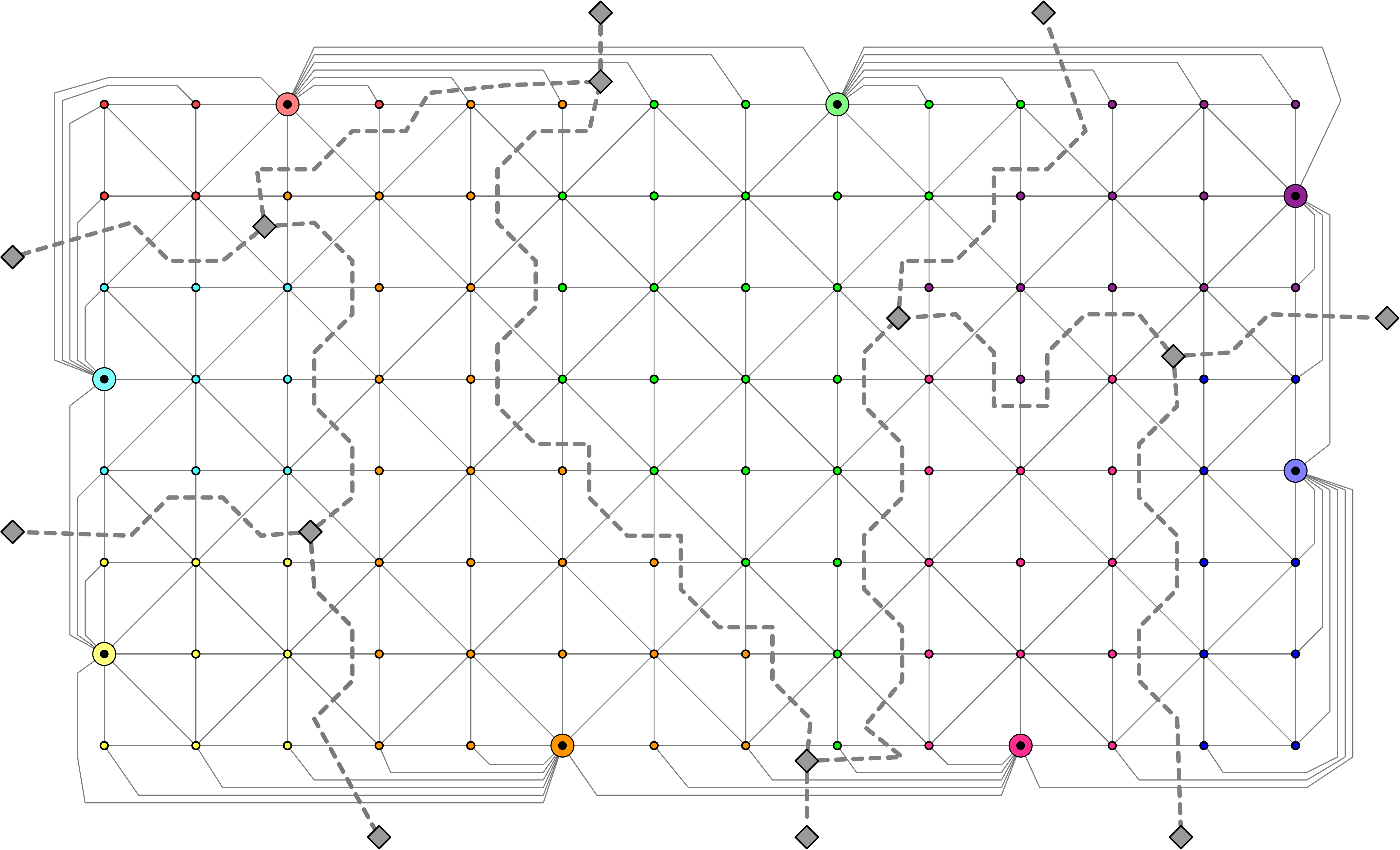}}}\\
    &\\
    \multicolumn{2}{c}{\bf (a)}\\
    \scalebox{.25}{\includegraphics{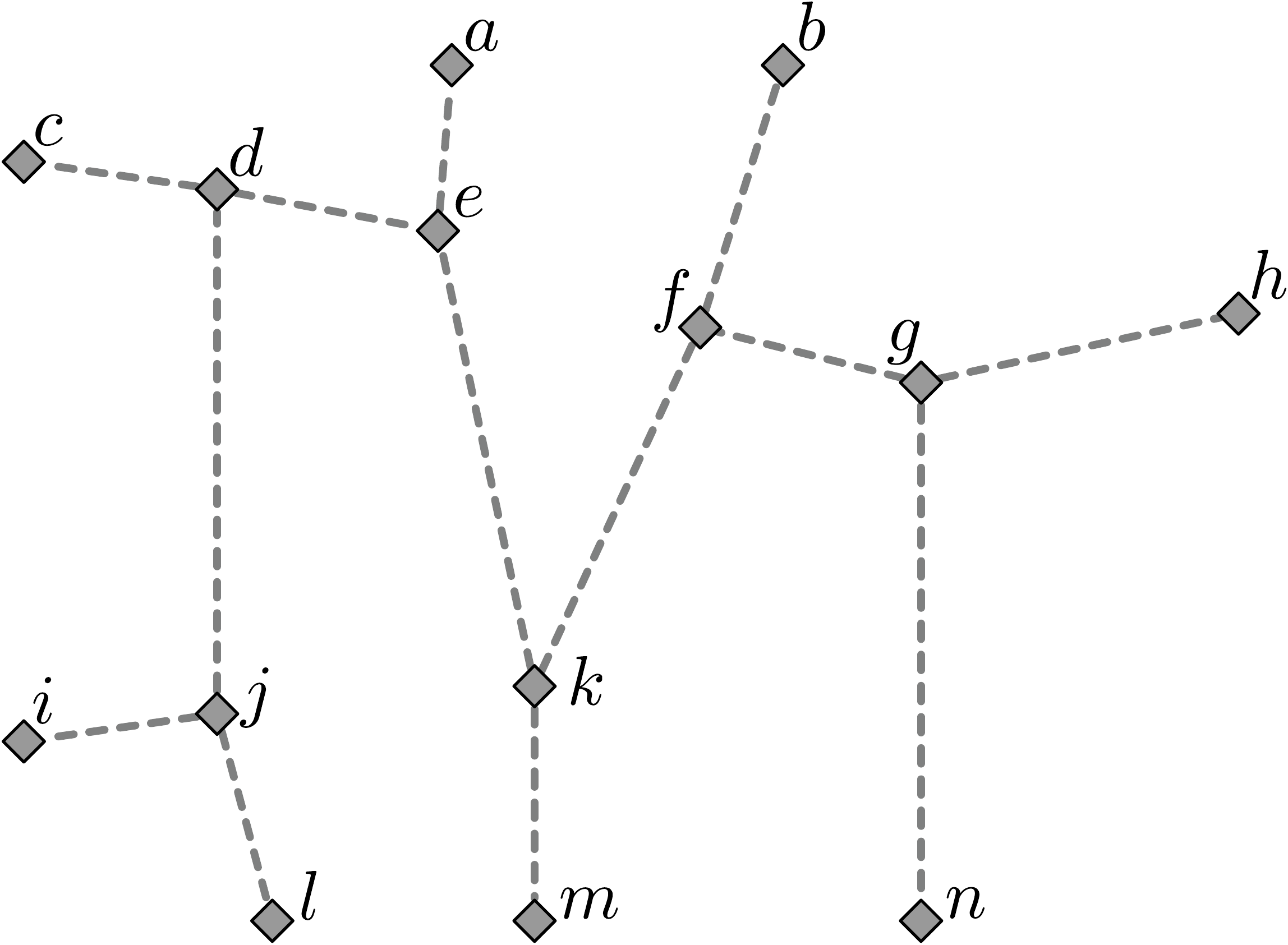}}
    &\scalebox{.3}{\includegraphics{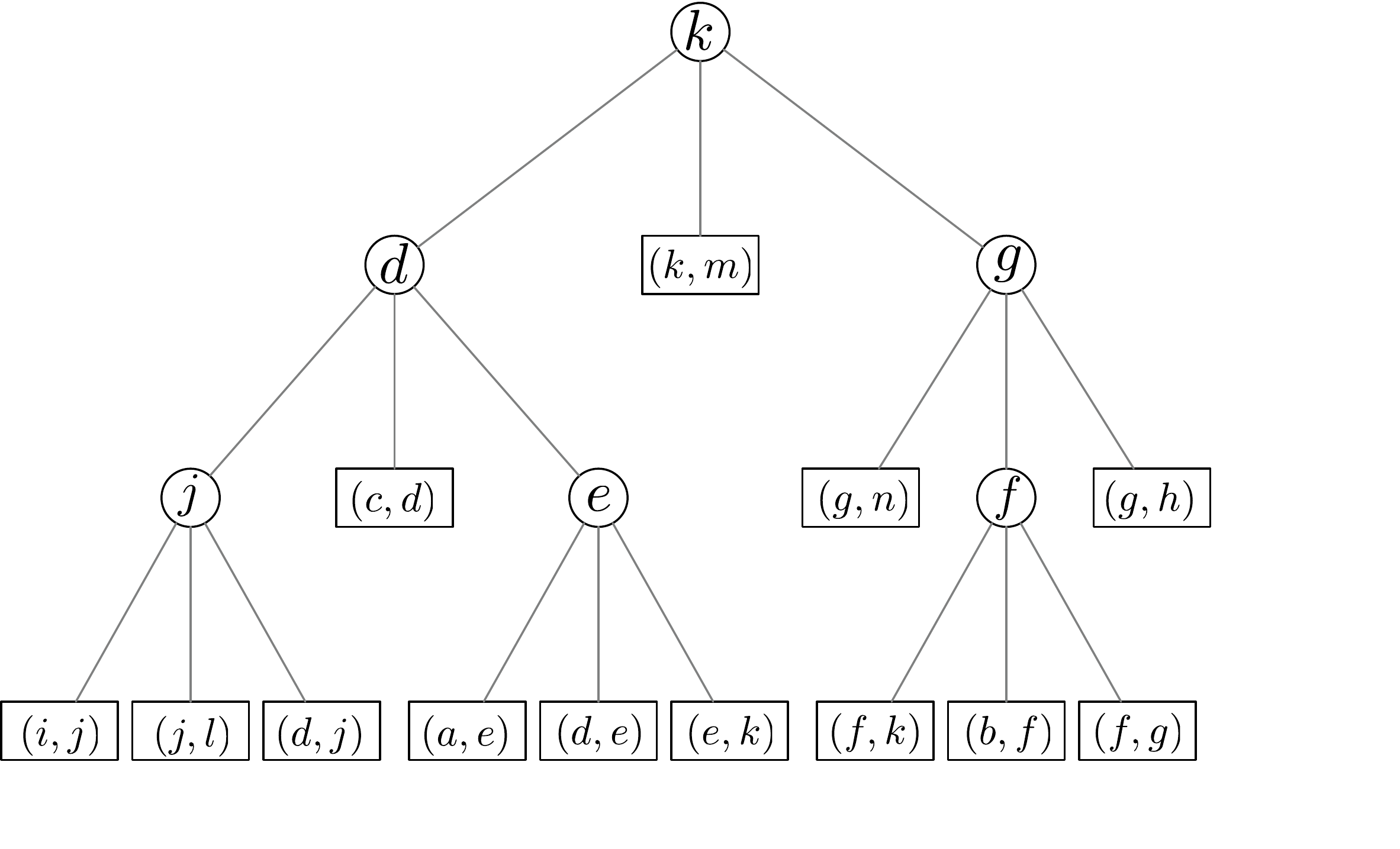}}\\
    &\\
    {\bf (b)} &{\bf (c)}
    \end{tabular}
    \caption{{\bf (a)} The original $H$ is a triangulated grid, 
    with $f$ being the exterior face. The boundary vertices $\hat{S}$
    with non-empty Voronoi cells are marked with colored halos. Edges
    are added so that $\hat{S}$ are on the exterior face $\hat{f}$.
    The vertices of $\VD^*$ are the duals of trichromatic
    faces, and those derived by splitting $\hat{f}^*$ into $|\hat{S}|$ vertices.  The edges of $\VD^*$
    correspond to paths of duals of bichromatic edges.
    {\bf (b)} The dual representation $\VD^*$.
    {\bf (c)} A centroid decomposition of $\VD^*$.}
    \label{fig:VD}
\end{figure}

The most important query on Voronoi diagrams is \emph{point location}.

\begin{lemma}\label{lem:PointLocate} 
(Gawrychowski et al.~\cite{GawrychowskiMWW18})
The $\PointLocate(\VD^*[H,S,\omega],v)$ function is given the dual
representation of a Voronoi diagram $\VD^*[H,S,\omega]$ and a vertex $v\in V(H)$
and reports the $s\in S$ for which $v\in \Vor(s)$.
Given access to an \MSSP{} data structure for $H$
with source-set $S$ and query time $\tau$,
we can answer $\PointLocate(\VD^*[H,S,\omega],v)$ 
queries in $O(\tau \cdot \log |H|)$ time.
\end{lemma}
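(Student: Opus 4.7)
Proof Proposal:

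The plan is to implement $\PointLocate(\VD^*, v)$ as a top-down descent through the centroid decomposition of $\VD^*$. Since this decomposition has depth $O(\log |H|)$, if each level is processed with $O(1)$ MSSP queries, the total query time is $O(\tau \log |H|)$ as claimed.

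At each level we are at a centroid $c$ of the current subtree of $\VD^*$, with the three sites $s_0, s_1, s_2$ of its trichromatic face $c^*$ stored alongside (plus the trichromatic-face vertices $y_0, y_1, y_2$ satisfying $y_j \in \Vor(s_j)$). Removing $c$ partitions the current subtree into three subtrees $T_0, T_1, T_2$, each containing a disjoint subset of the leaves of $\VD^*$ (i.e., sites). Our task is to decide which $T_j$ contains the target site $s^*$ with $v \in \Vor(s^*)$, then recurse on the corresponding child in the centroid decomposition. To do so, we first invoke three first-type MSSP queries of Lemma~\ref{lem:MSSP} to retrieve $d^\omega(s_0, v), d^\omega(s_1, v), d^\omega(s_2, v)$, handling the degenerate case in which $v$ coincides with one of $y_0, y_1, y_2$ immediately. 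In the general case we additionally issue second-type (LCA) MSSP queries with source $s_j$ and vertex pair $(v, y_{j'})$ for $j' \neq j$, and apply the clockwise-ordering test depicted in Figure~\ref{fig:LCA} to determine whether the shortest $s_j$-to-$v$ path branches off the shortest $s_j$-to-$y_{j'}$ path to the left or to the right. These $O(1)$ branching bits pin down the angular sector around $c^*$ containing $v$ in $H$, and hence identify the subtree $T_j$ containing $s^*$.

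The main obstacle is verifying that this constant number of queries per centroid actually identifies the target subtree. Each edge of $\VD^*$ incident to $c$ corresponds to a path of bichromatic duals in $H^*$ that separates two Voronoi cells meeting at $c^*$; the claim one must prove is that the shortest $s_j$-to-$v$ path crosses this separator in $H$ if and only if the LCA-type query registers a specific branching direction relative to the corresponding $s_j$-to-$y_{j'}$ path. Once this branching lemma is established, the combination of the three tests around $c$ determines which of the three ``sides'' of the trichromatic face $c^*$ the vertex $v$ lies on, which is exactly the subtree containing $s^*$. The base case occurs at a leaf of the centroid decomposition, which corresponds to a single edge of $\VD^*$ separating two sites $s_a, s_b$; at that point comparing $d^\omega(s_a, v)$ with $d^\omega(s_b, v)$ (plus tie-breaking in favor of larger $\omega$) resolves the query. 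Summing $O(\tau)$ work over $O(\log |H|)$ levels yields the promised bound.
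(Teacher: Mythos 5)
The paper does not prove this lemma; it imports it from Gawrychowski et al.\ and only sketches the algorithm at the end of Section~\ref{sect:sidequeries}. Your overall architecture matches that sketch exactly: descend the centroid decomposition (depth $O(\log|H|)$), spend $O(1)$ \MSSP{} queries of cost $O(\tau)$ per node, and resolve a leaf by comparing $d^\omega(s_a,v)$ with $d^\omega(s_b,v)$. However, your write-up has a genuine gap precisely where the lemma has content. You compute the three values $d^\omega(s_j,v)$ but never say how they are used beyond a degenerate case; the actual algorithm uses them to select $j^\star=\arg\min_j d^\omega(s_j,v)$, and the whole correctness argument rests on the claim that the shortest $s_{j^\star}$-to-$v$ path cannot cross any of the three monochromatic $s_j$-to-$y_j$ paths. (Sketch: if it crossed the $s_j$-to-$y_j$ path at $w$, then since $w\in\Vor(s_j)$ we would get $d^\omega(s_j,v)\le d^\omega(s_j,w)+\dist_H(w,v)\le d^\omega(s_{j^\star},w)+\dist_H(w,v)=d^\omega(s_{j^\star},v)$, contradicting the choice of $j^\star$ under the tie-breaking rule; and it cannot cross the $s_{j^\star}$-to-$y_{j^\star}$ path because shortest paths from a common source are unique and hence non-crossing.) This confines $v$ to the $s_{j^\star}$-to-$y_{j^\star}$ path or one of the two parts adjacent to it, after which a single LCA/branch-direction test decides which, and hence which centroid child to recurse on. You explicitly defer exactly this step (``the claim one must prove is\ldots'', ``Once this branching lemma is established\ldots''), so the proof is incomplete at its only nontrivial point.

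A second, concrete error: you propose LCA queries with source $s_j$ and pair $(v,y_{j'})$ for $j'\neq j$. The branching direction of the $s_j$-to-$v$ path relative to the $s_j$-to-$y_{j'}$ path carries no direct information about which side of the boundary path $s_{j'}$-to-$y_{j'}$ the vertex $v$ lies on, because $s_j$-to-$y_{j'}$ is not one of the three paths that partition $H$ into the six parts. The test must be made with matching indices: source $s_{j^\star}$ and pair $(v,y_{j^\star})$, so that the branch direction is measured against the monochromatic path that actually bounds the candidate parts.
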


The challenge in our data structure (as in~\cite{CharalampopoulosGMW19})
is to do point location when our space budget precludes storing
all the relevant \MSSP{} structures.  Nonetheless, we do make use
of $\PointLocate$ when the \MSSP{} data structures are available.

\section{The Distance Oracle}
\label{sect:DistanceOracle}
As in~\cite{CharalampopoulosGMW19}, 
the distance oracle is based on an $\vec{r}$-decomposition, 
$\vec{r}=(r_m,\ldots,r_1)$,
where $r_i = n^{i/m}$ and $m$ is a parameter.
Suppose we want to compute $\dist_G(u,v)$.  
Let $R_0 =\{u\}$ be the artificial level-0 region containing $u$
and $R_i \in \mathcal{R}_i$ be the level-$i$ ancestor of $R_0$.
(Throughout the paper, we will use ``$R_i$'' to refer specifically 
to the level-$i$ ancestor of $R_0=\{u\}$, 
as well as to a \emph{generic} region at level-$i$.  Surprisingly,
this will cause no confusion.)
Let $t$ be the smallest index for which $v\not\in R_{t}$ but $v\in R_{t+1}$.
Define $u_i$ to be the \emph{last} vertex on $\boundary R_i$ encountered on the shortest
path from $u$ to $v$.  The main task of the distance query algorithm is to compute
the sequence $(u=u_0,\ldots,u_t)$.  
Suppose that we know the identity of $u_{i}$ and $t > i$.
Finding $u_{i+1}$ now amounts to a point location problem in $\VD^*[R_{i+1}^{\out},\boundary R_{i+1},\omega]$,
where $\omega(s)$ is the distance from $u_i$ to $s\in \boundary R_{i+1}$.  
However, we cannot apply the fast
$\PointLocate$ routine because we cannot afford to store an \MSSP{} structure for 
every $(R_{i+1}^{\out},\boundary R_{i+1})$,
since $|R_{i+1}^{\out}|=\Omega(|G|)$.  Our point location routine narrows down the 
number of possibilities for $u_{i+1}$ to at most two candidates in $O(\kappa \log^{2+o(1)} n)$ time, 
then decides between them using two recursive distance queries, 
but starting at a higher level in the hierarchy.
There are about $2^m$ recursive calls in total, leading to a $O(2^m \kappa \log^{2+o(1)} n)$ 
query time.

The data structure is composed of several parts.
Parts (A) and (B) are explained below,
while parts (C)--(E) will be revealed in Section~\ref{sect:MoreDataStructures}.

\begin{enumerate}
    \item[(A)] {\bf (\MSSP{} Structures)}
    For each $i\in[0,m-1]$ and each region $R_i\in\mathcal{R}_i$ with parent $R_{i+1}\in\mathcal{R}_{i+1}$, 
    we store an \MSSP{} data structure (Lemma~\ref{lem:MSSP}) for 
    the graph $R_i^{\out}$, and source set $\boundary R_i$.
    However, the structure only answers queries 
    for $s\in\boundary R_i$ and $u,v \in R_i^{\out} \cap R_{i+1}$.
    Rather than represent the \emph{full} SSSP tree from each root on $s\in \boundary R_i$, 
    the \MSSP{} data structure only stores the tree induced by $R_i^{\out} \cap R_{i+1}$, i.e.,
    the parent of any vertex $v\in R_i^{\out} \cap R_{i+1}$ is its nearest ancestor $v'$
    in the SSSP tree such that $v' \in R_i^{\out} \cap R_{i+1}$.
    If $(v',v)$ is a ``shortcut'' edge corresponding to a path in $R_{i+1}^{\out}$, 
    it has weight $\dist_{R_i^{\out}}(v',v)$.
    
    We fix a $\kappa$ and let the update time in the dynamic tree data structure
    be $O(\kappa n^{1/\kappa})$ time.
    Thus, the space for this structure is
    $O((|R_i^{\out}\cap R_{i+1}| + |\boundary R_i|\cdot|\boundary R_{i+1}|)\cdot \kappa n^{1/\kappa}) 
    = O(r_{i+1}\cdot \kappa n^{1/\kappa})$
    since each edge in $R_i^{\out} \cap R_{i+1}$ is swapped into and out of
    the SSSP tree once~\cite{Klein05}, and the number of shortcut edges on 
    $\boundary R_{i+1}$ swapped into and out of the SSSP is at most 
    $|\boundary R_{i+1}|$ for each of the $|\boundary R_i|$ sources.  
    Over all $i$ and $\Theta(n/r_i)$ choices of $R_i$,
    the space is $O(m\kappa n^{1 + 1/m + 1/\kappa})$ since $r_{i+1}/r_i = n^{1/m}$.
    
    \item[(B)] {\bf (Voronoi Diagrams)}
    For each $i\in [0,m-1]$ and $R_{i}\in\mathcal{R}_{i}$ with parent $R_{i+1}\in \mathcal{R}_{i+1}$, 
    and each $q \in \boundary R_{i}$, define $\VDout(q,R_{i+1})$ to be
    $\VD^*[R_{i+1}^{\out},\boundary R_{i+1},\omega]$, with $\omega(s) = \dist_{G}(q,s)$.
    The space to store the dual diagram and its centroid decomposition is 
    $O(|\boundary R_{i+1}|)=O(\sqrt{r_{i+1}})$.
    Over all choices for $i,R_i,$ and $q$, the space is $O(mn^{1+1/(2m)})$
    since $\sqrt{r_{i+1}/r_i}=n^{1/(2m)}$.
\end{enumerate}

Due to our tie-breaking rule in the definition of $\Vor(\cdot)$, 
locating $u_{i+1}$ ($t\ge i+1$) is tantamount to 
performing a point location on a Voronoi diagram in part (B) of the data structure.
    
\begin{lemma}\label{lemma:lastsite}
Suppose that $q\in \boundary R_i$ and $v\not\in R_{i+1}$.
Consider the Voronoi diagram
associated with $\VDout(q,R_{i+1})$ with sites 
$\boundary R_{i+1}$ and additive weights 
defined by distances from $q$ in $G$.  
Then $v\in \Vor(s)$ if and only if $s$ is the \underline{last}
$\boundary R_{i+1}$-vertex on the shortest path from $q$ to $v$
in $G$, and $d^\omega(s,v) = \dist_G(q,v)$.
\end{lemma}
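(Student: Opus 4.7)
The plan is to work directly with the definition of $\Vor(\cdot)$ and exploit two structural facts: that $R_{i+1}^{\out}$ is a subgraph of $G$ containing $v$ together with $\partial R_{i+1}$, and that shortest paths in any subgraph of $G$ are unique.

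First I would fix $P$ to be the (unique) shortest $q$-$v$ path in $G$; because $q \in \partial R_i \subseteq R_{i+1}$ and $v \notin R_{i+1}$, the path $P$ must cross $\partial R_{i+1}$, so the last such crossing $s^* \in \partial R_{i+1}$ is well defined. Split $P$ at $s^*$: the prefix has length $\dist_G(q,s^*) = \omega(s^*)$, and by definition of ``last'' the suffix from $s^*$ to $v$ stays in $R_{i+1}^{\out}$, so its length equals $\dist_{R_{i+1}^{\out}}(s^*,v)$. This yields $d^\omega(s^*,v) = \dist_G(q,v)$.

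Next I would show that $s^*$ wins the Voronoi competition at $v$. For any other site $s' \in \partial R_{i+1}$,
\[
d^\omega(s',v) = \omega(s') + \dist_{R_{i+1}^{\out}}(s',v)
            \ge \dist_G(q,s') + \dist_G(s',v)
            \ge \dist_G(q,v) = d^\omega(s^*,v),
\]
using that $R_{i+1}^{\out}$ is a subgraph of $G$ and the triangle inequality. The subtle point is the tie case: if equality held throughout, then the concatenation of the shortest $q$-$s'$ path in $G$ with the shortest $s'$-$v$ path in $R_{i+1}^{\out}$ would form a shortest $q$-$v$ path in $G$, and by uniqueness it would coincide with $P$; in particular $s'$ would lie on $P$ strictly before $s^*$. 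Since edge weights are positive, this forces $\omega(s') = \dist_G(q,s') < \dist_G(q,s^*) = \omega(s^*)$, so the lexicographic tie-breaking rule $(d^\omega(s,v),-\omega(s))$ picks $s^*$ over $s'$. Hence $v \in \Vor(s^*)$.

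For the converse direction, the Voronoi cells partition $V(R_{i+1}^{\out})$ under the given tie-breaking rule, so each $v$ lies in exactly one cell; combined with the forward direction this forces any $s$ with $v \in \Vor(s)$ to equal $s^*$, and we already showed $d^\omega(s^*,v) = \dist_G(q,v)$. The only step that requires a bit of care is the tie-breaking argument above, since without the uniqueness of shortest paths in subgraphs one could imagine two distinct boundary vertices realizing $d^\omega(\cdot,v) = \dist_G(q,v)$; the positivity of weights plus uniqueness (both granted by the preliminaries) is precisely what makes the ``last'' vertex on $P$ the unambiguous winner.
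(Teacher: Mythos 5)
Your proof is correct and follows essentially the same route as the paper's: lower-bound $d^\omega(s,v)$ by $\dist_G(q,v)$ for every site, show the last boundary vertex $s^*$ on the (unique) shortest $q$-$v$ path attains equality, and use positivity of edge weights together with the larger-$\omega$ tie-breaking rule to rule out any other site. The paper states the tie-breaking step more tersely; your explicit use of shortest-path uniqueness to place a tying $s'$ strictly before $s^*$ on $P$ is just a fuller write-up of the same argument.
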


\begin{proof}
By definition, $d^\omega(s,v)$ is the length of the shortest path from $q$
to $v$ that passes through $s$ and whose $s$-$v$ suffix does not leave $R_{i+1}^{\out}$.
Thus, $d^\omega(s,v) \geq \dist_G(q,v)$ for every $s$,
and $d^\omega(s,v) = \dist_G(q,v)$ for some $s$.
Because of our assumption that all edges are strictly positive, 
and our tie-breaking rule for preferring larger $\omega$-values 
in the definition of $\Vor(\cdot)$, if $v\in \Vor(s)$ then
$s$ must be the \emph{last} $\boundary R_{i+1}$-vertex on the shortest
$q$-$v$ path.
\end{proof}

\subsection{The Query Algorithm}\label{sect:queryalg}

A distance query is given $u,v\in V(G)$. 
We begin by identifying the level-0 region 
$R_{0}=\{u\}\in \mathcal{R}_{0}$
and call the function $\Dist(u,v,R_0)$.
In general, the function $\Dist(u_i,v,R_i)$ takes as 
arguments a region $R_i$, a source vertex $u_i$ on the boundary
$\boundary R_i$, and a target vertex 
$v\not\in R_i$.  It returns a value $d$ such that
\begin{equation}\label{eqn:Spec}
\dist_{G}(u_i,v) \leq d \leq \dist_{R_{i}^{\out}}(u_i,v).
\end{equation}
Note that $R_{0}^{\out} = G$, so the initial call to this function correctly
computes $\dist_G(u,v)$.
When $v$ is ``close'' to $u_i$ ($v\in R_i^{\out}\cap R_{i+1}$)
it computes $\dist_{R_i^{\out}}(u_i,v)$ without recursion, using part (A) of the data structure.
When $v\in R_{i+1}^{\out}$ it performs point location using
the function $\CentroidSearch$, which culminates in up to 
two recursive calls to $\Dist$ on the level-$(i+1)$ region $R_{i+1}$.
Thus, the correctness of $\Dist$ hinges on 
whether $\CentroidSearch$ correctly
computes distances when $v\in R_{i+1}^{\out}$. 

\begin{algorithm}[H]
    \caption{$\Dist(u_{i},v,R_{i})$\label{alg:DistGlobal}}
    \begin{algorithmic}[1]
        \Require A region $R_{i}$, a source $u_{i}\in\boundary R_{i}$ and a destination $v\in R_{i}^{\out}$.
        \Ensure A value $d$ such that $\dist_G(u_i,v)\leq d\leq \dist_{R_i^{\out}}(u_i,v)$.
        \If{$v \in R_i^{\out}\cap R_{i+1}$} \Comment{I.e., $i=t$}
                \State \Return $d \gets \dist_{R_{i}^{\out}}(u_{i},v)$ \Comment{Part (A)}
        \EndIf  \Comment{$v\in R_{i+1}^{\out}$}
        \State $f^* \gets$ root of the centroid decomposition of $\VDout(u_{i},R_{i+1})$
        \State \Return $d \gets \CentroidSearch(\VDout(u_{i},R_{i+1}),v,f^*)$
    \end{algorithmic}
\end{algorithm}

The procedure $\CentroidSearch$ is given $u_i\in \boundary R_i$,
$v\in R_{i+1}^{\out}$, $\VDout = \VDout(u_i,R_{i+1})$ and a node $f^*$ on the centroid decomposition of $\VDout$.  
It ultimately computes $u_{i+1} \in \boundary R_{i+1}$
for which $v\in \Vor(u_{i+1})$ 
and returns
\begin{align*}
&\omega(u_{i+1}) + \Dist(u_{i+1},v,R_{i+1}) & \mbox{Line 5 or 9 of $\CentroidSearch$}\\
&\leq \dist_G(u_i,u_{i+1}) + \dist_{R_{i+1}^{\out}}(u_{i+1},v) & \mbox{Defn. of $\omega$; guarantee of $\Dist$ (Eqn.~(\ref{eqn:Spec}))}\\
&= \dist_{G}(u_i,v).                    & \mbox{Lemma~\ref{lemma:lastsite}}
\end{align*}
The algorithm is recursive, and bottoms out in one of two base cases (Line 5 or Line 9).
The first way the recursion can end is if we reach the bottom of the centroid decomposition.  
If $f^*$ is a leaf of the decomposition, it corresponds to an edge in $\VDout$ separating
the Voronoi cells of two sites, say $s_1$ and $s_2$.  At this
point we know that either $u_{i+1} = s_1$ or $u_{i+1} = s_2$, and determine
which case is true with two recursive calls to 
$\Dist(s_j,v,R_{i+1})$, $j\in\{1,2\}$ (Lines 2--5).
In general, $f^*$ is dual to a trichromatic face $f$ composed of 
three vertices $y_0,y_1,y_2$ in clockwise order, which are, respectively, 
in distinct Voronoi cells of $s_0,s_1,s_2$.
The three shortest $s_j$-$y_j$ paths and $f$ partition the vertices of $R_{i+1}^{\out}$ into six parts, 
namely the shortest $s_j$-$y_j$ paths themselves, and the interiors of the 
regions bounded by $\boundary R_{i+1}$, two of the $s_j$-$y_j$ paths and an edge of $f$.  
See Figure~\ref{fig:CentroidSearch}.
The $\Navigation$ function
returns a pair $(\flag,a^*)$ that identifies which part $v$ is in.
If $\flag=\operatorname{terminal}$ then $a^* \in \{s_0,s_1,s_2\}$ is interpreted as a site,
indicating that $v$ lies on the shortest path from $a^*$ to its $f$-vertex.
In this case we return $\omega(a^*) + \Dist(a^*,v,R_{i+1}) = \dist_G(u_i,v)$
with just one call to $\Dist$.
If $\flag=\operatorname{nonterminal}$ then $a^*$ is the correct child of $f^*$ in the centroid
decomposition.  In particular, $f^*$ is incident to three edges $e_0^*, e_1^*, e_2^*$ dual to 
$\{y_0,y_2\},\{y_1,y_0\},\{y_2,y_1\}$.
The children of $f^*$ in the centroid decomposition are $f_0^*,f_1^*,f_2^*$, with $f_j^*$ ancestral
to $e_j^*$.  We have $a^* = f_j^*$ if $v$ lies to the right of the chord 
$(s_j,\ldots,y_j,y_{j-1},\ldots,s_{j-1})$ in $R_{i+1}^{\out}$.
For example, in Figure~\ref{fig:CentroidSearch}, $v$ lies to the right of the 
$(s_0,\ldots,y_0,y_2,\ldots,s_2)$ path.
In this case we continue the search recursively from $a^* = f_0^*$.

\begin{figure}
    \centering
    \scalebox{.35}{\includegraphics{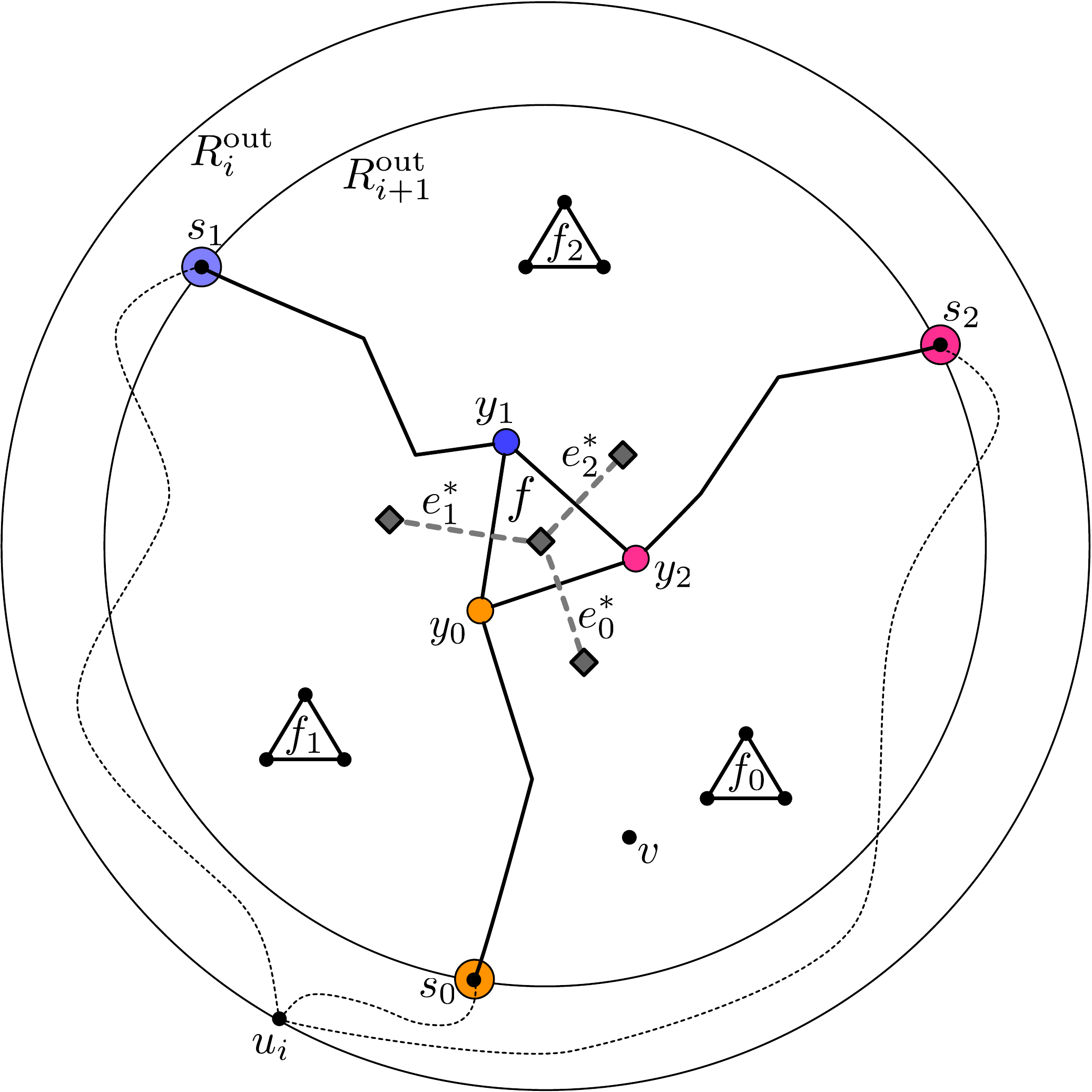}}
    \caption{Here $f^*$ is a degree-3 vertex in $\VDout(u_i,R_{i+1})$, 
            corresponding to a trichromatic face $f$ on vertices $y_0,y_1,y_2$,
            which are in the Voronoi cells of $s_0,s_1,s_2$ on the boundary
            $\boundary R_{i+1}^{\out}$.  The shortest $s_j$-$y_j$ paths
            partition $V(R_{i+1}^{\out})$ into six parts: the three shortest paths
            and the three regions bounded by them and $f$.  Let $e_0^*,e_1^*,e_2^*$ be the edges in $\VDout$ dual to $\{y_0,y_2\},\{y_1,y_0\},\{y_2,y_1\}$.  
            In the centroid
            decomposition $e_0^*,e_1^*,e_2^*$ are in separate subtrees of $f^*$.
            Let $f_j^*$ be the child of $f^*$ ancestral to $e_j^*$, which is either $e_j^*$ itself, or a trichromatic face to the right of the ``chord''
            $(s_j,\ldots, y_j,y_{j-1}, \ldots, s_{j-1})$.  $\CentroidSearch$
            locates the site whose Voronoi cell contains $v$ via recursion. 
            It calls $\Navigation$, a function that finds which 
            of the 6 parts contains $v$. 
            If $v$ lies on an $s_j$-$y_j$ path the $\CentroidSearch$ recursion terminates; otherwise it recurses 
            on the correct child $f_j^*$ of $f^*$.}
    \label{fig:CentroidSearch}
\end{figure}

\begin{algorithm}[H]
    \caption{$\CentroidSearch(\VDout(u_{i},R_{i+1}),v,f^{*})$}
    \label{alg:SpecificCentroidSearch}
    \begin{algorithmic}[1]
        \Require The dual representation $\VDout = \VDout(u_{i},R_{i+1})$ 
        of a Voronoi diagram with additive weights $\omega(s)=\dist_{G}(u_i,s)$, 
        a vertex $v\in R_{i+1}^{\out}$, 
        and a node $f^{*}$ in the centroid decomposition of $\VDout$.
        \Ensure The distance $\dist_G(u_i,v)$.
        \If {$f^{*}$ is a leaf in the centroid decomposition 
            (an edge in $\VDout$)}
            \State $s_{1},s_{2} \gets$ sites whose Voronoi cells are bounded by $f^{*}$ \Comment{Candidates for $u_{i+1}$}
            \State $d_{1}\gets \omega(s_{1}) + \Dist(s_{1},v,R_{i+1})$
            \State $d_{2}\gets \omega(s_{2}) + \Dist(s_{2},v,R_{i+1})$
            \State \Return $\min(d_{1},d_{2})$
        \EndIf
        \State $(\flag, a^*)\gets \Navigation(\VDout(u_{i},R_{i+1}), v, f^{*})$
        \If {$\flag = \operatorname{terminal}$}   \Comment{$a^*$ is interpreted as a site}
            \State \Return $\omega(a^*) + \Dist(a^*, v, R_{i+1})$   \Comment{$a^* = u_{i+1}$}
        \Else \; (i.e., $\flag = \operatorname{nonterminal}$)  \Comment{$a^*$ is the centroid child of $f^*$}
            \State \Return $\CentroidSearch(\VDout(u_{i},R_{i+1}),v,a^*)$
        \EndIf
    \end{algorithmic}
\end{algorithm}

\begin{lemma}
$\CentroidSearch$ correctly computes $\dist_{G}(u_i,v)$ 
\end{lemma}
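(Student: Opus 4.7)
The plan is to induct on the size of the centroid-decomposition subtree rooted at $f^{*}$, strengthening the statement to an invariant: whenever $\CentroidSearch$ is invoked on $(\VDout,v,f^{*})$ and the face of $\VDout$ corresponding to $\Vor(u_{i+1})$ lies in the subtree rooted at $f^{*}$ (where $u_{i+1}$ is the unique site with $v\in\Vor(u_{i+1})$ supplied by Lemma~\ref{lemma:lastsite}), the procedure returns $\dist_{G}(u_{i},v)$. The outermost call satisfies this invariant because $f^{*}$ is the centroid root and therefore dominates the whole of $\VDout$.

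Before the induction, I would record two uniform bounds on the quantity $\omega(s)+\Dist(s,v,R_{i+1})$ for an \emph{arbitrary} candidate $s\in\boundary R_{i+1}$. The $\Dist$ specification~(\ref{eqn:Spec}) gives $\dist_{G}(s,v)\le\Dist(s,v,R_{i+1})\le\dist_{R_{i+1}^{\out}}(s,v)$, and since $\omega(s)=\dist_{G}(u_{i},s)$ we obtain
\[
\dist_{G}(u_{i},v)\;\le\;\omega(s)+\Dist(s,v,R_{i+1})\;\le\;d^{\omega}(s,v).
\]
Specializing to $s=u_{i+1}$ and invoking Lemma~\ref{lemma:lastsite} (which gives $d^{\omega}(u_{i+1},v)=\dist_{G}(u_{i},v)$) squeezes the two bounds into the equality $\omega(u_{i+1})+\Dist(u_{i+1},v,R_{i+1})=\dist_{G}(u_{i},v)$. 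This is the single identity that makes the algorithm work.

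In the base case $f^{*}$ is a leaf of the centroid decomposition, namely an edge of $\VDout$ separating $\Vor(s_{1})$ and $\Vor(s_{2})$; the invariant forces $u_{i+1}\in\{s_{1},s_{2}\}$, so the uniform lower bound applies to both $d_{1},d_{2}$, the equality applies to whichever equals $u_{i+1}$, and Line~5 returns $\min(d_{1},d_{2})=\dist_{G}(u_{i},v)$. In the inductive step, I would appeal to the specification of $\Navigation$: when it returns $(\operatorname{terminal},a^{*})$, the vertex $v$ lies on the shortest $a^{*}$-$y_{j}$ path, so in particular $v\in\Vor(a^{*})$, forcing $u_{i+1}=a^{*}$ and reducing to the equality via a single call to $\Dist$; when it returns $(\operatorname{nonterminal},f_{j}^{*})$, the recursive call inherits the invariant and finishes by the inductive hypothesis, provided we can show that $\Vor(u_{i+1})$ corresponds to a face in the subtree of $f_{j}^{*}$.

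The main obstacle is exactly this last claim — the geometric correspondence between the ``pocket'' region of $R_{i+1}^{\out}$ containing $v$ and the centroid child $f_{j}^{*}$ of $f^{*}$. To establish it I would use planarity together with the uniqueness and non-crossing of shortest paths in $R_{i+1}^{\out}$: the shortest $u_{i+1}$-$v$ path cannot cross either of the shortest paths $s_{j}$-$y_{j}$ or $s_{j-1}$-$y_{j-1}$ bounding the pocket that contains $v$, so $u_{i+1}$ must lie on the arc of $\boundary R_{i+1}$ strictly between $s_{j-1}$ and $s_{j}$ on the side of this pocket; and the dual edge $e_{j}^{*}\in\VDout$ is precisely the edge whose removal disconnects, from the rest of $\VDout$, the subtree of faces corresponding to cells of sites on that arc. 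Since $f_{j}^{*}$ is by construction the centroid child ancestral to $e_{j}^{*}$, this places the face of $\Vor(u_{i+1})$ in the subtree of $f_{j}^{*}$, restoring the invariant and allowing the induction to close.
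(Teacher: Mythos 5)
Your proposal is correct and follows essentially the same route as the paper's proof: the same squeeze $\dist_G(u_i,v)\le\omega(s)+\Dist(s,v,R_{i+1})\le d^\omega(s,v)$ from Eqn.~(\ref{eqn:Spec}) and Lemma~\ref{lemma:lastsite}, the same descent invariant, and the same non-crossing/monochromatic-path argument placing $u_{i+1}$'s cell on the correct side of $C^\star$. The only adjustment needed is that the invariant should be phrased as ``some \emph{edge} of $\VDout$ bounding $\Vor(u_{i+1})$ lies in the subtree rooted at $f^*$'' rather than ``the face lies in the subtree,'' since the centroid decomposition partitions edges and a cell's boundary edges may be split among several subtrees --- and this edge-based version is in fact what your base case and your argument via $e_j^*$ actually establish and use.
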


\begin{proof}
Define $f,y_j,s_j,e_j^*,f_j^*$ as usual, and let $u_{i+1}$ be such that $v\in \Vor(u_{i+1})$.
The loop invariant is that in the subtree of the 
centroid decomposition rooted at $f^*$, 
there is \emph{some} leaf edge on the boundary of the cell $\Vor(u_{i+1})$.  
This is clearly true in the intial recursive call, 
when $f^*$ is the root of the centroid decomposition.  
Suppose that $\Navigation$ tells us that $v$ lies to the right of the oriented chord
$C^\star = (s_j,\ldots,y_j,y_{j-1},\ldots,s_{j-1})$.  
Observe that since the $s_j$-$y_j$ and $s_{j-1}$-$y_{j-1}$ 
shortest paths are monochromatic, 
all edges of the centroid decomposition correspond to paths in 
$G^*$ that lie strictly to the left or right of $C^\star$, with the exception of $e_j^*$.
Moreover, since $v\in \Vor(u_{i+1})$, $\Vor(u_{i+1})$ must be bounded by \emph{some} edge that is either 
$e_j^*$ or one entirely to the right of $C^\star$, from which it follows that $f_j^* = a^*$
is ancestral to at least one edge bounding $\Vor(u_{i+1})$.
When $f^*$ is a single edge on the boundary of $\Vor(s_1),\Vor(s_2)$
the loop invariant guarantees that either $u_{i+1}=s_1$ or $u_{i+1}=s_2$;
suppose that $u_{i+1}=s_1$.
It follows from the specification of $\Dist$ (Eqn.~(\ref{eqn:Spec})) and Lemma~\ref{lemma:lastsite} that
\[
d_1 = \omega(s_1) + \Dist(s_1,v,R_{i+1})
\leq \dist_G(u_i,s_1) + \dist_{R_{i+1}^{\out}}(s_1,v)
= \dist_G(u_i,v).
\]
Furthermore,
\[
d_2 = \omega(s_2) + \Dist(s_2,v,R_{i+1})
\geq \dist_G(u_i,s_2) + \dist_{G}(s_2,v) \geq \dist_G(u_i,v),
\]
so in this base case $\CentroidSearch$ correctly returns $d_1=\dist_G(u_i,v)$.
If $\Navigation$ ever reports that $v$ is on an $s_j$-$y_j$ path, 
then by definition $v\in \Vor(s_j)$. 
By the specification of $\Dist$ (Eqn.~(\ref{eqn:Spec})) 
and Lemma~\ref{lemma:lastsite} we have
\[
\omega(s_j) + \Dist(s_j,v,R_{i+1})
\leq \dist_G(u_i,s_j) + \dist_{R_{i+1}^{\out}}(s_j,v)
= \dist_G(u_i,v)
\]
and the base case on Lines 8--9 also works correctly.
\end{proof}

\medskip

Thus, the main challenge is to design an efficient $\Navigation$ function, 
i.e., to solve the restricted point location problem in $R_{i+1}^{\out}$ 
depicted in Figure~\ref{fig:CentroidSearch}.  Whereas
Charalampopoulos et al. \cite{CharalampopoulosGMW19} solve this problem
using several \emph{more} recursive calls to $\Dist$, 
we give a new method to do this point location directly, in 
$O(\kappa \log^{1+o(1)} n)$ time per call to $\Navigation$.

\section{The Navigation Oracle}\label{sect:NavigationOracle}

The input to $\Navigation$ is the same as $\CentroidSearch$, except that
$f^*$ is guaranteed to correspond to a trichromatic face $f$.
Define $y_j,s_j,e_j,f_j$, $j\in\{0,1,2\}$ as in the discussion of $\CentroidSearch$.
The $\Navigation$ function determines the location of $v$ relative
to $f$ and the shortest $s_j$-$y_j$ paths.  It delegates nearly all the actual computation
to two functions: $\SitePathIndicator$, which returns a boolean indicating whether $v$ is on 
the shortest $s_j$-$y_j$ path, and $\ChordIndicator$, 
which indicates whether $v$ lies strictly
to the right of the oriented chord $(s_j,\ldots,y_j,y_{j-1},\dots,s_{j-1})$.  
If so, we return the centroid child $f_j^*$ of $f^*$ in this region.  
Three calls each to $\SitePathIndicator$
and $\ChordIndicator$ suffice to determine the location of $v$.

\begin{algorithm}[H]
    \caption{$\Navigation(\VDout(u_{i},R_{i+1}),v,f^{*})$}
    \label{alg:Navigation}
    \begin{algorithmic}[1]
    \Require The dual representation $\VDout(u_{i},R_{i+1})$ of a Voronoi diagram,
    a vertex $v\in R_{i+1}^{\out}$, and a centroid $f^{*}$ in the centroid decomposition.
    The face $f$ is on $y_0,y_1,y_2$, which are in the Voronoi cells of $s_0,s_1,s_2$,
    and $f_j^*$ is the child of $f^*$ containing the edge dual to $\{y_j,y_{j-1}\}$.
    \Ensure $(\operatorname{terminal}, s_{j})$ if $v$ is on the shortest $s_j$-$y_j$
    path, or $(\operatorname{nonterminal}, f_j^*)$ where $f_j^*$ is the 
    child of $f^*$ ancestral to an edge bounding $v$'s Voronoi cell.
    \State $s_{0},s_{1},s_{2}\gets$ sites corresponding to $f^{*}$
    \For {$j = 0,1,2$}
        \If {$\SitePathIndicator(\VDout(u_{i},R_{i+1}),v,f^{*},j)$ returns \textbf{True}}
            \State \Return $(\operatorname{terminal}, s_{j})$
        \EndIf
    \EndFor
    \For {$j = 0,1,2$}
        \If {$\ChordIndicator(\VDout(u_{i},R_{i+1}),v,f^{*},j)$ returns \textbf{True}}
            \State \Return $(\operatorname{nonterminal}, f_{j}^{*})$
        \EndIf
    \EndFor
    \end{algorithmic}
\end{algorithm}

In Section~\ref{sect:chords} we formally introduce the notion of \emph{chords}
used informally above, as well as some related concepts like \emph{laminar} sets of chords
and \emph{maximal} chords. 
In Section~\ref{sect:MoreDataStructures} we introduce parts (C)-(E) of the 
data structure used to support $\Navigation$.
The functions $\SitePathIndicator$ and $\ChordIndicator$
are presented in Sections~\ref{sect:SitePathIndicator} and \ref{sect:ChordIndicator}.

\subsection{Chords and Pieces}\label{sect:chords}

We begin by defining the key concepts of our point location method: 
\emph{chords}, \emph{laminar chord sets}, \emph{pieces},
and the \emph{occludes} relation.  
\begin{definition}\label{def:chord}
{\bf (Chords)}
Fix an $R$ in the $\vec{r}$-decomposition and two vertices $c_0,c_1\in \boundary R$.
An oriented simple path $\chord{c_0c_1}$ is a \emph{chord} of $R^{\out}$
if it is contained in $R^{\out}$ and is internally vertex-disjoint
from $\boundary R$.
When the orientation is irrelevant we write it as $\overline{c_0c_1}$.
\end{definition}

\begin{definition}\label{def:laminar}
{\bf (Laminar Chord Sets)}
A set of chords $\mathcal{C}$ for $R^{\out}$ is \emph{laminar} (non-crossing)
if for any two such chords $C=\chord{c_0c_1},C'=\chord{c_2c_3}$,
if there exists a $v\in (C\cap C')-\boundary R$ then
the subpaths from $c_0$ to $v$ and from $c_2$ to $v$ are identical;
in particular $c_0=c_2$.
\end{definition}

The orientation of chords does not always coincide with a natural orientation
of paths defined by the algorithm.  For example, in Figure~\ref{fig:CentroidSearch},
the oriented chord $\chord{s_0s_2} = (s_0,\ldots,y_0,y_2,\ldots,s_2)$ is
composed of three parts: a shortest $s_0$-$y_0$ path (whose natural orientation
coincides with that of $\chord{s_0s_2}$),
the edge $\{y_0,y_2\}$ (which has no natural orientation in this context),
and the shortest $s_2$-$y_2$ path (whose natural orientation is the
reverse of its orientation in $\chord{s_0s_2}$).
The orientation serves two purposes.  In Definition~\ref{def:chord}
we can speak unambiguously about the parts of $R^{\out}$ to the \emph{right}
and \emph{left} of $\chord{s_0s_2}$.  In Definition~\ref{def:laminar} the
role of the orientation is to ensure that the partition of 
$R^{\out}$ into \emph{pieces} induced by $\mathcal{C}$ can be represented by a \emph{tree},
as we show in Lemma~\ref{lem:piecetree}.

\begin{definition}\label{def:pieces}
{\bf (Pieces)}
A laminar chord set $\mathcal{C}$ for $R^{\out}$
partitions the faces of $R^{\out}$ into pieces, excluding the face on $\boundary R$. 
Two faces $f,g$ are in the same piece iff
$f^*$ and $g^*$ are connected by a path in $(R^{\out})^*$ that avoids
to duals of edges in $\mathcal{C}$ and edges along the boundary cycle on $\boundary R$.
A piece is regarded as the subgraph induced by its faces, 
i.e., it includes their constituent vertices and edges.
Two pieces $P_1,P_2$ are \emph{adjacent} if there is an edge
$e$ on the boundary of $P_1$ and $P_2$ and $e$ is in a \underline{\emph{unique}}
chord of $\mathcal{C}$.  See Figure~\ref{fig:PieceTree}.
\end{definition}

\begin{figure}
    \centering
    \scalebox{.4}{\includegraphics{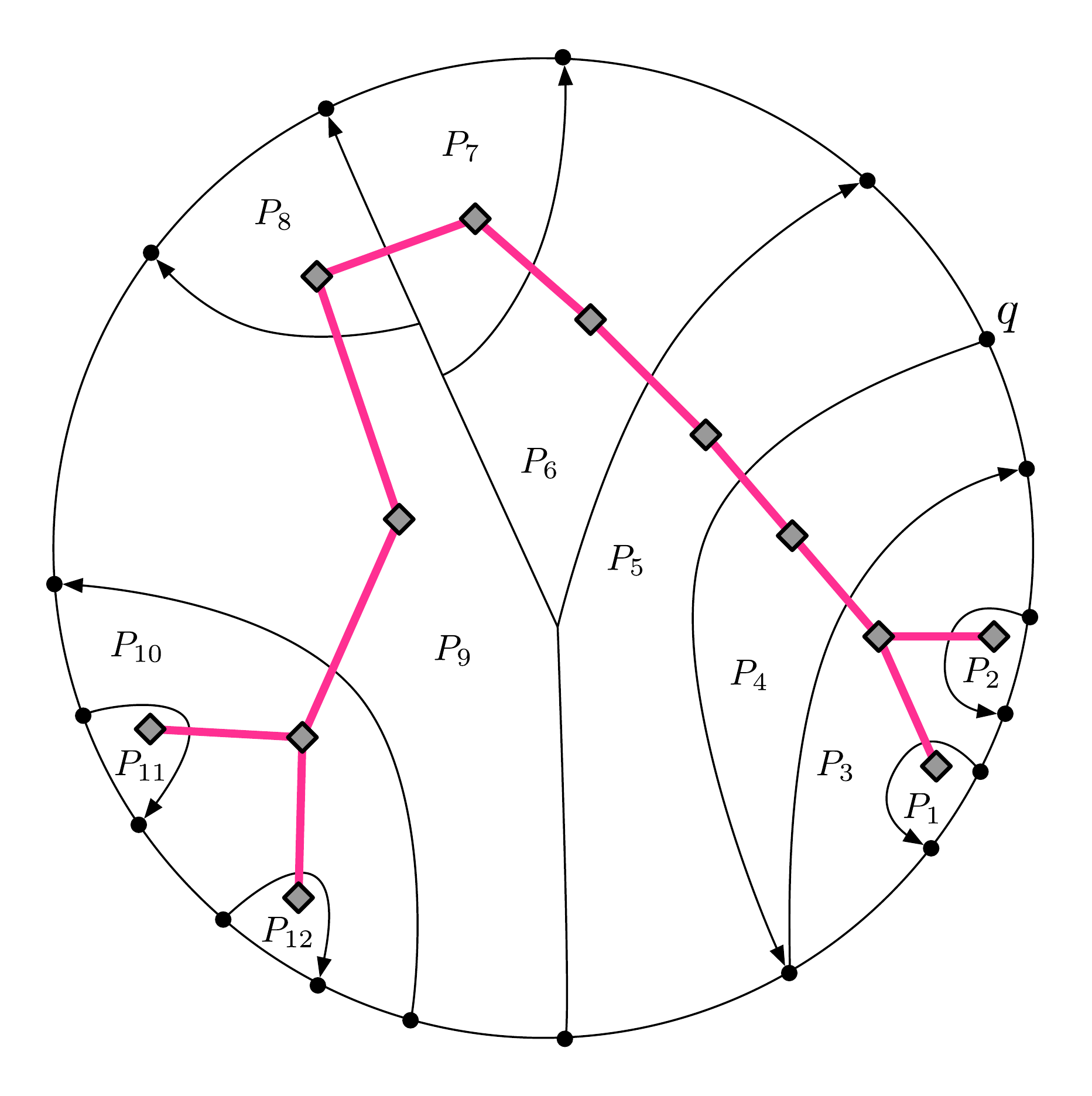}}
    \caption{A laminar set of chords partition $R^{\out}$ into pieces.
    Observe that the chords separating pieces $P_5$--$P_9$ overlap
    in certain prefixes.  The piece tree is indicated by diamond
    verties and pink edges.  Note two pieces (e.g. $P_5$ and $P_9$) 
    may share a boundary, but not be \emph{adjacent}.}
    \label{fig:PieceTree}
\end{figure}

\begin{lemma}\label{lem:piecetree}
Suppose $\mathcal{C}$ is a laminar chord set for $R^{\out}$,
$\mathcal{P}=\mathcal{P}(\mathcal{C})$ is the corresponding piece set and 
$\mathcal{E}$ are pairs of adjacent pieces.  Then $\mathcal{T}=(\mathcal{P},\mathcal{E})$
is a tree, called the \emph{piece tree} induced by $\mathcal{C}$.
\end{lemma}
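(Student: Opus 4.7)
The plan is to prove $\mathcal{T}=(\mathcal{P},\mathcal{E})$ is a tree by induction on $|\mathcal{C}|$. The base case $|\mathcal{C}|=0$ yields a single piece (all of $R^{\out}$ minus the face on $\boundary R$) and no adjacencies, which is trivially a tree. For the inductive step, I would identify a chord whose removal reduces the piece count and the adjacency count by exactly one, then apply the hypothesis.

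The first auxiliary claim I would establish is that every chord $C=\chord{c_0c_1}\in\mathcal{C}$ possesses a \emph{nonempty unique suffix}, i.e., a terminal subpath whose edges lie in no other chord. The proof uses laminarity: if an interior edge $(v_i,v_{i+1})$ of $C$ is shared with another chord $C'=\chord{c_0'c_1'}$, then $C'$ contains $v_{i+1}$ as an internal vertex, so Definition~\ref{def:laminar} forces $c_0'=c_0$ and makes the prefixes of $C$ and $C'$ up to $v_{i+1}$ identical. Hence if $(v_i,v_{i+1})$ is shared then every earlier edge $(v_j,v_{j+1})$, $j<i$, is shared too; the shared edges of $C$ form a contiguous prefix and the unique ones form a contiguous suffix. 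If this suffix were empty, every edge of $C$ would belong to some other chord; propagating the laminarity argument all the way to $c_1$ would produce a chord $C''\neq C$ sharing the same start $c_0$, the same sequence of edges, and the same endpoint $c_1$, forcing $C''=C$, a contradiction.

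Given this claim, I would remove such a $C$ from $\mathcal{C}$ to obtain $\mathcal{C}'$ (still laminar) with induced piece tree $\mathcal{T}'=(\mathcal{P}',\mathcal{E}')$. The unique suffix of $C$ is a simple path that, prior to removal, separates two pieces $P_1,P_2\in\mathcal{P}$. After removal, $P_1$ and $P_2$ merge into a single piece $P\in\mathcal{P}'$ and every other piece is preserved, so $|\mathcal{P}'|=|\mathcal{P}|-1$; the single adjacency $(P_1,P_2)$ vanishes, while adjacencies of $P_1$ or $P_2$ with other pieces are inherited by $P$, giving $|\mathcal{E}'|=|\mathcal{E}|-1$. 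By induction $\mathcal{T}'$ is a tree; reversing the removal splits $P$ back into $P_1,P_2$ and reinstates the edge $(P_1,P_2)$, producing a graph with one more vertex and one more edge than a tree, which remains a tree. Connectivity of $\mathcal{T}$ follows at once: $\mathcal{T}'$ is connected and $P_1,P_2$ are joined by the restored edge.

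The main subtlety I anticipate lies in justifying the inheritance step: reinserting $C$ must add exactly one edge and not more. The concern is that some piece $P_3\in\mathcal{P}'$ might be adjacent in $\mathcal{T}$ to both $P_1$ and $P_2$, so that two distinct adjacencies $(P_1,P_3),(P_2,P_3)$ collapse to a single $(P,P_3)$ in $\mathcal{T}'$, throwing off the count. A careful analysis of the planar embedding of $\mathcal{C}^{\cup}\cup\boundary R$ near the unique suffix of $C$, combined with laminarity, rules out this configuration: such a $P_3$ would require two unique-chord paths meeting the unique suffix of $C$ on both sides, forcing a crossing that violates Definition~\ref{def:laminar}. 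Alternatively, the identity $|\mathcal{E}|=|\mathcal{P}|-1$ can be obtained directly by applying Euler's formula to the planar subgraph $U=\mathcal{C}^{\cup}\cup\boundary R$, identifying pieces with the interior faces of $U$ and reading off each adjacency from those edges of $U$ that lie in a unique chord and separate two interior faces.
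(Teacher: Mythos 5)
Your induction removes an \emph{arbitrary} chord and tries to track the effect on $\mathcal{P}$ and $\mathcal{E}$; the paper instead peels off an \emph{extremal} chord $C$ chosen so that no edge of any other chord lies strictly to one side of it, which forces the piece on that side to be a leaf of $\mathcal{T}$ and makes the removal bookkeeping trivial. Your route can probably be made to work, but as written it has a genuine gap at its central step, the claim $|\mathcal{E}'|=|\mathcal{E}|-1$. Adjacency is defined via edges lying in a \emph{unique} chord of the current chord set. When you delete $C$, every edge of $C$'s shared prefix that was covered by exactly $C$ and one other chord $D$ becomes uniquely covered by $D$ in $\mathcal{C}'$, and therefore starts witnessing an adjacency in $\mathcal{E}'$ between the two pieces flanking it. These pieces need not have been adjacent in $\mathcal{T}$ (compare Figure~\ref{fig:PieceTree}, where pieces sharing only multiply-covered boundary edges are explicitly non-adjacent), so $\mathcal{E}'$ is not simply ``the image of $\mathcal{E}$ minus $(P_1,P_2)$.'' In the examples I can construct, each such newly created pair happens to coincide with the image of an existing adjacency precisely because the two pieces merged by deleting $C$ are the right ones, but that is exactly what needs proof, and your write-up never mentions the phenomenon. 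The one subtlety you do flag — two old adjacencies $(P_1,P_3),(P_2,P_3)$ collapsing to one — is deferred to ``a careful analysis of the planar embedding'' that is not supplied (for the record, it has a one-line resolution: any edge on the boundary of a piece left of $C$ and a piece right of $C$ must lie on $C$ itself, hence on at least two chords, hence cannot witness an adjacency). Finally, your fallback via Euler's formula does not close the argument either: distinct uniquely-covered edges of the planar subgraph $\mathcal{C}^{\cup}\cup\boundary R$ can witness the \emph{same} pair in $\mathcal{E}$ (all edges of one chord's unique suffix do, and so can edges on different chords), so $|\mathcal{E}|$ is not an edge count read off from the embedding, and even the identity $|\mathcal{E}|=|\mathcal{P}|-1$ would still leave connectivity or acyclicity to be shown.

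Two smaller points. Your auxiliary claim that every chord has a nonempty unique suffix is correct and is also implicitly needed (and not spelled out) in the paper's proof, so that part is a useful addition. And your connectivity argument for $\mathcal{T}$ quietly relies on the same unproven inheritance statement: you need every edge of $\mathcal{E}'$ incident to the merged piece $P$ to lift to an edge of $\mathcal{E}$ incident to $P_1$ or $P_2$, which again fails for adjacencies witnessed only by formerly shared edges of $C$. I recommend either adopting the paper's extremal choice of $C$ (no foreign chord edges strictly to one side), after which the piece on that side is a degree-one vertex of $\mathcal{T}$ and the induction is clean, or else proving the inheritance lemma for newly-unique edges explicitly.
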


\begin{proof}
The claim is clearly true when $\mathcal{C}$ contains zero or one chords,
so we will try to reduce the general case to this case via a peeling argument.
We will find a piece $P$ with degree 1 in $\mathcal{T}$, 
remove it and the chord bounding it,
and conclude by induction that the truncated instance is a tree.
Reattaching $P$ implies $\mathcal{T}$ is a tree.

Let $C=\chord{c_0c_1}\in\mathcal{C}$ be a chord such that no edge
of any other chord appears strictly to one side of $C$, say 
to the right of $C$.  Let $P$ be the piece to the right of $C$.
(In Figure~\ref{fig:PieceTree} the chords bounding $P_1,P_2,P_{11},P_{12}$ would be eligible to be $C$.)
Let $C=(c_0=v_0,v_1,v_2,\ldots,v_k=c_1)$ and $v_{j^\star}$
be such that the edges of the suffix $(v_{j^\star},\ldots,v_k)$ are on no other chord,
meaning the vertices $\{v_{j^\star+1},\ldots,v_{k-1}\}$ are on no other chord.
Let $g_j$ be the face to the left of $(v_j,v_{j+1})$.  It follows
that there is a path from $g_{j^\star}^*$ to $g_{k-1}^*$ in $(R^{\out})^*$ 
that avoids the duals of all edges in $\mathcal{C}$ and along $\boundary R$.
All pieces adjacent to $P$ contain some face among 
$\{g_{j^\star},\ldots,g_{k-1}\}$, but these are in a single piece, 
hence $P$ corresponds to a degree-1 vertex in $\mathcal{T}$.
Let $P$ be bounded by $C$ and an interval $B$ of the boundary cycle on $\boundary R$.
Obtain the ``new'' $R^{\out}$ by cutting along $C$ and removing $P$,
the new $\boundary R$ by substituting $C$ for $B$, and the new
chord-set $\mathcal{C}$ by removing $C$ and trimming any chords
that shared a non-empty prefix with $C$.  By induction the resulting
piece-adjacency graph is a tree; reattaching $P$ as a degree-1 vertex
shows $\mathcal{T}$ is a tree.
\end{proof}

\begin{definition}\label{def:occludes} 
{\bf (Occludes Relation)}
Fix $R^{\out}$, chord $C$, and two faces $f,g$, neither of which 
is the hole defined by $\boundary R$.  
If $f$ and $g$ are on opposite sides of $C$, 
we say that from vantage $f$, $C$ \emph{occludes} $g$.
Let $\mathcal{C}$ be a set of chords.
We say $C\in \mathcal{C}$ is \emph{maximal} in $\mathcal{C}$ 
with respect to a vantage $f$
if there is no $C' \in \mathcal{C}$ such that $C'$ occludes a \emph{strict}
superset of the faces that $C$ occludes.  (Note that the
orientation of chords is irrelevant to the occludes relation.)
\end{definition}

It follows from Definition~\ref{def:occludes} that if $\mathcal{C}$ 
is laminar, the set of maximal chords with respect to $f$ are exactly 
those chords intersecting the boundary of $f$'s piece in $\mathcal{P}(\mathcal{C})$.

We can also speak unambiguously about a chord $C$ occluding a \emph{vertex} 
or \emph{edge} not on $C$,
from a certain vantage.  Specifically, we can say that from some vantage, $C$
occludes an \emph{interval} of the boundary cycle on $\boundary R$,
say according to a clockwise traversal around the hole on $\boundary R$ in $R^{\out}$.\footnote{This is one 
place where we use the assumption that all boundary holes are simple cycles.}
This will be used in the $\ChordIndicator$ procedure of Section~\ref{sect:ChordIndictor-procedure}.

\subsection{Data Structures for Navigation}\label{sect:MoreDataStructures}

Parts (C)--(E) of the data structure are used to implement the $\SitePathIndicator$ and $\ChordIndicator$ functions.
\begin{enumerate}
    \item[(C)] \textbf{(More Voronoi Diagrams)} 
    For each $i$, each $R_{i}\in \mathcal{R}_{i}$, 
    and each $q\in\boundary R_{i}$, 
    we store $\VDout(q,R_i)$, which is 
    $\VD^*[R_i^{\out},\boundary R_i,\omega]$, 
    where $\omega(s)=\dist_G(q,s)$.
    The total space for these diagrams is 
    $\tilde{O}(n)$ and dominated by part (B).

    \item[(D)] \textbf{(Chord Trees; Piece Trees)} 
    For each $i$, each $R_i\in\mathcal{R}_i$, and source $q\in \boundary R_i$, we store the SSSP tree 
    from $q$ induced by $\boundary R_i$
    as a \emph{chord tree} $T_{q}^{R_i}$.  
    In particular, the parent of $x\in\boundary R_i$ in $T_{q}^{R_i}$ is the nearest ancestor
    in the SSSP tree from $q$ that lies on $\boundary R_i$.  
    Every edge of $T_{q}^{R_i}$ is designated a \emph{chord} if the corresponding path is contained
    in $R_{i}^{\out}$ but not in $R_i$, or a \emph{non-chord} otherwise.
    Define $\mathcal{C}_{q}^{R_i}$ to be the set of all chords in $T_{q}^{R_i}$, 
    oriented away from $q$; 
    this is clearly a laminar set since shortest paths are unique and all 
    prefixes of shortest paths are shortest paths. 
    Define $\mathcal{P}_{q}^{R_i}$ to be the corresponding partition of $R_{i}^{\out}$ into pieces,
    and $\mathcal{T}_{q}^{R_i}$ the corresponding piece tree.
    Define $T_{q}^{R_i}[x]$ to be the path from $q$ to $x$ in $T_{q}^{R_i}$, 
    $\mathcal{C}_{q}^{R_i}[x]$ the corresponding chord-set,
    and $\mathcal{P}_{q}^{R_i}[x]$ the corresponding piece-set.
    
    The data structure answers the following queries
    \begin{description}
    \item[$\MaximalChord(R_i,q,x,P,P')$:] We are given $R_i$, $q,x\in\boundary R_i$, a piece $P\in \mathcal{P}_{q}^{R_i}$, 
    and possibly another piece $P'\in \mathcal{P}_{q}^{R_i}$
    (which may be \textbf{Null}).
    If $P'$ is \textbf{Null}, return any maximal chord in $\mathcal{C}_{q}^{R_i}[x]$ from vantage $P$.
    If $P'$ is not \textbf{Null}, return the maximal chord $\mathcal{C}_{q}^{R_i}[x]$ (if any) 
    that occludes $P'$ from vantage $P$.
    \item[$\AdjacentPiece(R_i,q,e)$:] Here $e$ is an edge on the boundary cycle on $\boundary R_i$.
    Return the unique piece in $\mathcal{P}_{q}^{R_i}$ with $e$ on its boundary.\footnote{This is another place where we use the assumption that holes are bounded by simple cycles.}
    \end{description}
    
    \item[(E)] \textbf{(Site Tables; Side Tables)}
    Fix an $i$ and a diagram $\VDout = \VDout(u',R_i)$ from part (B) or (C).
    Let $f^*$ be any node in the centroid decomposition of $\VDout$,
    with $y_j,s_j$, $j\in\{0,1,2\}$ defined as usual,
    and let $R_{i'}\in \mathcal{R}_{i'}$ be the ancestor of $R_i$, $i'\ge i$.  
    Fix $j\in\{0,1,2\}$ and $i'>i$.  Define $q$ and $x$ to be the \emph{first}
    and \emph{last} vertices on the shortest $s_j$-$y_j$ path that lie on $\boundary R_{i'}$.
    We store $(q,x)$ and $\dist_{G}(u',x)$.
    
    We also store whether $R_{i'}^{\out}$ lies to the left or right 
    of the site-centroid-site chord $\chord{s_jy_jy_{j-1}s_{j-1}}$ 
    in $R_i^{\out}$, or {\bf Null} if the relationship cannot be determined, i.e., 
    if the chord crosses $\boundary R_{i'}$.
    These tables increase the space of (B) and (C) by a negligible $O(m)$ factor.
\end{enumerate}

Part (D) of the data structure is the only one that is non-trivial to store compactly.  
Our strategy is as follows.
We fix $R_i$ and $q\in \boundary R_i$ and build a dynamic data structure for these 
operations relative to a dynamic subset 
$\hat{\mathcal{C}} \subseteq \mathcal{C}_{q}^{R_i}$ subject to the insertion and deletion of chords 
in $O(\log|\boundary R_i|)$ time.
By inserting/deleting $O(|\boundary R_i|)$ chords in the correct order, we can arrange that
$\hat{\mathcal{C}} = \mathcal{C}_{q}^{R_i}[x]$ at some point in time, for every $x\in \boundary R_i$.  
Using the generic persistence technique for RAM data structures (see~\cite{Dietz89})
we can answer $\MaximalChord$ queries relative to $\mathcal{C}_{q}^{R_i}[x]$ in 
$O(\log|\boundary R_i|\log\log |\boundary R_i|)$ time.\footnote{Our data structure works
in the pointer machine model, but it has unbounded in-degrees so the theorem
of Driscoll et al.~\cite{DriscollSST89,SarnakT86} cannot be applied directly.
It is probably possible to improve the bound to $O(\log|\boundary R_i|)$ but this
is not a bottleneck in our algorithm.}

\begin{lemma}\label{lem:partD}
    Part (D) of the data structure can be stored in $O(mn\log n)$ total space and
    answer $\MaximalChord$ queries in $O(\log n\log\log n)$ time and $\AdjacentPiece$ queries 
    in $O(1)$ time.
\end{lemma}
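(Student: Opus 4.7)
The $\AdjacentPiece$ oracle is immediate: for each pair $(R_i,q)$ I would precompute once the piece-label of every boundary-cycle edge in the full partition $\mathcal{P}_q^{R_i}$. This occupies $O(|\boundary R_i|)$ space per pair, $O(mn)$ overall, with $O(1)$-time lookups, and contributes nothing to the leading space term. The remaining (and nontrivial) task is to realize $\MaximalChord$ in $O(\log n\log\log n)$ time within the stated space budget.

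My plan is to build, for each pair $(R_i,q)$, a \emph{dynamic} data structure on a mutable laminar subset $\hat{\mathcal{C}}\subseteq \mathcal{C}_q^{R_i}$ that supports chord insertion and deletion in $O(\log n)$ worst-case time, and answers $\MaximalChord$ queries on the current $\hat{\mathcal{C}}$ in $O(\log n)$ time. By Lemma~\ref{lem:piecetree} the pieces induced by $\hat{\mathcal{C}}$ form a tree $\hat{\mathcal{T}}$, which I store in an Euler-tour tree representation. An insertion of a chord $C$ splits a unique piece $P$ into its left and right sub-pieces, corresponding to the creation of one new node and one new edge in $\hat{\mathcal{T}}$; a deletion contracts an edge of $\hat{\mathcal{T}}$. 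A $\MaximalChord(R_i,q,x,P,P')$ query with $P'=\textbf{Null}$ returns any chord incident to $P$ in $\hat{\mathcal{T}}$, while with $P'\neq\textbf{Null}$ it reduces to identifying the first edge on the unique path from $P$ to $P'$ in $\hat{\mathcal{T}}$, a standard Euler-tour-tree query.

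To cover every $\mathcal{C}_q^{R_i}[x]$ that queries may refer to, I would perform a single DFS of the chord tree $T_q^{R_i}$ rooted at $q$: descending through a chord-edge inserts the chord into $\hat{\mathcal{C}}$, backtracking deletes it, and non-chord edges are traversed without modifying $\hat{\mathcal{C}}$. When the DFS first reaches $x\in\boundary R_i$, the active subset equals $\mathcal{C}_q^{R_i}[x]$. Making the dynamic structure persistent via Dietz's persistent arrays~\cite{Dietz89} incurs an $O(\log\log n)$ slowdown on every read and lets us address the version corresponding to $x$. Each update performs $O(\log n)$ writes and therefore contributes $O(\log n)$ to persistent storage; with $O(|\boundary R_i|)$ update pairs per source, $|\boundary R_i|=O(\sqrt{r_i})$ sources per region, $O(n/r_i)$ regions per level, and $m$ levels, the total persistent space is $O(mn\log n)$ and the query time becomes $O(\log n\log\log n)$, as claimed.

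The main obstacle will be realizing the ``split a piece'' step correctly. When inserting $C=\chord{c_0c_1}$ whose initial segment may coincide with prefixes of existing chords, I must (i) locate the unique piece $P\in\hat{\mathcal{T}}$ that the post-prefix portion of $C$ enters, (ii) divide the chords and boundary-cycle intervals presently incident to $P$ between the new left- and right-sub-pieces according to which side of $C$ they lie on, and (iii) splice the new node and edge into the Euler-tour representation. Laminarity of $\mathcal{C}_q^{R_i}$ guarantees that the post-prefix portion of $C$ lies entirely inside a single piece, so the re-partition is dictated solely by the clockwise order along the boundary cycle on $\boundary R_i$ of the endpoints of $P$'s neighboring chords and of the boundary intervals on its perimeter. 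Maintaining this cyclic order in an auxiliary balanced BST over $\boundary R_i$ alongside the Euler-tour tree supplies the needed interval split and node/edge insertion in $O(\log n)$ time per update, matching the bound above.
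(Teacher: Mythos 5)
Your handling of $\AdjacentPiece$ and your overall architecture (a single DFS over the chord tree $T_q^{R_i}$ that inserts/deletes chords so that the live set equals $\mathcal{C}_q^{R_i}[x]$ when $x$ is first reached, followed by Dietz-style persistence with an $O(\log\log n)$ read penalty, and the space accounting $O(\sqrt{r_i})$ updates per source $\times$ $O(\log n)$ writes per update) coincide with the paper's. Where you diverge is in the core data structure: you maintain the piece tree of the \emph{current} chord subset as a genuinely dynamic tree, with node splits on insertion and edge contractions on deletion, realized by Euler-tour trees plus an auxiliary BST over the boundary cycle. The paper sidesteps all of this with one observation you miss: the piece tree $\mathcal{T}_q^{R_i}$ of the \emph{full} chord set is static, the partial sets $\mathcal{C}_q^{R_i}[x]$ correspond to marked subsets of its edges, and $\MaximalChord$ reduces to ``last marked edge on the path from $P'$ to $P$'' in a static tree — answered in $O(\log n)$ by a heavy-path decomposition with a balanced BST of marked nodes per path. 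That reduction makes every update a single mark/unmark and every query a walk over $O(\log n)$ heavy paths; nothing about the tree topology ever changes.

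Your dynamic route leaves two concrete gaps. First, the query hands you a piece $P$ of the \emph{full} partition $\mathcal{P}_q^{R_i}$, but your tree's nodes are coarse pieces of the current partition; you never say how to locate, in the version corresponding to $x$, the coarse node containing $P$. Maintaining explicit fine-to-coarse pointers is too expensive (a single split can change the coarse piece of $\Theta(|\boundary R_i|)$ fine pieces), so you would need an indirect scheme (e.g., locating $P$ via a boundary-cycle edge and a persistent predecessor structure over chord endpoints), which you have not supplied. Second, the split step itself is only sketched: redistributing the neighbors of $P$ between $P_L$ and $P_R$ in $O(\log n)$ time requires that they form two contiguous arcs of the Euler tour around $P$'s node \emph{and} that your auxiliary BST can be split at the right positions even when the inserted chord shares a prefix with existing chords (so that its ``entry point'' into $P$ is an interior vertex of another chord rather than a boundary vertex); also, ``first edge on the path from $P$ to $P'$'' is not an off-the-shelf Euler-tour-tree query and needs the same auxiliary ordering to implement. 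None of these is obviously fatal, but each is exactly the kind of difficulty the paper's static-tree-plus-marking formulation was designed to avoid, and as written your proof does not close them.
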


\begin{proof}
We first address $\MaximalChord$.
Let $\mathcal{T} = \mathcal{T}_{q}^{R_i}$ be the piece tree.  
The edges of $\mathcal{T}$ are in 1-1 correspondence with the chords of $\mathcal{C}=\mathcal{C}_{q}^{R_i}$
and if $P,P'\in \mathcal{P}=\mathcal{P}_{q}^{R_i}$ are two pieces, the path from $P$ to $P'$ in $\mathcal{T}$
crosses exactly those chords that occlude $P'$ from vantage $P$ (and vice versa).
We will argue that to implement $\MaximalChord$ it suffices to design an efficient 
dynamic data structure for the following
problem; initially all edges are \emph{unmarked}.
\begin{description}
\item[Mark$(e)$] Mark an edge $e\in E(\mathcal{T})$.
\item[Unmark$(e)$] Unmark $e$.  
\item[LastMarked$(P',P)$] Return the \emph{last} marked edge on the path from $P'$ to $P$, or \textbf{Null} if all are unmarked.
\end{description}
By doing a depth-first traversal of the chord tree $T_{q}^{R_i}$, marking/unmarking chords
as they are encountered, the set $\{e\in E(\mathcal{T}) \mid e \mbox{ is marked}\}$ will
be equal to $\mathcal{C}_{q}^{R_i}[x]$ precisely 
when $x$ is first encountered in DFS.
To answer a $\MaximalChord(R_i,q,x,P,P')$ query we interact 
with the state of the data structure
when the marked set is $\mathcal{C}_{q}^{R_i}[x]$.
If $P'$ is not null we return {\bf LastMarked}$(P',P)$.  
Otherwise we pick an arbitrary (marked)
chord $C\in \mathcal{C}_{q}^{R_i}[x]$, get the adjacent pieces 
$P_1',P_2'$ on either side of $C$, 
then query {\bf LastMarked}$(P_1',P)$ and {\bf LastMarked}$(P_2',P)$.  
At least one of these queries
will return a chord
and that chord is maximal w.r.t.~vantage $P$.
(Note that $C$ must separate $P$ from either
$P_1'$ or $P_2'$.)

We now argue how all three operations can be implemented in $O(\log n)$ worst case time.
The ideas are standard, so we do not go into great detail.  Root $\mathcal{T}$ arbitrarily
and subdivide every edge; the resulting tree is also called $\mathcal{T}$.
Every node in $\mathcal{T}$ knows its depth.
The \emph{vertices} corresponding to subdivided edges may carry marks.
In order to answer {\bf LastMarked} queries it suffices to be able to find least common 
ancestors, and, given nodes $P_d,P_a$, where $P_a$ is an ancestor of $P_d$, to find
the \emph{first \underline{and} last} marked node on the path from $P_d$ to $P_a$.
Decompose the vertices of $\mathcal{T}$ using a heavy path decomposition.
Each vertex points to the path that it is in.
Each path in the decomposition is a data structure that maintains an ordered
set of its marked nodes, 
a pointer to the most ancestral marked node in the path,
and a pointer to the parent, in $\mathcal{T}$, of the root of the path.  It is straightforward to 
find LCAs in this structure in $O(\log n)$ time.\footnote{Of course, $O(1)$ time is 
also possible~\cite{HarelT84,BenderF00} but this is not the bottleneck in the algorithm.}
Suppose we want the first and last marked node on the path from $P_d$ to $P_a$, an ancestor of $P_d$.
Let $Z_0,Z_1,...,Z_{\ell}$, $\ell=O(\log n)$ be the heavy paths ancestral to $P_d$
such that $P_d\in Z_0, P_a\in Z_{\ell}$.  Let $v_j\in Z_j$ be the nearest ancestor to $P_d$.
We can find $j^\star$ such that $Z_{j^\star}$ contains the first marked node on the $P_d$-$P_a$ path 
in $\ell=O(\log n)$ time
by comparing $v_j$ against the most ancestral marked node in $Z_j$, $j=0,1,\ldots$.
We can then find the first marked node by 
finding the marked predecessor of $v_{j^\star}$ in $Z_{j^\star}$,
in $O(\log n)$ time.  Finding the last marked node on the path from $P_d$ to $P_a$ is similar.
{\bf Mark} and {\bf Unmark} are implemented by keeping a balanced binary search tree 
over the marked nodes in each heavy path.

For fixed $R_i,q\in \boundary R_i$ there are $O(|\boundary R_i|)$ \textbf{Mark} and \textbf{Unmark}
operations, each taking $O(\log n)$ time.  Over all choices of $i, R_i$, and $q$ the total update
time is $O(mn\log n)$.  After applying generic persistence for RAM data structures (see~\cite{Dietz89})
the space becomes $O(mn\log n)$ and the query time for \textbf{LastMarked} becomes $O(\log n\log\log n)$. 

Turning to $\AdjacentPiece(R_i,q,e)$, there are $|\boundary R_i|^2$ choices of $(q,e)$.
Hence all answers can be precomputed in a lookup table in $O(mn)$ space.

\ignore{
We now turn to $\AdjacentPiece$.  Let $h$ be the hole on $\boundary R_i$, which we regard 
as cyclically ordered according to a clockwise walk around $h$.  Assume that we can 
perform the usual dictionary operations (predecessor, successor, etc.) on the endpoints 
of $\mathcal{C} = \mathcal{C}_{q}^{R_i}[x]$ along $\boundary R_i$.  
The query $e=(w_j,w_{j+1})$ joins consecutive vertices on $\boundary R_i$ w.r.t. clockwise order.  
Let $w_{j_0}$ be the predecessor of $w_j$ that is incident to a chord in $\mathcal{C}$; if it is incident
to multiple chords let $C_0=\overline{w_{j_0}w_{j_1}}$ be the chord with $w_{j_1}$ 
being earliest, in a clockwise traversal starting from $w_{j_0}$.  Symmetrically, 
let $C_1 = \overline{w_{j_2}w_{j_3}}$ be the chord obtained by finding the successor $w_{j_2}$ 
of $w_{j+1}$ incident to chords in $\mathcal{C}$, breaking ties by letting $w_{j_3}$
be earliest in a counter-clockwise traversal starting from $w_{j_2}$.

There are three outcomes, either $C_0,C_1$ do not exist because $\mathcal{C}$ is empty,
or $C_0,C_1$ exist and are distinct, or $C_0=C_1$.  In the first case there is only one
piece, which is returned.  In the other two cases the piece $P$ bounded by $e$ is also bounded by 
$C_0$ and $C_1$.  If $C_0\neq C_1$
(see Figure~\ref{fig:AdjacentPiece}(a)) then we look for their common neighbor $P$ and return it.
If $C_0=C_1$ (Figure~\ref{fig:AdjacentPiece}(b)), then we return the piece $P$
to the right of $\chord{w_{j_0}w_{j_1}}$ and return it.  The dictionary on 
the endpoints of $\mathcal{C}_q^{R_i}[x]$ can be built using the general persistence
technique of~\cite{SarnakT86,DriscollSST89}, in $\tilde{O}(n)$ total space and $O(\log n)$
query time for predecessor/successor queries.
\begin{figure}
    \centering
    \begin{tabular}{cc}
    \scalebox{.5}{\includegraphics{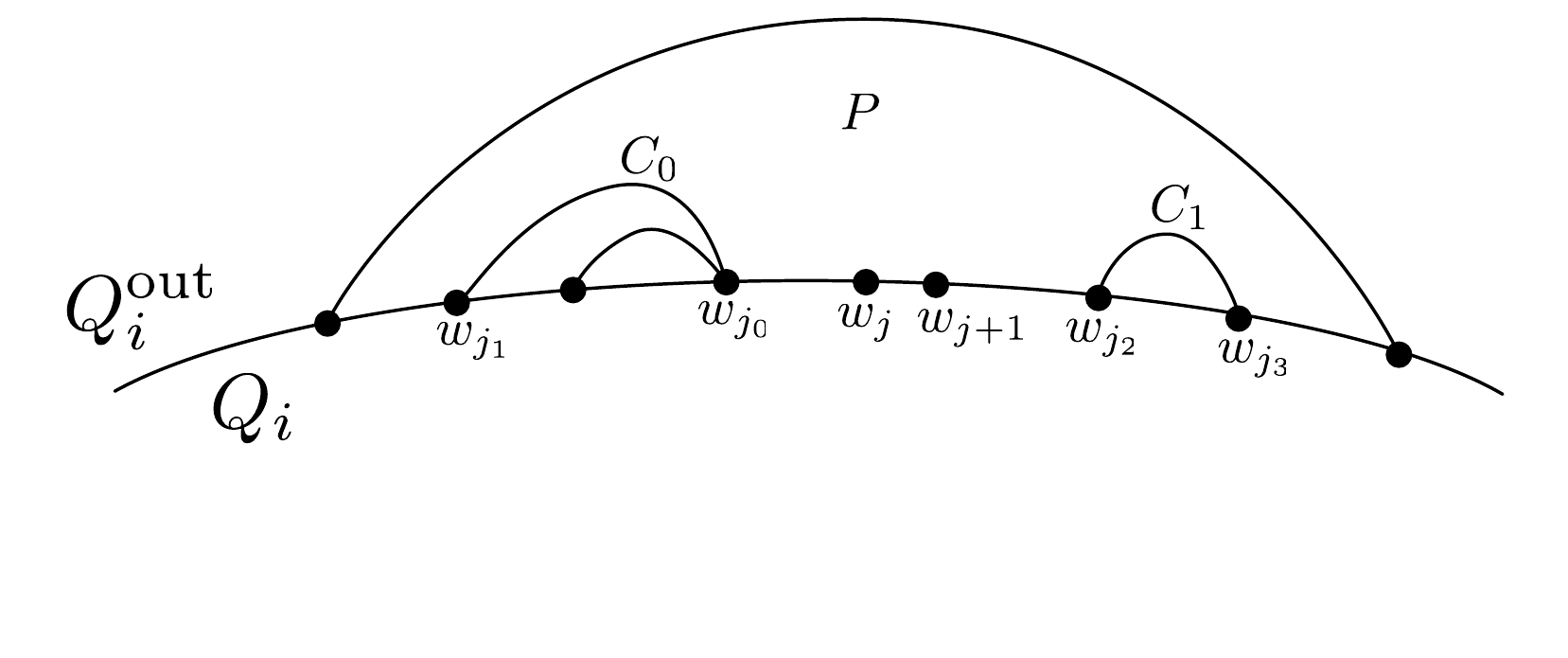}}
    &
    \scalebox{.5}{\includegraphics{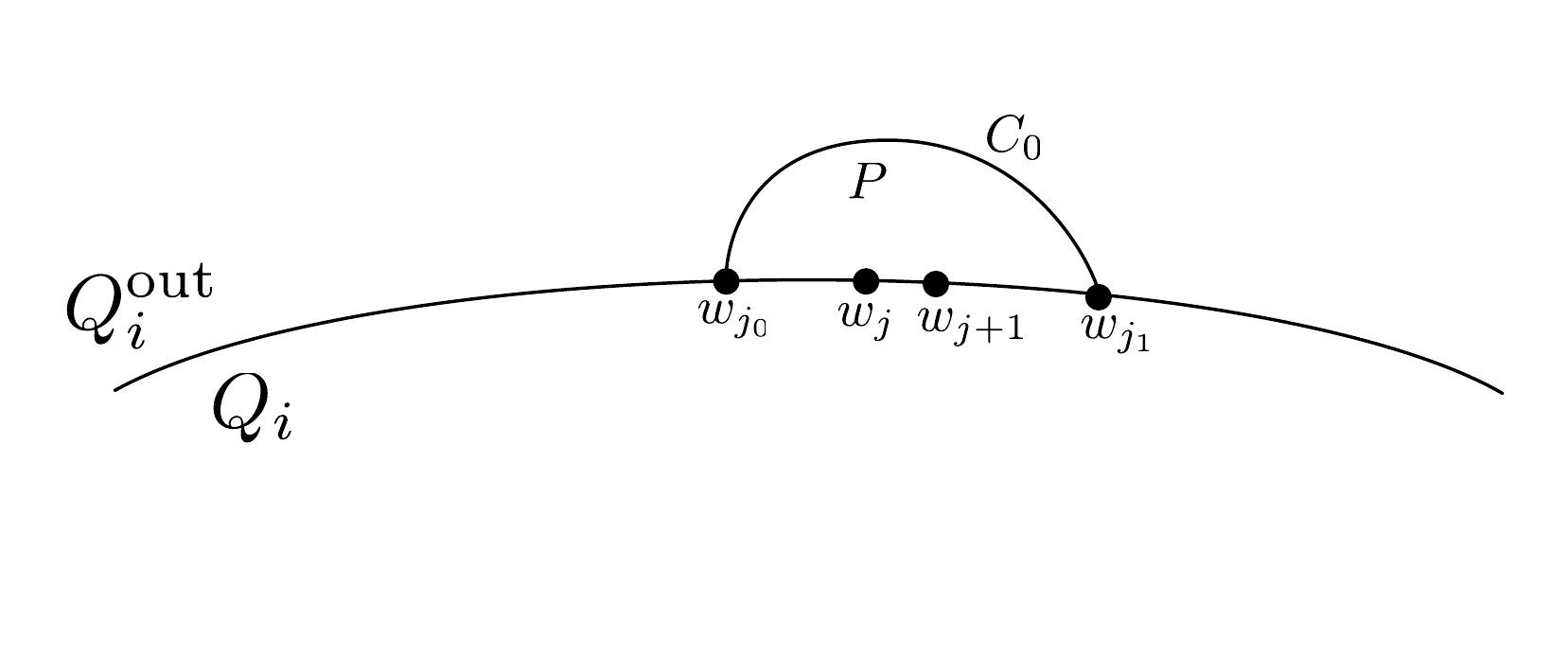}}\\
    {\bf (a)} & {\bf (b)}
    \end{tabular}
    \caption{Caption}
    \label{fig:AdjacentPiece}
\end{figure}
}
\end{proof}

\subsection{The $\SitePathIndicator$ Function}\label{sect:SitePathIndicator}

The $\SitePathIndicator$ function is relatively simple. 
We are given $\VDout(u_i,R_{i+1})$, $v\in R_{i+1}^{\out}$,
a centroid $f^* \in R_{i+1}^{\out}$, $f$ being a trichromatic 
face on $y_0,y_1,y_2$, which are, respectively, in the Voronoi cells of $s_0,s_1,s_2\in \boundary R_{i+1}$,
and an index $j\in\{0,1,2\}$.  We would like to know if $v$ 
is on the shortest $s_j$-to-$y_j$ path.
Recall that $t$ is such that $v\not\in R_t$ but $v\in R_{t+1}$.

\begin{figure}
    \centering
    \begin{tabular}{c@{\hspace*{1cm}}c}
    \scalebox{.30}{\includegraphics{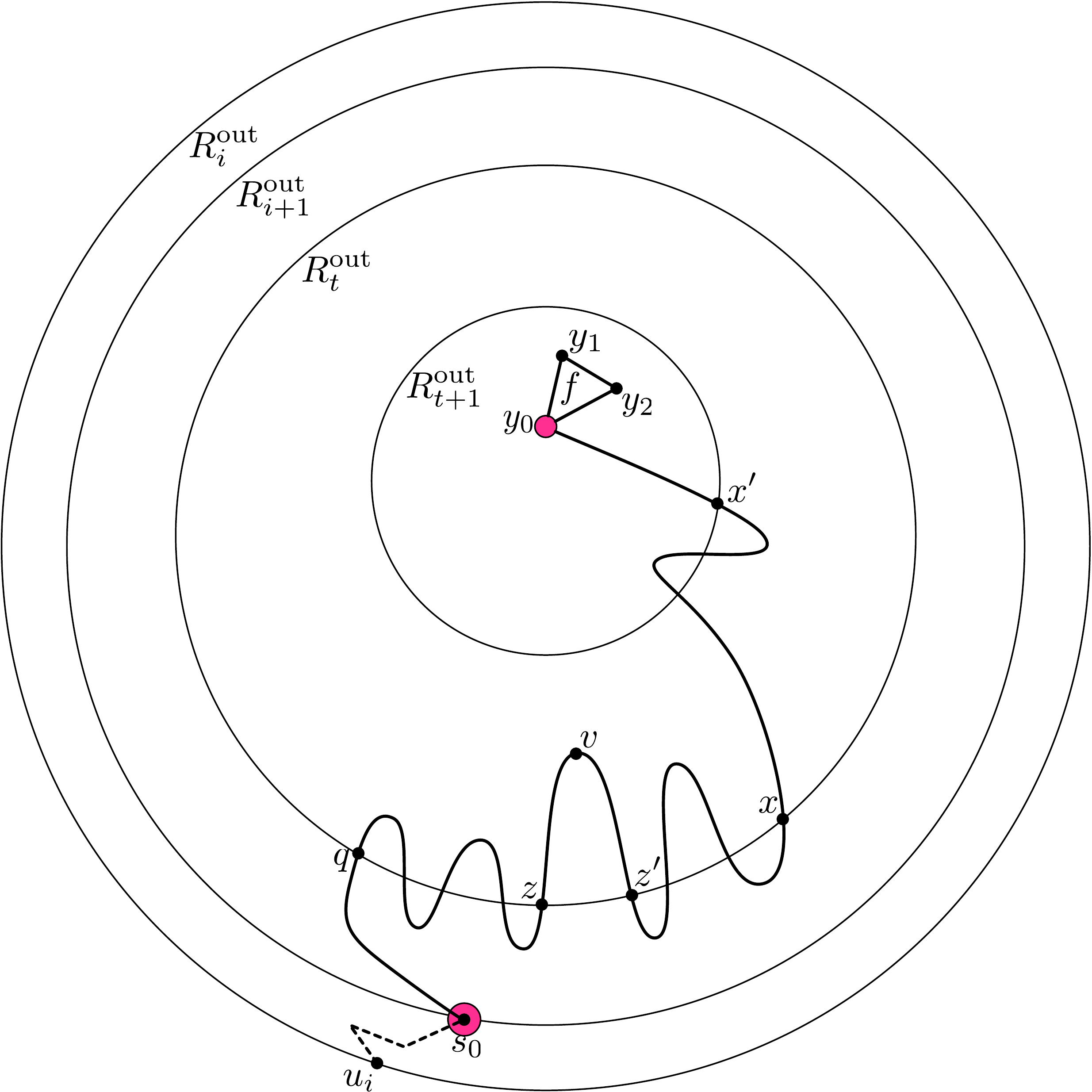}}
&
    \scalebox{.30}{\includegraphics{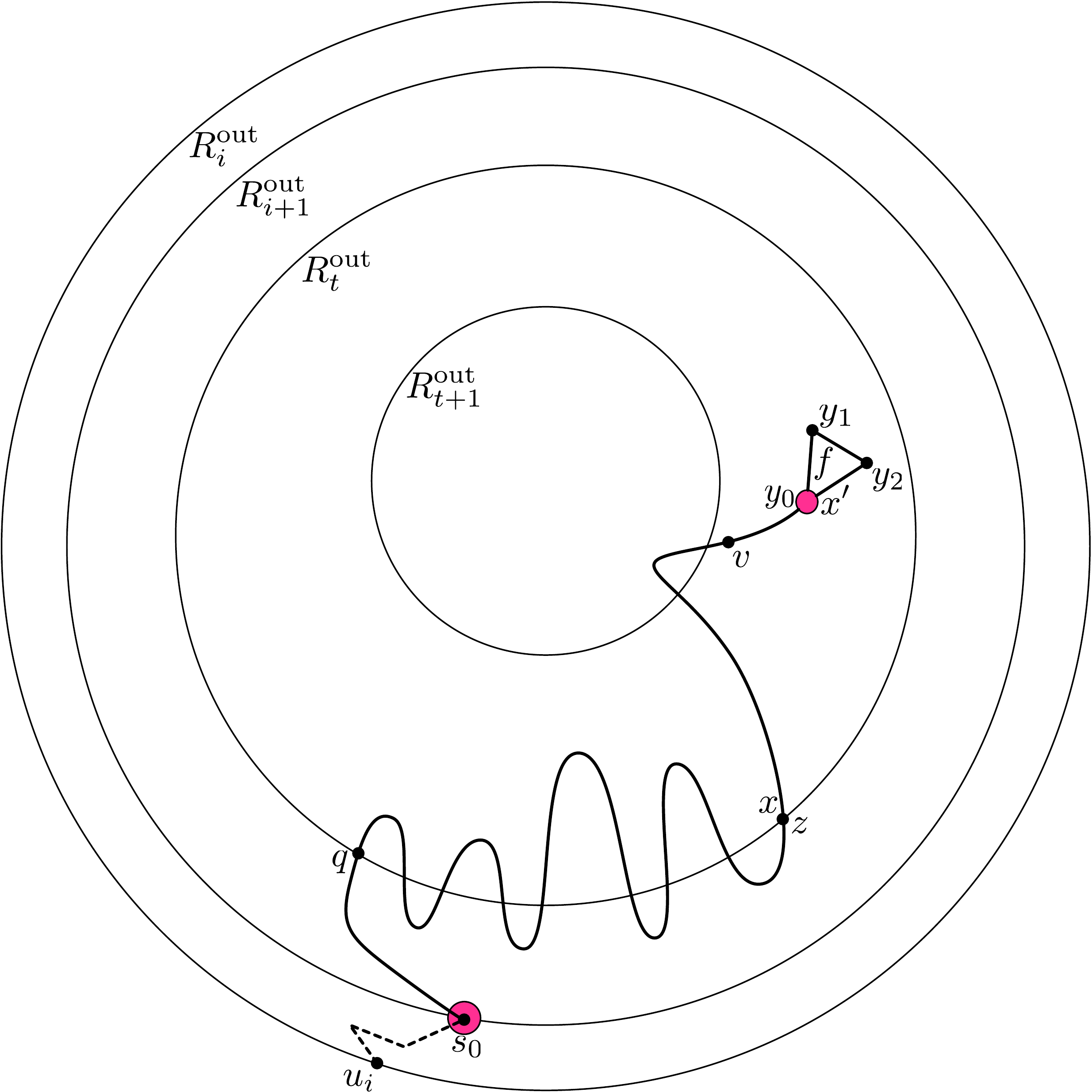}}\\ & \\
    {\bf (a)} & {\bf (b)}
    \end{tabular}
    \caption{{\bf (a)} If $z=x$ and $y_j$ is not in $R_{t+1}$, $x'$ is the last boundary vertex of $\boundary R_{t+1}$ on the $s_j$-$y_j$ path. {\bf (b)} If $z=x$ and $y_j$ is in $R_t^{\out}\cap R_{t+1}$ then $x'=y_j$.
    (Not depicted: if $y_j\in R_t$ then $x'=x$.)  We test whether $v$ is on the shortest $x$-$x'$ path.
    If $z\neq x$ then $z'$ is well defined and the position of $y_j$ is immaterial; 
    we test whether $v$ is on the shortest $z$-$z'$ path (depicted in {\bf (a)}).}
    \label{fig:SitePathIndicator}
\end{figure}

Using the lookup tables in part (E) of the data structure, we find
the first and last vertices ($q$ and $x$) of $\boundary R_t$ on the $s_j$-$y_j$ path.
If $q,x$ do not exist then $v$ is certainly not on the $s_j$-$y_j$ path (Line 4).
Using parts (A,C,E) of the data structure, we invoke $\PointLocate$ to 
find the last point $z$ of $\boundary R_t$ on the shortest path (in $G$) from 
$q$ to $v$.  (See Lemma~\ref{lemma:lastsite}.)
If $z$ is not on the path from $q$ to $x$ in $G$ 
(which corresponds to it not being on the path
from $q$ to $x$ in $T_q^{R_t}$, stored in Part (D)), 
then once again $v$ is certainly not on the $s_j$-$y_j$ path
(Line 8).  So we may assume $z$ lies on the $q$-$x$ path.  
If $z=x$ then there are three cases to consider, depending on whether the destination $y_j$
of the path is in $R_t^{\out}\cap R_{t+1}$, or in $R_{t+1}^{\out}$, or in $R_t$.  
If $y_j\in R_t^{\out}\cap R_{t+1}$ we let $x' = y_j$;
if $y_j\in R_{t+1}^{\out}$ we let $x'$ be the last vertex of $\boundary R_{t+1}$
encountered on the shortest $s_j$-$y_j$ path (part (E));
and if $y_j\in R_t$ we let $x'=x$.  In all cases, $x'$ is the last vertex
of the shortest $s_j$-$y_j$ path that is contained in relevant 
subgraph $R_t^{\out}\cap R_{t+1}$.
(Figure~\ref{fig:SitePathIndicator}(a,b) illustrates the first two possibilities for $x'$.)
Now $v$ is on the $s_j$-$y_j$ path
iff it is on the $x$-$x'$ shortest path, which can be answered using part (A)
of the data structure (Lines 19, 21).  (Figure~\ref{fig:SitePathIndicator}(b) illustrates
one way for $v$ to appear on the $x$-$x'$ path.)
In the remaining case $z$ is on the shortest $q$-$x$
path but is not $x$, meaning the child $z'$ of $z$ on $T_q^{R_t}[x]$ is well defined.
If $\chord{zz'}$ is a chord (corresponding to a path in $R_t^{\out}$)
then $v$ is on the shortest $s_j$-$y_j$ path iff it is on the shortest
$z$-$z'$ path in $R_t^{\out}$, which, once again, can be answered with part (A) of the data structure
(Lines 26, 28).  See Figure~\ref{fig:SitePathIndicator}(a) for an illustration of this case.

\begin{remark}\label{remark:PointLocate}
Strictly speaking we cannot apply Lemma~\ref{lem:PointLocate} (Gawrychowski et al.~\cite{GawrychowskiMWW18}) 
since we do not have an \MSSP{} structure for all of $R_{t}^{\out}$.  Part (A) only handles
distance/LCA queries when the query vertices are in $R_t^{\out}\cap R_{t+1}$.
It is easy to make Gawrychowski et al.'s algorithm work using parts (A) and (E) of 
the data structure. See the discussion at the end of Section~\ref{sect:sidequeries}.
\end{remark}

\begin{algorithm}[H]
    \caption{$\SitePathIndicator(\VDout(u_{i},R_{i+1}),v,f^{*},j)$}
    \label{alg:SiteIndicator}
    \begin{algorithmic}[1]
    \Require The dual representation $\VDout(u_{i},R_{i+1})$ of a Voronoi diagram, a vertex $v\in R_{i+1}^{\out}$, and an $s_{j}$-to-$y_{j}$ site-centroid shortest path ($s_{j},y_{j}$ are with respect to $f^{*}$) in $\VD^{*}$.
    \Ensure \textbf{True} if $v$ is on $s_{j}$-to-$y_{j}$ shortest path, or \textbf{False} otherwise.
    \State $R_{t}\gets$ the ancestor of $R_{i}$ s.t. $v\notin R_{t},v\in R_{t+1}$.
    \State $(q,x) \gets$ first and last $\boundary R_t$ vertices on the shortest $s_j$-$y_j$ path.
     \Comment{Part (E) of the data structure}
    \If {$q,x$ are \textbf{Null}}
        \State \Return \textbf{False}
    \EndIf
    \State $z\gets \PointLocate(\VDout(q, R_{t}),v)$ \Comment{Uses parts (A,C,E) of the data structure}
    \If {$z$ is not on $T_{q}^{R_{t}}[x]$}
        \State \Return \textbf{False}
    \EndIf
    \If {$z=x$}
        \If {$y_j$ is in $R_t^{\out} \cap R_{t+1}$}
            \State $x' \gets y_j$
        \ElsIf{$y_j \not\in R_{t+1}$}
            \State $x' \gets $ last $\boundary R_{t+1}$ vertex on $s_j$-$y_j$ path. \Comment{Part (E)} 
        \Else  
            \State $x' \gets x$ \Comment{I.e., $y_j \not\in R_t^{\out}$}
        \EndIf 
        \If {$v$ is on the shortest $x$-$x'$ path} \Comment{Part (A)}
                \State \Return \textbf{True} \Else \State \Return \textbf{False}
        \EndIf
    \EndIf
    \State $z'\gets$ the child of $z$ on $T_{q}^{R_{t}}[x]$
    \Comment{Part (D)}
    \If {$\chord{zz'}$ is a chord in $\mathcal{C}_q^{R_t}[x]$ and $v$ is on the shortest $z$-$z'$ path in $R_t^{\out}$} \Comment{Part (A)}
        \State \Return \textbf{True}
    \EndIf
    \State \Return \textbf{False}
    \end{algorithmic}
\end{algorithm}

\subsection{The $\ChordIndicator$ Function}\label{sect:ChordIndicator}

The $\ChordIndicator$ function is given $\VDout(u_i,R_{i+1})$, $v\in R_{i+1}^{\out}$, 
a centroid $f^*$, with $\{y_j,s_j\}$ defined as usual, and an index $j\in\{0,1,2\}$.
The goal is to report whether $v$ lies to right of the oriented \emph{site-centroid-site} chord
\[
C^\star = \chord{s_jy_jy_{j-1}s_{j-1}},
\]
which is composed of the shortest $s_j$-$y_j$ and 
$s_{j-1}$-$y_{j-1}$ paths, and the single edge $\{y_j,y_{j-1}\}$.
It is guaranteed that $v$ does not lie on $C^\star$, as this
case is already handled by the $\SitePathIndicator$ function.

Figure~\ref{fig:SiteCentroidSite} illustrates why this point location problem is
so difficult.  Since we know $v\in R_{t+1}$ but not in $R_t$, we can narrow
our attention to $R_t^{\out}\cap R_{t+1}$.  
However the projection of $C^\star$ 
onto $R_t^{\out}$ can touch the boundary 
$\boundary R_t$ an arbitrary number of times. 
Define $\mathcal{C}$ to be the set of oriented chords of 
$R_t^{\out}$ obtained by projecting $C^\star$ onto $R_t^{\out}$.

\begin{figure}
    \centering
    \begin{tabular}{ccc}
\multicolumn{3}{c}{\scalebox{.4}{\includegraphics{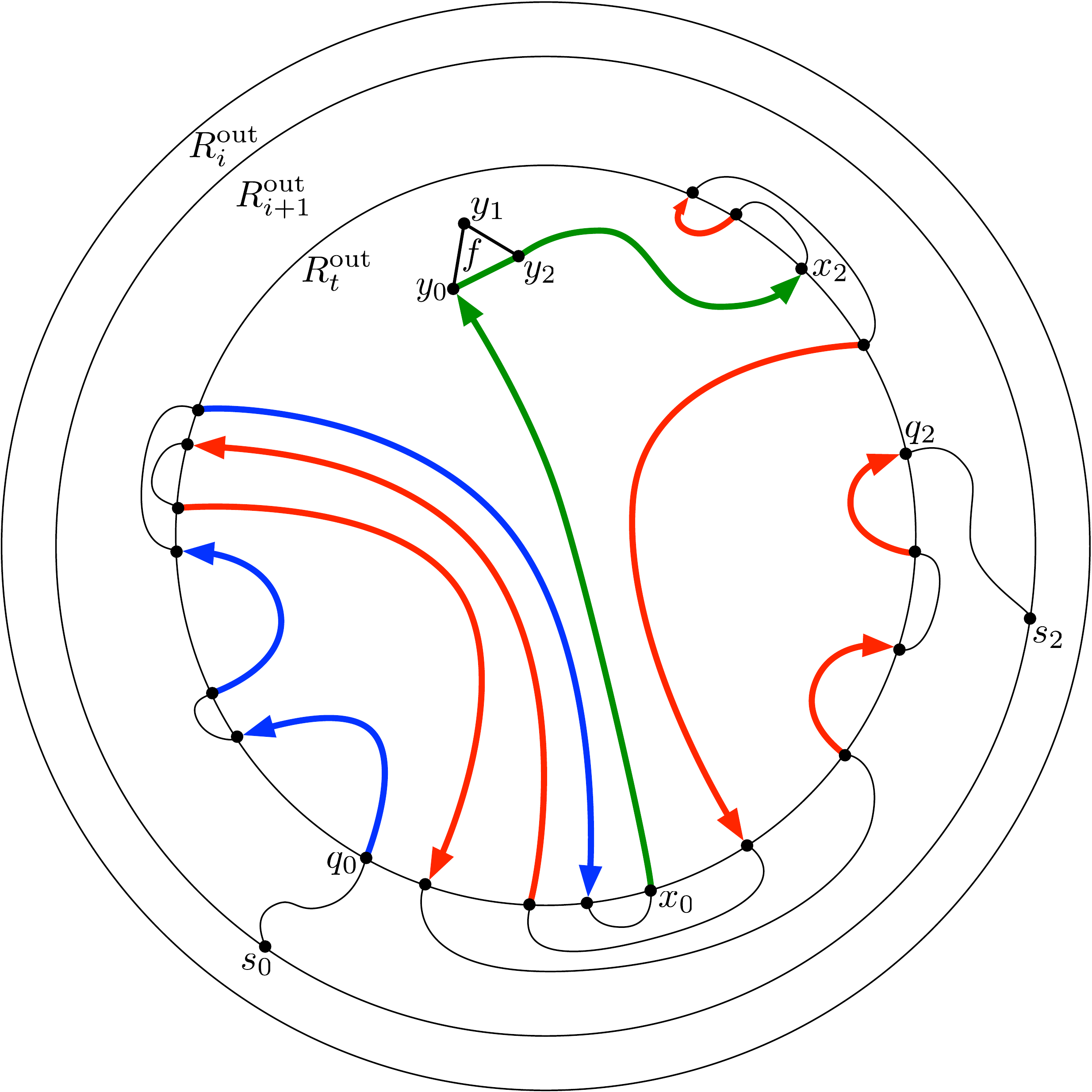}}}\\
&&\\
\multicolumn{3}{c}{{\bf (a)}}\\
&&\\
&&\\
\scalebox{.3}{\includegraphics{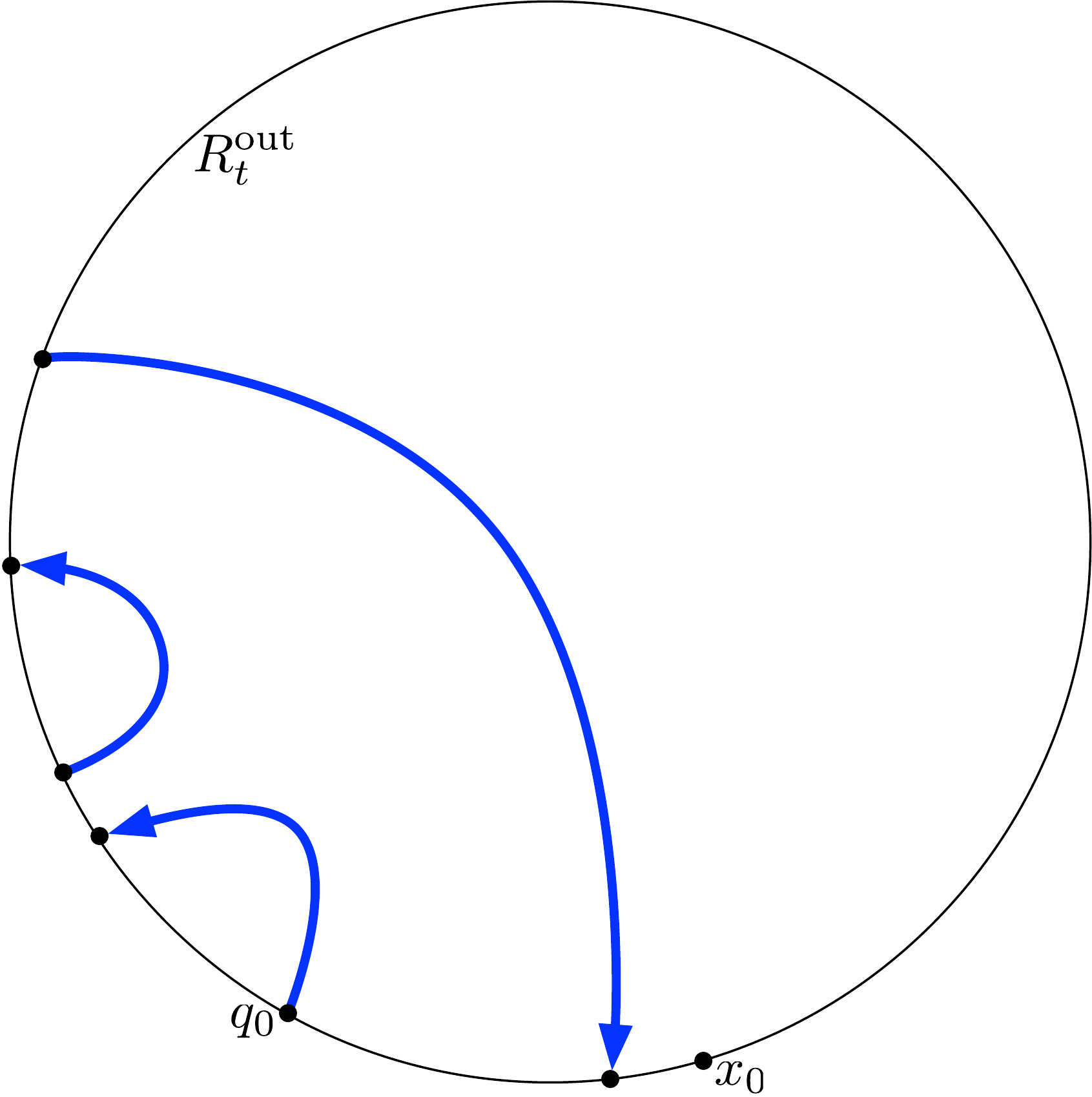}}
&
\scalebox{.3}{\includegraphics{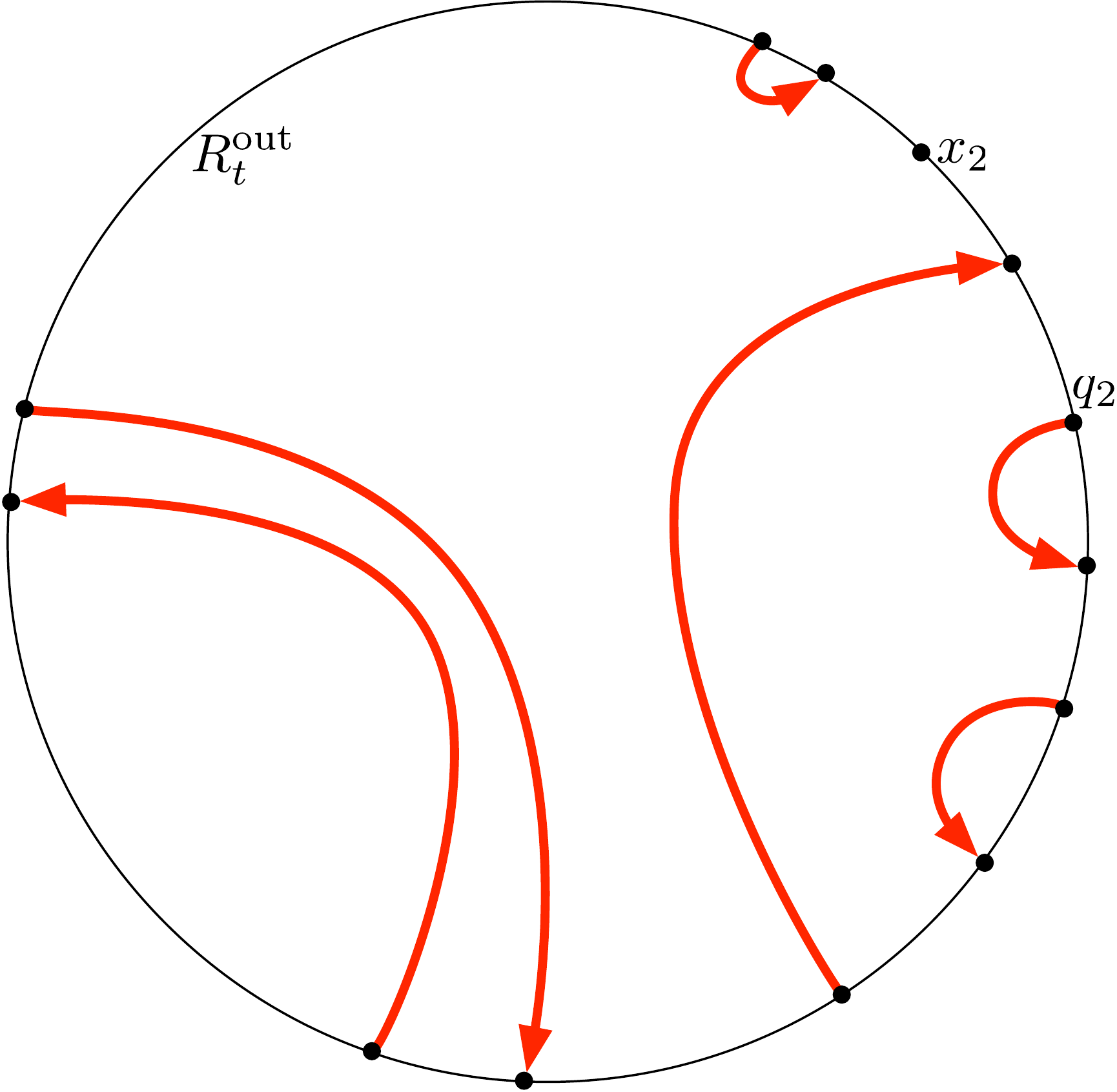}}
&
\scalebox{.3}{\includegraphics{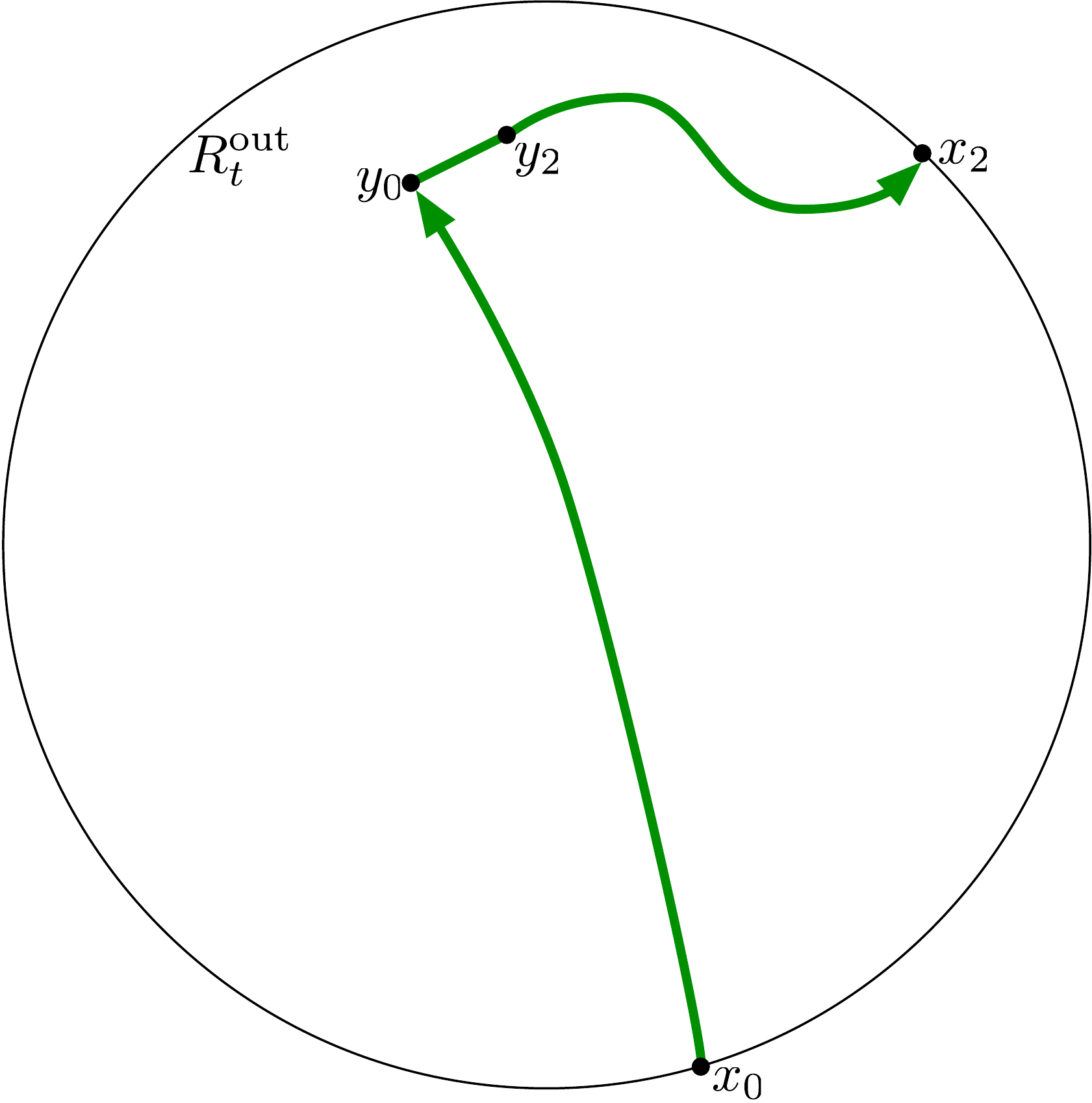}}\\
&&\\
{\bf (b)} & {\bf (c)} & {\bf (d)}
\end{tabular}
    \caption{{\bf (a)} The projection of a site-centroid-site chord 
    $C^\star = \protect\overrightarrow{s_jy_jy_{j-1}s_{j-1}}$ of $R_{i+1}^{\out}$
    onto $R_t^{\out}$ yields a set $\mathcal{C}$ of chords of $R_t^{\out}$,
    partitioned into three classes.  Let $q_j,x_j$ and $q_{j-1},x_{j-1}$
    be the first and last $\boundary R_t$-vertices on 
    the $s_j$-$y_j$ and $s_{j-1}$-$y_{j-1}$ paths.
    {\bf (b)} $\mathcal{C}_1$: all chords in $T_{q_j}^{R_t}[x_j]$.
    {\bf (c)} $\mathcal{C}_2$: all chords in $T_{q_{j-1}}^{R_t}[x_{j-1}]$. Their orientation
    is the reverse of their counterparts in $C^\star$.
    {\bf (d)} $\mathcal{C}_3$: the single chord $\protect\overrightarrow{x_jy_jy_{j-1}x_{j-1}}$.
    }
    \label{fig:SiteCentroidSite}
\end{figure}

Luckily $\mathcal{C}$ has some structure.  
Let $(q_j,x_j)$ and $(q_{j-1},x_{j-1})$
be the first and last $\boundary R_t$ vertices on the 
shortest $s_j$-$y_j$ and $s_{j-1}$-$y_{j-1}$ paths, respectively.
(One or both of these pairs may not exist.)
The chords of $\mathcal{C}$ are in one-to-one correspondence with the chords of
$\mathcal{C}_1 \cup \mathcal{C}_2 \cup \mathcal{C}_3$, defined below, but as we will see,
sometimes with their orientation reversed.
\begin{enumerate}
    \item[$\mathcal{C}_1$:] By definition $\mathcal{C}_1 = C_{q_j}^{R_t}[x_j]$
    contains all the chords on the path from $q_j$ to $x_j$, stored in part (D) of the data structure.  
    Moreover, the orientation of $\mathcal{C}_1$ agrees with the orientation of $C^\star$.
    The blue chords of Figure~\ref{fig:SiteCentroidSite}(a) are isolated as $\mathcal{C}_1$
    in Figure~\ref{fig:SiteCentroidSite}(b).
    \item[$\mathcal{C}_2:$] By definition $\mathcal{C}_2 = C_{q_{j-1}}^{R_t}[x_{j-1}]$
    contains all the chords on the path from $q_{j-1}$ to $x_{j-1}$.
    The red chords of $\mathcal{C}$ in Figure~\ref{fig:SiteCentroidSite}(a) 
    are \emph{represented} by chords $\mathcal{C}_2$, but with reversed orientation.
    Figure~\ref{fig:SiteCentroidSite}(c) depicts $\mathcal{C}_2$.
    \item[$\mathcal{C}_3:$] This is the singleton set containing the oriented
    chord $\chord{x_jx_{j-1}}$ consisting of the shortest $x_j$-$y_j$ and $x_{j-1}$-$y_{j-1}$ 
    paths and the edge $\{y_j,y_{j-1}\}$.
\end{enumerate}

The chord-set $\mathcal{C}$ partitions $R_t^{\out}$ into a piece-set $\mathcal{P}$,
with one such piece $P\in \mathcal{P}$ containing $v$.  (Remember that $v$ is not on $C^\star$.)
We can also consider the piece-sets 
$\mathcal{P}_1,\mathcal{P}_2,\mathcal{P}_3$
generated by 
$\mathcal{C}_1,\mathcal{C}_2,\mathcal{C}_3$.  
Let $P_1\in\mathcal{P}_1, P_2\in \mathcal{P}_2, P_3\in\mathcal{P}_3$ be the pieces
containing $v$.  Since, ignoring orientation, $\mathcal{C}=\mathcal{C}_1\cup\mathcal{C}_2\cup\mathcal{C}_3$,
it must be that $P=P_1\cap P_2 \cap P_3$.   In order to determine whether $v$ is to the right
of $C^\star$, it suffices to find some chord $C\in\mathcal{C}$ bounding $P$ and ask whether $v$
is to the right of $C$.  Thus, $C$ must also be on the boundary of one of $P_1,P_2,$ or $P_3$.

The high-level strategy of $\ChordIndicator$ is as follows.
First, we will find some piece $P_1' \in \mathcal{P}_{q_j}^{R_t}$
that is contained in $P_1$ using the procedure $\PieceSearch$ described below, in Section~\ref{sect:PieceSearchprocedure}.
The chords of $\mathcal{C}_1$ bounding $P_1$ are precisely 
the \emph{maximal} chords in $\mathcal{C}_1$ 
from vantage $P_1'$.
Using $\MaximalChord$ (part (D)) 
we will find a candidate chord $C_1\in\mathcal{C}_1$, 
and one edge $e$ on the boundary cycle of $\boundary R_t$ occluded by $C_1$ from vantage $P_1'$.
Turning to $\mathcal{C}_2$, we use $\AdjacentPiece$ to find the piece 
$P_e \in \mathcal{P}_{q_{j-1}}^{R_t}$ adjacent to $e$.
Then, using $\PieceSearch$ and $\MaximalChord$ again,
we find a $P_2'\in\mathcal{P}_{q_{j-1}}^{R_t}$ contained in $P_2$ and the maximal
chord $C_2$ occluding $P_e$ from vantage $P_2'$.  Let $C_3$ be the singleton chord in $\mathcal{C}_3$.
We determine the ``best'' chord $C_\ell \in \{C_1,C_2,C_3\}$, decide whether $v$ lies to 
the right of $C_{\ell}$, and return this answer if $\ell\in\{1,3\}$ or reverse it if $\ell=2$.
Recall that chords in $\mathcal{C}_2$ have the opposite orientation as their counterparts in $\mathcal{C}$.

$\PieceSearch$ is presented in Section~\ref{sect:PieceSearchprocedure}
and $\ChordIndicator$ in Section~\ref{sect:ChordIndictor-procedure}.

\subsubsection{$\PieceSearch$}\label{sect:PieceSearchprocedure}

We are given a region $R_t$, a vertex $v\in R_t^{\out}\cap R_{t+1}$,
and two vertices $q,x\in \boundary R_t$.  We 
must locate \emph{any} piece $P' \in\mathcal{P}_q^{R_t}$ that is contained in the 
unique piece $P \in \mathcal{P}_q^{R_t}[x]$ containing $v$.  
The first thing we do is find the \emph{last}
$\boundary R_t$ vertex $z$ on the shortest path from $q$ to $v$,
which can be found with a call to $\PointLocate$ on $\VDout(q,R_t)$.
(This uses parts (A,C,E) of the data structure.)
The shortest path from $z$ to $v$ cannot cross any chord
in $\mathcal{C}_q^{R_t}[x]$ (since they are part of a shortest path),
but it can coincide with a prefix of some chord in $\mathcal{C}_q^{R_t}[x]$.
Thus, if no chord of $\mathcal{C}_q^{R_t}[x]$ is incident to $z$,
then we are free to return \emph{any} piece containing $z$.
(There may be multiple options if $z$ is an endpoint of a chord in $\mathcal{C}_q^{R_t}$.  This case is depicted in Figure~\ref{fig:PieceSearch}.
When $z = z_0$, we know that $v\in P_5\cup \cdots \cup P_9$ and return
any piece containing $z$.)
In general $z$ may be incident to up to
two chords $C_1,C_2 \in \mathcal{C}_q^{R_t}[x]$.
(This occurs when the shortest $q$-$x$ path
touches $\boundary R_t$ at $z$ without leaving
$R_t^{\out}$.)  In this case we determine
which side of $C_1$ and $C_2$ $v$ is on 
(using parts (A) and (E) of the data structure; see Lemma~\ref{lem:chordside} 
in Section~\ref{sect:sidequeries} for details) and return the appropriate 
piece adjacent to $C_1$ or $C_2$.
This case is depicted in Figure~\ref{fig:PieceSearch}
with $z=z_1$; the three possible answers coincide with
$v\in\{v_1,v_2,v_3\}$.

\begin{algorithm}
    \caption{$\PieceSearch(R_{t},q,x,v)$}
    \label{alg:PieceSearch}
    \begin{algorithmic}[1]
    \Require A region $R_{t}$, two vertices $q,x \in\boundary R_{t}$, 
             and a vertex $v$ not on the $q$-to-$x$ shortest path in $G$.
    \Ensure A piece $P'\in\mathcal{P}_{q}^{R_{t}}$, 
            which is a subpiece of the unique piece 
            $P\in\mathcal{P}_{q}^{R_{t}}[x]$ containing $v$.
    \State $z \gets \PointLocate(\VDout(q, R_{t}),v)$ 
                        \Comment{Uses parts (A,C,E) of the data structure}
    \If {$z$ is not the endpoint of any chord in $\mathcal{C}_{q}^{R_{t}}[x]$}
        \State \Return any piece in $\mathcal{P}_{q}^{R_{t}}$ containing $z$.
    \EndIf
    \State $C_{1},C_{2}\gets$ two chords in $\mathcal{C}_{q}^{R_{t}}[x]$ adjacent to $z$ ($C_2$ may be {\bf Null})
    \State Determine whether $v$ is to the left or right of $C_1$ and $C_2$. 
                        \Comment{Part (A); see Lemma~\ref{lem:chordside}}
    \State \Return a piece adjacent to $C_1$ or $C_2$ that respects the queries of Line 6.
    \end{algorithmic}
\end{algorithm}

\begin{figure}
    \centering
    \scalebox{.45}{\includegraphics{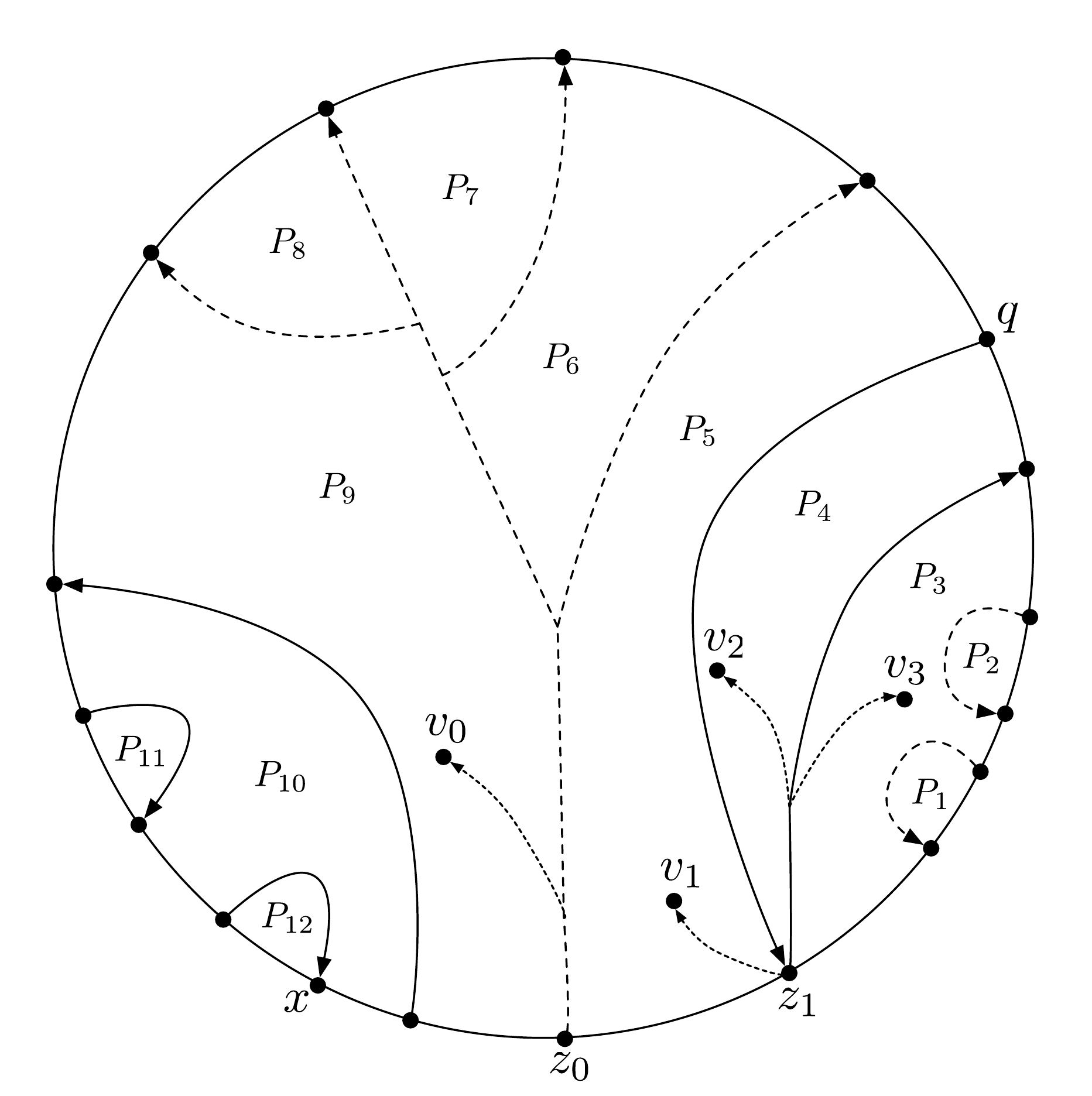}}
    \caption{Solid chords are in $\mathcal{C}_q^{R_t}[x]$.
    Dashed chords are in $\mathcal{C}_q^{R_t}$ but not $\mathcal{C}_q^{R_t}[x]$.
    When $z = z_0, v = v_0$, the piece in $\mathcal{P}_q^{R_t}[x]$ 
    containing $v$ is the union of $P_5$--$P_9$.  
    $\PieceSearch$ reports any piece
    containing $z_0$.  When $z=z_1, v\in \{v_1,v_2,v_3\}$, $z$ is incident
    to two chords $C_1,C_2$.  $\PieceSearch$ decides which side of $C_1,C_2$ 
    $v$ is on (see Lemma~\ref{lem:chordside}), 
    and returns the appropriate piece adjacent to $C_1$ or $C_2$.}
    \label{fig:PieceSearch}
\end{figure}

\subsubsection{$\ChordIndicator$}\label{sect:ChordIndictor-procedure}

Let us walk through the $\ChordIndicator$ function.  
If $C^\star = \chord{s_jy_jy_{j-1}s_{j-1}}$
does not touch the interior of $R_t^{\out}$ then the left-right relationship
between $C^\star$ and $v\not\in R_t$ is known, and stored in part (E) of the data structure.
If this is the case the answer is returned immediately, at Line 3.
A relatively simple case is when $\mathcal{C}_1$ and $\mathcal{C}_2$ are empty, 
and $\mathcal{C}=\mathcal{C}_3$ consists of just one chord 
$C_3 = \chord{x_jy_jy_{j-1}x_{j-1}}$.  
We determine whether $v$ is to the right or left
of $C_3$ and return this answer (Line 8).  
(Lemma~\ref{lem:chordside} in Section~\ref{sect:sidequeries} explains how to test whether
$v$ is to one side of a chord.)
Thus, without loss of generality we can assume $\mathcal{C}_1\neq \emptyset$ and $\mathcal{C}_2$ may or may not be empty.

Recall that $P_1$ is $v$'s piece in $\mathcal{P}_{q_j}^{R_t}[x_j]$.  
Using $\PieceSearch$ we find a piece $P_1'\subseteq P_1$ in the more 
refined partition $\mathcal{P}_{q_j}^{R_t}$ and 
find a $\MaximalChord$ $C_1\in\mathcal{C}_1$
from vantage $P_1'$, and hence from vantage $v$
as well.
We regard $\boundary R_t$ as circularly ordered 
according to a clockwise walk around 
the hole on $\boundary R_t$ in $R_t^{\out}$.  
The chord $C_1$ occludes an interval $I_1$ of $\boundary R_t$ from vantage $v$.
If $C_1$ is \emph{not} one of the chords bounding $P$, then $C_3$ 
or some $C_2\in\mathcal{C}_2$ must occlude a superset of $I_2$, so we will
attempt to find such a $C_2$, as follows.

Let $e$ be the first edge on the boundary cycle occluded by $C_1$, 
i.e., $e$ joins the first two elements of $I_1$.
Using $\AdjacentPiece$ we find the unique 
piece $P_e\in \mathcal{P}_{q_{j-1}}^{R_t}$
with $e$ on its boundary.  Using $\PieceSearch$ again we
find $P_2'\in\mathcal{P}_{q_{j-1}}^{R_t}$ contained in $P_2$,
and using $\MaximalChord$ again, 
we find the maximal chord $C_2 \in \mathcal{C}_2$ that occludes
$P_e$ from vantage $P_2'$, and hence from vantage $v$ as well.  
Observe that since all chords
in $\mathcal{C}_2$  are vertex-disjoint from $C_1$, 
if $C_2\neq $ \textbf{Null} then $C_2$ must occlude
a strictly larger interval $I_2 \supset I_1$ of $\boundary R_t$.
(If $C_2$ is \textbf{Null} then $I_2=\emptyset$.)
It may be that $C_1$ and $C_2$ are both not on the boundary of $P$,
but the only way that could occur is if $C_3 \in \mathcal{C}_3$
occludes a superset of $I_1$ and $I_2$ on the boundary $\boundary R_t$.
We check whether $v$ lies to the right or left of $C_3$ and
let $I_3$ be the interval of $\boundary R_t$ occluded by $C_3$ from vantage $v$.
If $I_3$ does not cover $e$, then we cannot conclude that $C_3$ is superior
than $C_1/C_2$.  Thus, we find the chord $C_{\ell} \in \{C_1,C_2,C_3\}$
that covers $e$ and maximizes $|I_\ell|$.  $C_\ell$ must be on the boundary of $P$,
so the left-right relationship between $v$ and $C^\star$ is exactly the
same as the left-right relationship between $v$ and $C_\ell$, if $\ell\in\{1,3\}$,
and the reverse of this relationshp if $\ell=2$ since chords in $\mathcal{C}_2$
have the opposite orientation as their subpath counterparts in $C^\star$.
Figure~\ref{fig:ChordIndicator} illustrates how $\ell$ could take on all three
values.

\begin{figure}
    \begin{tabular}{ccc}
    \hspace{-.3cm}\scalebox{.26}{\includegraphics{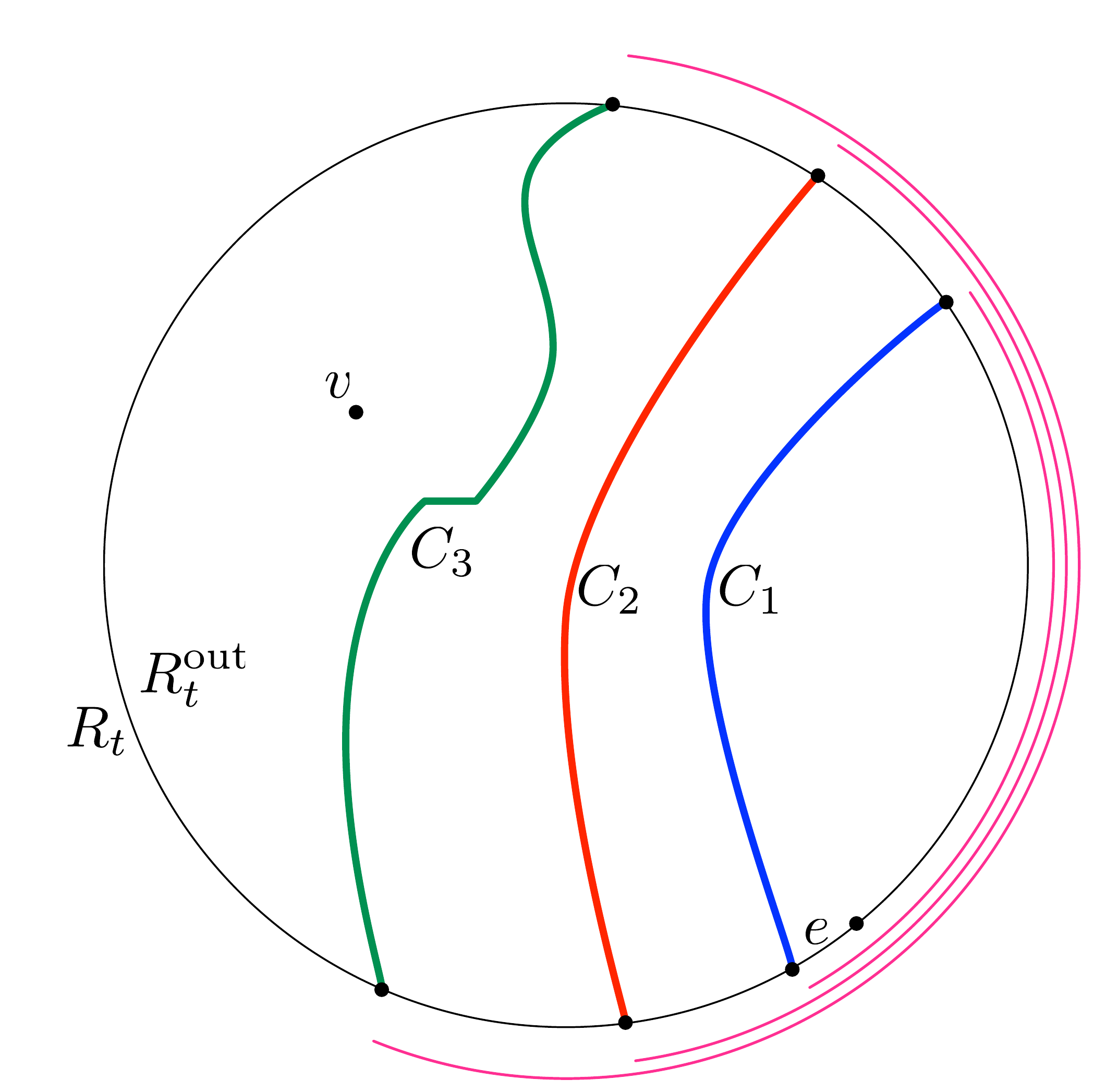}}
    &\hspace{-.35cm}\scalebox{.26}{\includegraphics{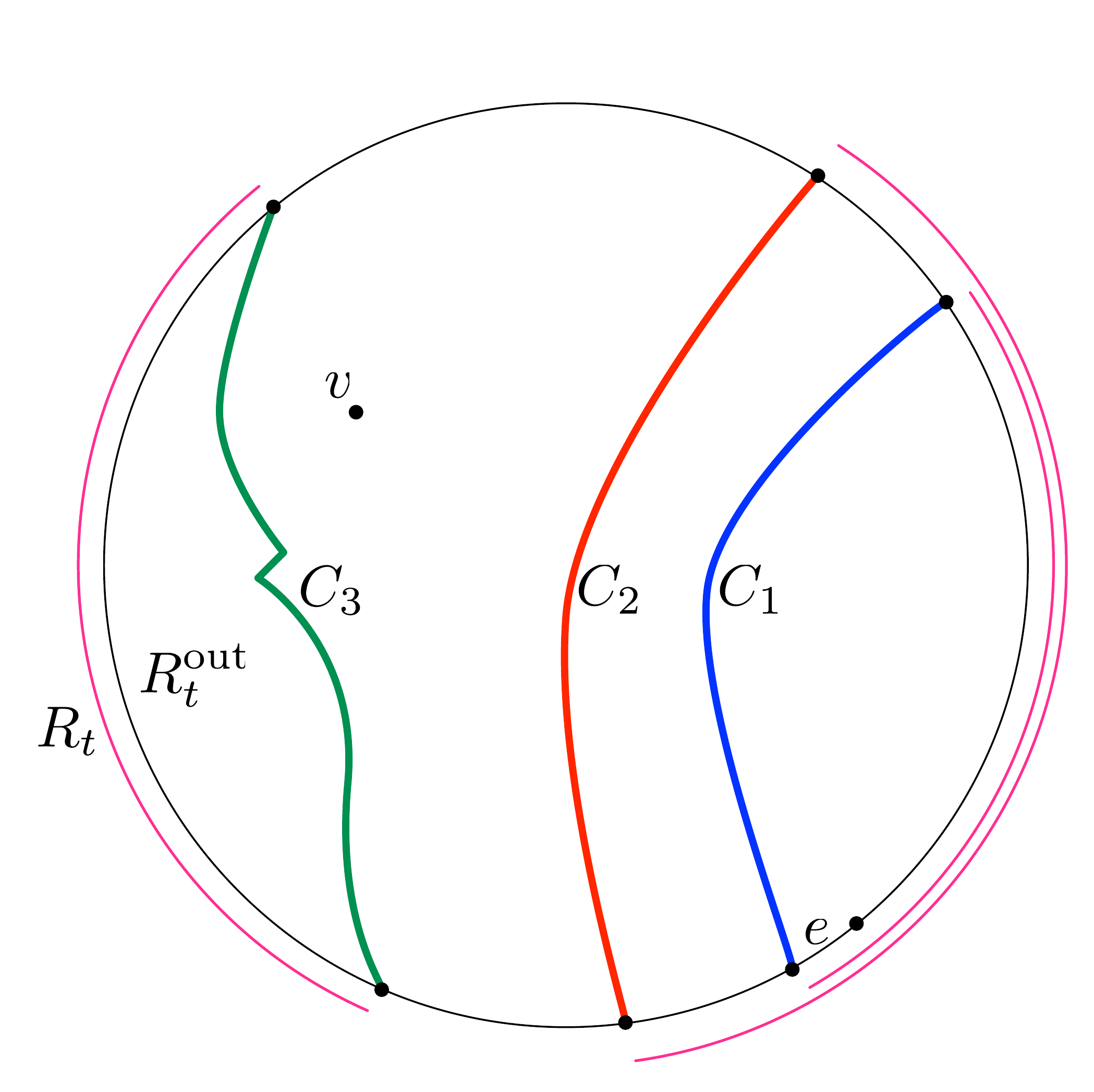}}
    &\hspace{-.4cm}\scalebox{.26}{\includegraphics{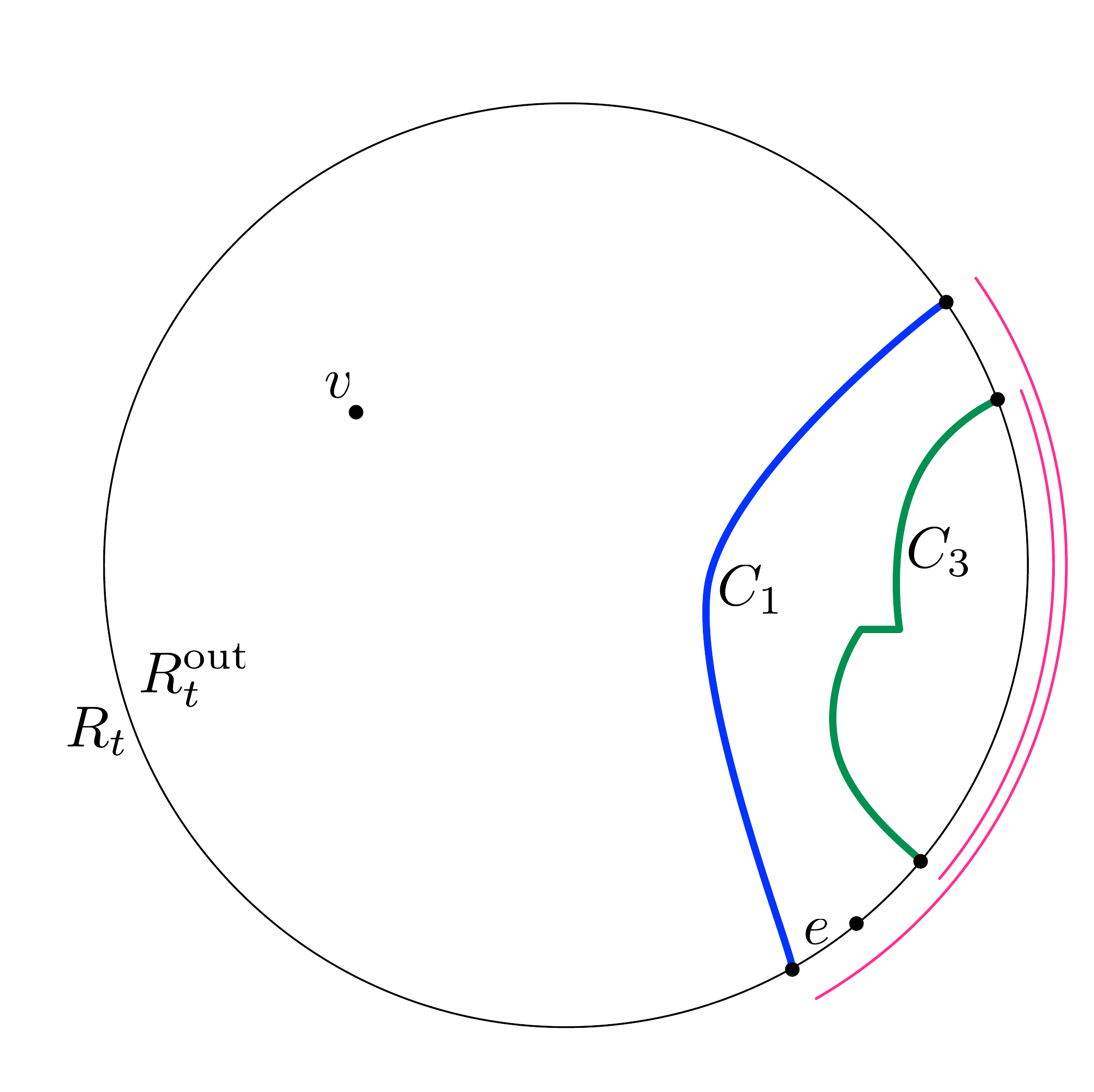}}\\
    (a) & (b) & (c)
    \end{tabular}
    \caption{The intervals $I_1,I_2,I_3$ are represented as pink circular arcs.
In (a) $C_2$ exists and $C_3$ is better than 
$C_1,C_2$ since $I_3 \supset I_2 \supset I_1$.  In 
(b) $C_2$ exists, but $C_3$ occludes an interval $I_3$ that does not contain $e$,
so $C_2$ is the best chord.  
In (c) $C_2$ is \textbf{Null}, and $C_3$ does not occlude $e$ 
from $v$, so $C_1$ is the only eligible chord. 
(In the figure $I_3 \subset I_1$ but it could also be 
as in (b), with $I_3$ disjoint from $I_1$.)}
    \label{fig:ChordIndicator}
\end{figure}

\begin{algorithm}
    \caption{$\ChordIndicator(\VDout(u_{i},R_{i+1}),v,f^{*},j)$}
    \label{alg:CentroidIndicator}
    \begin{algorithmic}[1]
    \Require The dual representation $\VDout = \VDout(u_{i},R_{i+1})$ of a Voronoi diagram,
    a centroid $f^*$ in $\VDout$ with face $f$ on vertices $y_0,y_1,y_2$, which are in the 
    Voronoi cells of $s_0,s_1,s_2$, 
    an index $j\in\{0,1,2\}$, and a vertex $v \in R_{i+1}^{\out}$ that does not lie on the
    site-centroid-site chord $C^\star = \overrightarrow{s_{j}y_{j}y_{j-1}s_{j-1}}$.
    \Ensure \textbf{True} if $v$ lies to the right of $C^\star$, and $\textbf{False}$ otherwise.
    \State $R_{t}\gets$ the ancestor of $R_{i}$ s.t. $v\notin R_{t}, v\in R_{t+1}$.  $\mathcal{C}$ is the projection of $C^\star$ onto $R_t^{\out}$.
    \If {the left/right relationship between $R_t^{\out}$ and $C^\star = \chord{s_{j}y_{j}y_{j-1}s_{j-1}}$ is known} 
    \State \Return stored \textbf{True}/\textbf{False} answer.  \Comment{Part (E)}
    \EndIf  \Comment{(It follows that $C^\star$ crosses $\boundary R_t$ and that $\mathcal{C}\neq\emptyset$)}
    \State $(q_{j},x_{j}) \gets $ first and last $\boundary R_t$-vertices on shortest $s_{j}$-$y_{j}$ path. \Comment{Part (E)}
    \State $(q_{j-1},x_{j-1}) \gets$ first and last $\boundary R_t$-vertices on shortest $s_{j-1}$-$y_{j-1}$ path. \Comment{Part (E)}
    \If{$\mathcal{C}_1 = \mathcal{C}_2 = \emptyset$}
        \State \Return \textbf{True} if $v$ is to the right of the 
        $\mathcal{C}_3$-chord $\chord{x_jy_jy_{j-1}x_{j-1}}$, or \textbf{False} otherwise.
    \EndIf \Comment{\emph{W.l.o.g., continue under the assumption that $\mathcal{C}_1 \neq \emptyset$.}}
    \State $P_1' \gets \PieceSearch(R_{t},q_{j},x_{j},v)$          \Comment{Uses parts (A,C)}
    \State $C_1 \gets \MaximalChord(R_t,q_j,x_j,P_1',\perp)$        \Comment{Part (D)}
    \State $I_1 \gets $ the clockwise interval of hole $\boundary R_t$ occluded by $C_1$ from vantage $v$.
    \State $e \gets $ edge joining first two elements of $I_1$.
    \State $P_e \gets \AdjacentPiece(R_t,q_{j-1},e)$        \Comment{Part (D)}
    \State $P_2' \gets \PieceSearch(R_t,q_{j-1},x_{j-1},v)$         \Comment{Uses parts (A,C)}
    \State $C_2 \gets \MaximalChord(R_t,q_{j-1},x_{j-1},P_2',P_e)$  \Comment{Part (D); may return {\bf Null}}
    \State $I_2 \gets $ the clockwise interval of hole $\boundary R_t$ occluded by $C_2$ from vantage $v$. \Comment{$\emptyset$ if $C_2=$ {\bf Null}}
    \State $C_3 \gets $ single chord in $\mathcal{C}_3$, if any.     \Comment{May be {\bf Null}}
    \State $I_3 \gets $ the clockwise interval of hole $\boundary R_t$ occluded by $C_3$ from vantage $v$. \Comment{$\emptyset$ if $C_3=$ {\bf Null}}
    \State $\ell \gets $ index such that $I_\ell$ covers $e$, and $|I_\ell|$ is maximum.
    \If{$v$ is to the right of $C_\ell$ and $\ell\in\{1,3\}$ or $v$ is to the left of $C_\ell$ and $\ell=2$}
         \State \Return \textbf{True} \EndIf
    \State \Return \textbf{False}
    \end{algorithmic}
\end{algorithm}

\subsubsection{Side Queries}\label{sect:sidequeries}

Lemma~\ref{lem:chordside}
explains how we test whether $v$ is to the 
right or left of a chord, which is used in both $\PieceSearch$ and $\ChordIndicator$.

\begin{lemma}\label{lem:chordside}
For any $C\in\mathcal{C}_1\cup \mathcal{C}_2\cup \mathcal{C}_3$ and $v$ not on $C$,
we can test whether $v$ lies to the right or left
of $C$ in $O(\kappa\log\log n)$ time, 
using parts (A) and (E) of the data structure.
\end{lemma}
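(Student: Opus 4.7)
The plan is to answer each side query with $O(1)$ LCA-type MSSP queries from Lemma~\ref{lem:MSSP} on part~(A)'s structure for region $R_t$, combined with the stored endpoint data in part~(E). All relevant vertices lie in $R_t^{\out}\cap R_{t+1}$---the chord endpoints sit on $\boundary R_t\subseteq R_t^{\out}\cap R_{t+1}$ and $v$ lies in $R_t^{\out}\cap R_{t+1}$ by the choice of $t$---so part~(A) is well-typed for every query we need.

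For $C=\chord{ab}\in\mathcal{C}_1$ I exploit that $C$ is the shortest $a$-to-$b$ path in $R_t^{\out}$ (being a subpath of the shortest $q_j$-to-$x_j$ path). I issue the second (LCA) query of Lemma~\ref{lem:MSSP} from source $a$ to retrieve the LCA $x$ of $v$ and $b$, together with the outgoing edges $e_v,e_b$ from $x$ toward $v,b$ and the parent edge $e_x$ of $x$, as in the discussion after Lemma~\ref{lem:MSSP}. Because $C$ is the shortest $a$-$b$ path, $x$ lies on $C$; because $v\notin C$, the shortest $a$-to-$v$ path branches off $C$ at $x$. In the generic case $x\notin\{a,b\}$, the clockwise order of $(e_x,e_b,e_v)$ around $x$ decides the side, exactly as in Figure~\ref{fig:LCA}. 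Chords in $\mathcal{C}_2$ are handled identically using the $q_{j-1}$-rooted source.

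For the single chord $C_3=\chord{x_jy_jy_{j-1}x_{j-1}}\in\mathcal{C}_3$ I decompose $C_3$ into its three contiguous pieces---the shortest $x_j$-to-$y_j$ path, the edge $\{y_j,y_{j-1}\}$, and the shortest $y_{j-1}$-to-$x_{j-1}$ path---and side-test each using the same LCA method, with sources $x_j$ and $x_{j-1}$ on part~(A). The middle edge contributes at most a single rotation test at $y_j$ or $y_{j-1}$, using cyclic information already stored with the centroid decomposition of $\VD^*$. Part~(E) supplies the auxiliary boundary-endpoint data (e.g.\ the last $\boundary R_{t+1}$-vertex on the $s_j$-to-$y_j$ path when $y_j\notin R_{t+1}$) needed to keep every MSSP query within $R_t^{\out}\cap R_{t+1}$, and the three sub-answers together determine the side of $v$ with respect to $C_3$.

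The main obstacle is the degenerate case $x\in\{a,b\}$, in which the rotation test has no anchor edge. Taking $x=a$, I exploit the simplifying assumption that $\boundary R_t$ is a simple cycle: the two boundary-cycle edges incident to $a$ split the clockwise order around $a$ into two arcs, precisely one of which contains every edge of $R_t^{\out}$ leaving $a$ to the interior. Both $e_v$ and $e_b$ lie inside that arc, so their relative clockwise position within it substitutes for the missing $e_x$, and the side can be read off in $O(1)$ time from the embedding. The symmetric degeneracy $x=b$ is handled by re-posing the query from source $b$, and when both endpoints are degenerate the arc-based test is applied at both ends. Each MSSP query costs $O(\kappa\log\log n)$ by Lemma~\ref{lem:MSSP}, and we perform $O(1)$ of them per side test, yielding the stated $O(\kappa\log\log n)$ bound.
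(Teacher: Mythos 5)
Your treatment of chords in $\mathcal{C}_1\cup\mathcal{C}_2$ matches the paper's Case~1: one LCA query from the chord's source endpoint, then a rotation test at the branch vertex, with the degeneracies at the endpoints resolved by anchoring to the hole on $\boundary R_t$ (the paper does this with pendant vertices embedded in the hole, which is equivalent to your arc argument at $a$). One caveat: your fix for the degeneracy $x=b$ by ``re-posing the query from source $b$'' is suspect in a \emph{directed} graph, since the shortest $b$-to-$a$ path need not be the reversal of $C$; the paper instead keeps the source at $c_0$ and uses the pendant at $c_1$ to supply the missing third edge. Your acknowledgement that $y_j$ may lie in $R_{t+1}^{\out}$ and that part~(E) supplies $\hat x_j$ is in the right direction, but the genuinely delicate subcases there (when the LCA equals $\hat x_j$ and the edge toward $y_j$ is absent from the truncated tree) are only gestured at.

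The real gap is in the $\mathcal{C}_3$ case. You propose to side-test $v$ against each of the three contiguous pieces of $C_3=\chord{x_jy_jy_{j-1}x_{j-1}}$ separately and then combine ``the three sub-answers.'' But an individual piece such as the shortest $x_j$-to-$y_j$ path terminates at an interior vertex and does not separate $R_t^{\out}$, so ``the side of $v$ with respect to that piece'' is not well-defined by a local branch direction: the shortest $x_j$-to-$v$ path may branch off to the right of the $x_j$-$y_j$ path at the LCA and then cross the edge $\{y_j,y_{j-1}\}$ or the $x_{j-1}$-$y_{j-1}$ path (neither of which lies in the shortest-path tree rooted at $x_j$) and end on the left of $C_3$. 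You give no rule for combining the three tests, and no such rule can be extracted from them alone. The paper's proof avoids this by first computing $\hat d(v)=\min\{\dist_G(u_i,x_j)+\dist_G(x_j,v),\,\dist_G(u_i,x_{j-1})+\dist_G(x_{j-1},v)\}$ to decide which of the two trees of the induced two-source shortest-path forest contains $v$; within that forest the $x_{j^\star}$-to-$v$ path cannot cross either site-centroid subpath of $C_3$, so a \emph{single} LCA/rotation test at the branch point (Cases~2a/2b) is decisive. This selection step is the key idea your argument is missing.
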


\begin{proof}
There are several cases.
\paragraph{Case 1.}
Suppose that 
$C = \chord{c_0c_1} \in \mathcal{C}_1\cup\mathcal{C}_2$ 
corresponds to the shortest
path from $c_0$ to $c_1$ in $R_t^{\out}$, 
$c_0,c_1\in \boundary R_t$.
Let $c_0',c_1'$ be pendant vertices attached 
to $c_0,c_1$ embedded inside the face of $R_t^{\out}$ 
bounded by $\boundary R_t$.
The shortest $c_0'$-$v$ paths and $c_0'$-$c_1'$ paths branch at some point.
We ask the \MSSP{} structure (part (A)) for the least common ancestor, $w$,
of $v$ and $c_1'$ in the shortcutted SSSP tree rooted at $c_0'$.
This query also returns the two tree 
edges $e_{v},e_{c_1'}$ leading to $v$ and $c_1'$, respectively.
Let $e_w$ be the edge connecting $w$ to its parent.\footnote{The purpose of adding $c_0',c_1'$ 
is to make sure all three edges $e_w,e_v,e_{c_1'}$ exist.  The vertices $c_0',c_1'$ 
are not represented in the \MSSP{} structure.  The edges $(c_0',c_0)$ and $(c_1,c_1')$
can be simulated by inserting them between the two boundary edges on $\boundary R_t$
adjacent to $c_0$ and $c_1$, respectively.}
If the clockwise order around $w$ is $e_w,e_{c_1'},e_v$ 
then $v$ lies to the right of $\chord{c_0c_1}$; 
otherwise it lies to the left.  Note that if the shortest
$c_0'$-$c_1'$ and $c_0'$-$v$ paths in $G$ 
branch at a point in $R_{t+1}^{\out}$,
then $w$ will be the nearest ancestor of the branchpoint on $\boundary R_{t+1}$
and one or both of $e_v,e_{c_1'}$ may be ``shortcut'' edges 
in the \MSSP{} structure.  See Figure~\ref{fig:leftright}(a) for a depiction
of this case.

\begin{figure}[h]
    \centering
    \begin{tabular}{cc}
    \multicolumn{2}{c}{\scalebox{.33}{\includegraphics{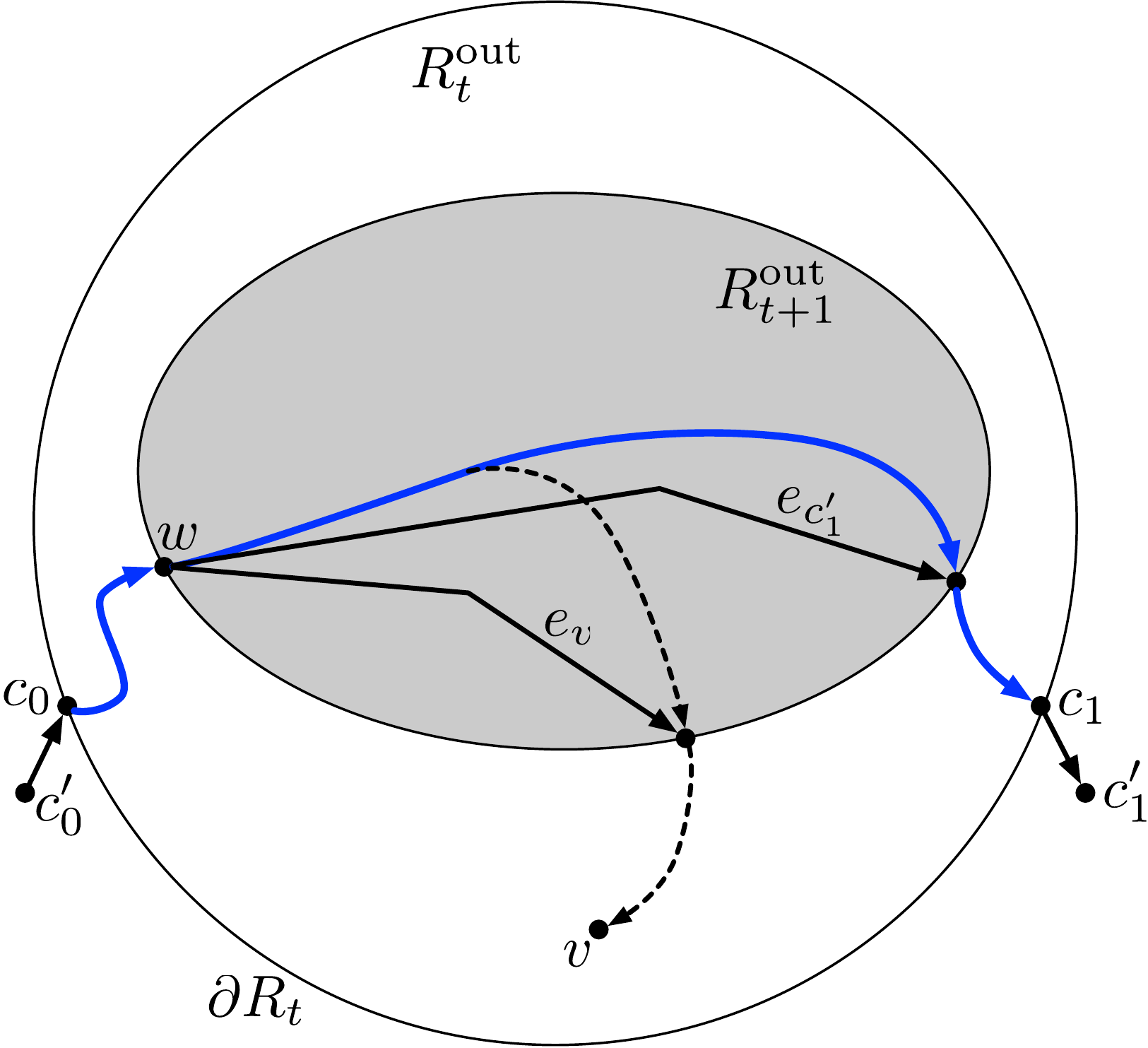}}}\\
    \multicolumn{2}{c}{ }\\
    \multicolumn{2}{c}{(a)}\\
    \scalebox{.32}{\includegraphics{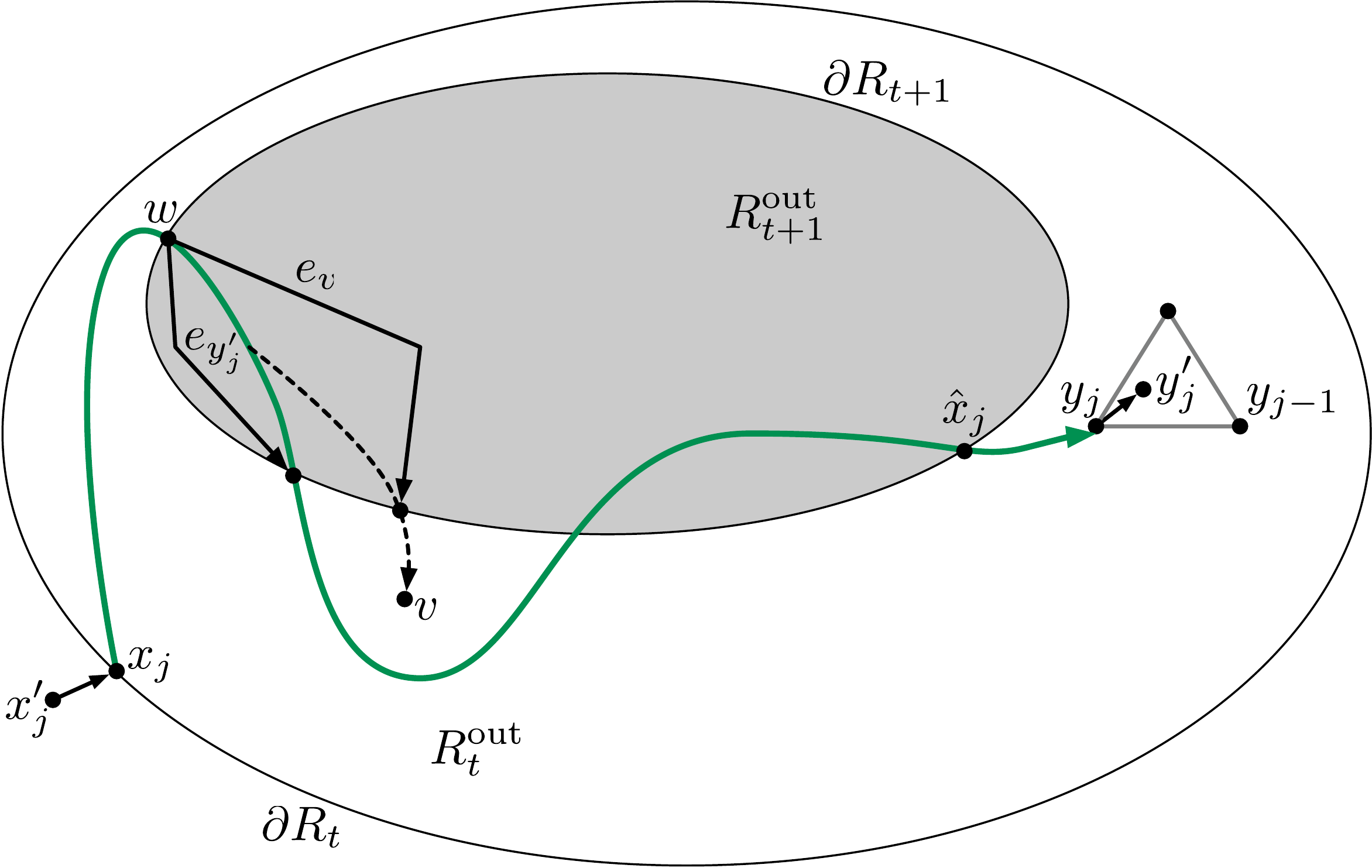}}
    &\scalebox{.32}{\includegraphics{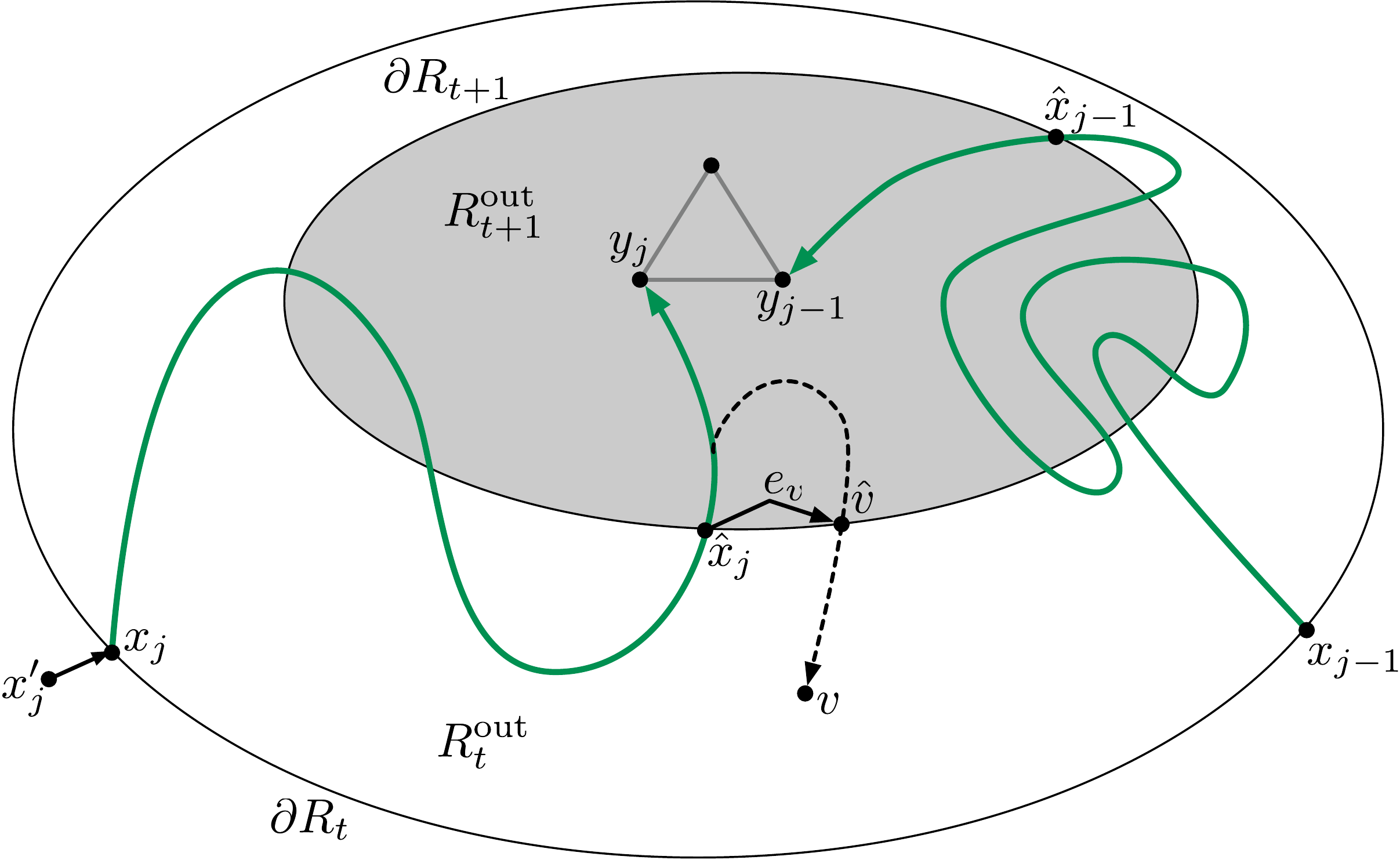}}\\
    &\\
    (b) & (c)
    \end{tabular}
    \caption{(a) The chord $C\in \mathcal{C}_1\cup\mathcal{C}_2$ corresponds to a shortest path,
    which may pass through $R_{t+1}^{\out}$, in which case it is represented in the \MSSP{}
    structure with shortcut edges (solid, angular edges).  
    (b) The chord $C = \protect\overrightarrow{x_jy_jy_{j-1}x_{j-1}}$ is in $\mathcal{C}_3$, and
    $f$ lies in $R_{t}^{\out}\cap R_{t+1}$.   This is handled similarly to (a).
    (c) Here $f$ lies in $R_{t+1}^{\out}$, $\hat{x}_j,\hat{x}_{j-1}$ are the 
    last $\boundary R_{t+1}$ vertices on
    the $s_j$-$y_j$ and $s_{j-1}$-$y_{j-1}$ paths.  
    If the shortest $x_j'$-$\hat{x}_j$ 
    and $x_j'$-$v$ paths branch, we can answer the query as in (b).  
    If $x_j'$-$\hat{x}_j$ is a prefix of $x_j'$-$v$, $e_v = (\hat{x}_j,\hat{v})$,
    and $\hat{v}\in \boundary R_{t+1}$, then we can use the clockwise
    order of $\hat{x}_j,\hat{v},\hat{x}_{j-1}$ around the hole on $\boundary R_{t+1}$
    to determine whether $v$ lies to the right of $C$.
    (Not depicted: the case when $\hat{v}\not\in \boundary R_{t+1}$.)
    }
    \label{fig:leftright}
\end{figure}
\paragraph{Case 2.}
Now suppose $C = \chord{x_jy_jy_{j-1}x_{j-1}}$ 
is the one chord in $\mathcal{C}_3$.
Consider the following distance function $\hat{d}$ for 
vertices in $z\in R_t^{\out}$:
\[
\hat{d}(z) = \min\Big\{
\dist_{G}(u_i,x_j) + \dist_{G}(x_j,z),\; \;
                        \dist_{G}(u_i,x_{j-1}) + \dist_{G}(x_{j-1},z)\Big\}.
\]
Observe that the terms involving $u_i$ are stored in part (E) 
and, if $z\in R_t^{\out}\cap R_{t+1}$, the other terms can be queried 
in $O(\kappa\log\log n)$ time using part (A).
It follows that the shortest path forest w.r.t.~$\hat{d}$ 
has two trees, rooted at $x_j$ and $x_{j-1}$.
Using part (A) of the data structure
we compute $\hat{d}(v)$, 
which reveals the $j^\star \in \{j,j-1\}$ 
such that $v$ is in $x_{j^\star}$'s tree.
At this point we break into two cases, depending
on whether $f$ is in $R_{t}^{\out}\cap R_{t+1}$, or in $R_{t+1}^{\out}$.
We assume $j^\star = j$ without loss of generality and depict only
this case in Figure~\ref{fig:leftright}(b,c).

\paragraph{Case 2a.}
Suppose that $f$ is in $R_t^{\out}\cap R_{t+1}$.
Let $y'_{j}$ be a pendant vertex attached to $y_{j}$ embedded 
inside $f$ and let $x_j'$ be a pendant attached to $x_j$ embedded 
in the face on $\boundary R_t$.
The shortest $x_{j}'$-$y_{j}'$ and $x_{j}'$-$v$ paths
diverge at some point.
We query the \MSSP{} structure (part (A)) 
to get the least common ancestor $w$ of $y_j'$ and $v$
and the three edges $e_{y_j'},e_v,e_w$ around $w$,
then determine the left/right relationship as in Case 1.
(If $j^\star = j-1$ then we would reverse the answer
due to the reversed orientation of the $x_{j-1}$-$y_{j-1}$ 
subpath w.r.t.~$C$.)
Once again, some of $e_{y_j'},e_v,e_w$ may be shortcut edges 
between $\boundary R_{t+1}$-vertices or artificial pendant edges.
See Figure~\ref{fig:leftright}(b)

\paragraph{Case 2b.}
Now suppose $f$ lies in $R_{t+1}^{\out}$.
We get from part (E) the last vertices 
$\hat{x}_j, \hat{x}_{j-1} \in \boundary R_{t+1}$ 
that lie on the $s_j$-$y_j$ and $s_{j-1}$-$y_{j-1}$ shortest paths.
We ask the \MSSP{} structure of part (A) for the least common ancestor
$w$ of $\hat{x}_j$ and $v$ in the shortcutted SSSP tree rooted at $x_j'$,
and also get the three incident edges $e_{\hat{x}_j},e_v,e_w$.
The edges $e_v$ and $e_w$ exist and are different, 
but $e_{\hat{x}_j}$ may not exist if $w=\hat{x}_j$, i.e., 
if $v$ is a descendant of $\hat{x}_j$.  
If all three edges $\{e_{\hat{x}_j},e_v,e_w\}$ exist 
we can determine whether $v$ lies to the right of $C$ as in Case 1 or 2a.

\paragraph{Case 2b(i).}
Suppose $w = \hat{x}_j$ and $e_{\hat{x}_j}$ does not exist.
Let $e_v = (\hat{x}_j,\hat{v})$.
If $\hat{v} \in \boundary R_{t+1}$ then $e_v$ represents a path that is completely
contained in $R_{t+1}^{\out}$.  Thus, if we walk clockwise around
the hole of $R_{t+1}^{\out}$ on $\boundary R_{t+1}$ 
and encounter $\hat{x}_j,\hat{v},\hat{x}_{j-1}$ in that order then 
$v$ lies to the right of $C$, and if we encounter them in the reverse
order then $v$ lies to the left of $C$.  See Figure~\ref{fig:leftright}(c).

\paragraph{Case 2b(ii).}
Finally, suppose 
$\hat{v} \not\in \boundary R_{t+1}$ 
and $e_v = (\hat{x}_j,\hat{v})$ is a normal edge in $G$.
Redefine $e_{\hat{x}_j}$ to be the first edge on the 
path from $\hat{x}_j$ to $y_j$.\footnote{We could store
$e_{\hat{x}_j}$ in part (E) of the data structure but that is not necessary.
If $e_0,e_1$ are the edges adjacent to $\hat{x}_j$ on the boundary 
cycle of $\boundary R_{t+1}$, then we can
use any member of $\{e_0,e_1\}\backslash \{e_w\}$ as a proxy
for $e_{\hat{x}_j}$.}
Now we can determine if $v$ is to the right of $C$
by looking at the clockwise order of $e_{w},e_{v},e_{\hat{x}_j}$ 
around $\hat{x}_j$.
\end{proof}

\medskip

As pointed out in Remark~\ref{remark:PointLocate}, 
Lemma~\ref{lem:MSSP} does not immediately imply
that Line 6 of $\SitePathIndicator$ and Line 1 of $\PieceSearch$
can be implemented efficiently.  Gawrychowski et al.'s~\cite{GawrychowskiMWW18}
implementation of $\PointLocate$ requires \MSSP{} access to $R_t^{\out}$,
whereas part (A) only lets us query vertices in $R_t^{\out}\cap R_{t+1}$.
Gawrychowski et al.'s algorithm is identical to $\CentroidSearch$, 
except that $\Navigation$ is done directly with \MSSP{} structures.
Suppose we are currently at $f^*$ in the centroid decomposition, 
with $y_j,s_j$ defined as usual.  Gawrychowski's algorithm finds
$j$ minimizing $\omega(s_j)+\dist_{R_t^{\out}}(s_j,v)$ using three
distance queries to the \MSSP{} structure, then decides whether
the $s_j'$-$v$ shortest path is a prefix of the $s_j'$-$y_j'$ shortest path,
and if not, which direction it branches in.\footnote{$s_j',y_j'$ being pendant
vertices attached to $s_j,y_j$, as in Lemma~\ref{lem:chordside}.}
If $f$ is in $R_{t}^{\out}\cap R_{t+1}$ we can proceed exactly 
as in Gawrychowski et al.~\cite{GawrychowskiMWW18}.
If not, we retrieve from part (E) the last vertex $\hat{x}$ 
of $\boundary R_{t+1}$ on the $s_j$-$y_j$ shortest path,
use $\hat{x}$ in lieu of $y_j'$ for the LCA queries, and tell whether
the $s_j'$-$v$ path branches to the right exactly as in 
Lemma~\ref{lem:chordside}, Case 2b.

\section{Analysis}\label{sect:analysis}

This section constitutes a proof of the claims of Theorem~\ref{thm:maintheorem}
concerning space complexity and query time; refer to Appendix~\ref{sect:construction}
for an efficient construction algorithm.

Combining Lemmas~\ref{lem:MSSP} and \ref{lem:PointLocate} (see Section~\ref{sect:sidequeries}),
$\PointLocate$ runs in $O(\kappa\log n\log\log n)$ time.
Together with Lemma~\ref{lem:chordside} it follows that
$\PieceSearch$ also takes $O(\kappa\log n\log\log n)$ time.
$\SitePathIndicator$ uses $\PointLocate$, the $\MSSP$ structure,
and $O(1)$-time tree operations on $T_q^{R_i}$ and the $\vec{r}$-hierarchy 
like least common ancestors and level ancestors~\cite{HarelT84,BenderF00,BenderF04,Hagerup20}.
Thus $\SitePathIndicator$ also takes $O(\kappa\log n\log\log n)$ time.
The calls to $\MaximalChord$ and $\AdjacentPiece$ in $\ChordIndicator$
take $O(\log n\log\log n)$ time by Lemma~\ref{lem:piecetree}, and testing
which side of a chord $v$ lies on takes $O(\kappa\log\log n)$ 
time by Lemma~\ref{lem:chordside}.  The bottleneck in $\ChordIndicator$
is still $\PieceSearch$, which takes $O(\kappa\log n\log\log n)$ time.
The only non-trivial parts of $\Navigation$ are calls to
$\SitePathIndicator$ and $\ChordIndicator$, so it, too, 
takes $O(\kappa\log n\log\log n)$ time.

An initial call to $\CentroidSearch$ (Line 5 of $\Dist$) 
generates at most $\log n$
recursive calls to $\CentroidSearch$, culminating in the last recursive call
making 1 or 2 calls to $\Dist$ with the ``$i$'' parameter incremented.
Excluding the cost of recursive calls to $\Dist$, 
the cost of $\CentroidSearch$ is dominated by calls to $\Navigation$,
i.e., an initial call to $\CentroidSearch$ costs 
$\log n \cdot O(\kappa\log n\log\log n) = O(\kappa\log^2 n\log\log n)$ time.
Let $T(i)$ be the cost of a call to $\Dist(u_i,v,R_i)$.  We have
\begin{align*}
    T(m-1) &= O(\kappa\log\log n)                       & \mbox{$\Dist$ returns at Line 2 with one \MSSP{} query}\\
    T(i)    &= 2T(i+1) + O(\kappa\log^2 n\log\log n)
\end{align*}
It follows that the time to answer a distance query is 
$T(0) = O(2^m\cdot \kappa\log^2 n\log\log n)$.

\medskip

The space complexity of each part of the data structure is as follows.
(A) is $O(\kappa m n^{1+1/m+1/\kappa})$
by Lemma~\ref{lem:MSSP} and the fact that $r_{i+1}/r_i = n^{1/m}$.
(B) is $O(mn^{1+1/(2m)})$ since $\sqrt{r_{i+1}/r_i} = n^{1/(2m)}$.
(C) is $O(mn)$ since $\sum_{i} n/r_i\cdot (\sqrt{r_i})^2 = O(mn)$.
(D) is $O(mn\log n)$ by Lemma~\ref{lem:partD},
and
(E) is $O(m)$ times the space cost of (B) and (C), namely
$O(m^2 n^{1+1/(2m)})$.  The bottleneck is (A).

\medskip

We now explain how $m,\kappa$ can be selected to achieve the 
extreme space and query complexities claimed Theorem~\ref{thm:maintheorem}.
To optimize for query time, pick $\kappa = m$ to be any function 
of $n$ that is $\omega(1)$ and $o(\log\log n)$.  Then
the query time is
\[
O(2^m\kappa \log^2 n\log\log n) = \log^{2+o(1)} n
\]
and the space is
\[
O(m\kappa n^{1+1/m+1/\kappa})=n^{1+o(1)}.
\]
To optimize for space, choose $\kappa = \log n$ 
and $m$ to be a function that is 
$\omega(\log n/\log\log n)$ and $o(\log n)$.
Then the space is
\[
O\left(m\kappa n^{1+1/m+1/\kappa}\right)
=
o\left(n^{1+1/m}\log^{2}n\right)
=
n\cdot 2^{o(\log\log n)}\cdot \log^{2} n
=
n\log^{2+o(1)}n,
\]
and the query time
\[
O(2^{m}\kappa\log^{2}n \log\log n)
=
2^{o(\log n)}\log^{3}n\log\log n
=
n^{o(1)}.
\]

\subsection{Speeding Up the Query Time}

Observe that the space of (B) is asymptotically smaller than the space of (A).
Replace (B) with (B')
\begin{enumerate}
    \item[(B')] {\bf (Voronoi Diagrams)}
    Fix $i$, a region $R_{i}\in\mathcal{R}_{i}$ with ancestors 
    $R_{i+1}\in \mathcal{R}_{i+1}$ and $R_{i+4} \in \mathcal{R}_{i+4}$.
    For each $q \in \boundary R_{i}$ store 
    \begin{align*}
    \VDout(q,R_{i+1}) &= \VD^*[R_{i+1}^{\out},\boundary R_{i+1},\omega]\\
    \VDfarout(q,R_{i+4}) &= \VD^*[R_{i+4}^{\out},\boundary R_{i+4},\omega] & \mbox{only if $i<m-4$}
    \end{align*}
    with $\omega(s)=\dist_G(q,s)$ in both cases.
    Over all regions $R_i$, the space for storing all $\VDout$s
    is $\tilde{O}(n^{1+1/(2m)})$ since $\sqrt{r_{i+1}/r_i} = n^{1/(2m)}$
    and the space for $\VDfarout$s is $\tilde{O}(n^{1+2/m})$ since 
    $\sqrt{r_{i+4}/r_i}=n^{2/m}$.
\end{enumerate}

Now the space for (A) is 
$\tilde{O}(n^{1+1/m+1/\kappa})=\tilde{O}(n^{1+2/m})$ is balanced
with (B').  
In the $\Dist$ function we now consider three possibilities.
If $v\in R_{i+1}$ we use part (A) to solve the problem without recursion.
If $v\not\in R_{i+1}$ but $v\in R_{i+4}$ we proceed as usual,
calling $\CentroidSearch(\VDout(u_i,R_{i+1}),v,\cdot)$,
and if $v\not\in R_{i+4}$ we call
$\CentroidSearch(\VDfarout(u_i,R_{i+4}),v,\cdot)$.
Observe that the depth of the $\Dist$-recursion 
is now at most $t/4+O(1) < m/4+O(1)$, giving us
a query time of $O(m2^{m/4}\log^2 n\log\log n)$
with space $\tilde{O}(n^{1+2/m})$.

\section{Conclusion}

In this paper we have proven that it is possible to simultaneously 
achieve optimal space \underline{\emph{or}} query time, up to a $\log^{2+o(1)} n$ factor, 
and near-optimality in the other complexity measure, 
up to an $n^{o(1)}$ factor.  The main open question in this area
is whether there exists an exact distance oracle with 
$\tilde{O}(n)$ space and $\tilde{O}(1)$ query time.
This will likely require new insights into the structure
of shortest paths, which could lead, for example, 
to storing correlated versions of Voronoi diagrams more efficiently, 
or avoiding the binary branching recursion in our query algorithm.

\medskip
\paragraph{Acknowledgements.} We thank Danny Sleator and Bob Tarjan for discussing 
                                update/query time tradeoffs for dynamic trees.

\bibliographystyle{plain}
\bibliography{references}

\appendix

\section{\MSSP{} via Euler Tour Trees (Proof of Lemma~\ref{lem:MSSP})}\label{sect:Euler}

Let us recall the setup.  We have a planar graph $H$ with a distinguished face $f$, and wish to answer
$\dist_H(s,v)$ queries w.r.t.~any $s$ on $f$ and $v\in V(H)$, and LCA queries w.r.t. any $s$ on $f$
and $u,v\in V(H)$.  Klein~\cite{Klein05} proved that if we move the source vertex $s$ around $f$ and 
record all the changes to the SSSP tree, every edge in $E(H)$ can be swapped into and out of the SSSP 
at most once, i.e., there are $O(|H|)$ updates in total.  
Thus, if we maintain the SSSP tree as the source travels around $f$ 
in a dynamic data structure with update time $U$ and query time $T$ (for distance and LCA queries), 
the universal persistence method for RAM data structures (see~\cite{Dietz89}) yields
an \MSSP{} data structure with space $O(|H|U)$ and query time $O(T\log\log|H|)$.
Thus, to establish Lemma~\ref{lem:MSSP} it suffices to design a dynamic data structure for the following:

\begin{description}
\item[InitTree$(s^\star,T)$:] Initialize a directed spanning tree $T$ from root $s^\star$.  Edges have real-valued lengths.
\item[Swap$(v,p,l)$:] Let $p'$ be the parent of $v$; $p$ is not a descandant of $v$.  
        Update $T \gets T-\{(p',v)\}\cup \{(p,v)\}$, where $(p',v)$ has length $l$.
\item[Dist$(v)$:] Return $\dist_T(s^\star,v)$.
\item[LCA$(u,v)$:] Return the LCA $y$ of $u$ and $v$ and the first edges $e_u,e_v$
                 on the paths from $y$ to $u$ and from $y$ to $v$, respectively.
\end{description}

Here $s^\star$ will be a fixed root vertex embedded in $f$ with a single, weight-zero, 
out-edge to the current root on $f$.  Changes to the SSSP tree are effected with $O(|H|)$
\textbf{Swap} operations.
Klein~\cite{Klein05} and Gawrychowski~\cite{GawrychowskiMWW18} use 
Sleator and Tarjan's Link-Cut trees~\cite{SleatorT83},
which support \textbf{Swap}, \textbf{Dist}, and \textbf{LCA} (among other operations) in $O(\log|T|)$ time.
We will use a souped-up version of Henzinger and King's~\cite{HenzingerK99} Euler Tour trees.
Let $\ET(T)$ be an Euler tour of $T$ 
starting and ending at $s^\star$.  
The elements of $\ET(T)$ are edges,
and each edge of $T$ appears twice in $\ET(T)$, once in each direction.  Each edge in $T$ points to its two occurrences in $\ET(T)$.

\begin{figure}
    \centering
    \begin{tabular}{c@{\hspace{2cm}}c}
    \scalebox{.4}{\includegraphics{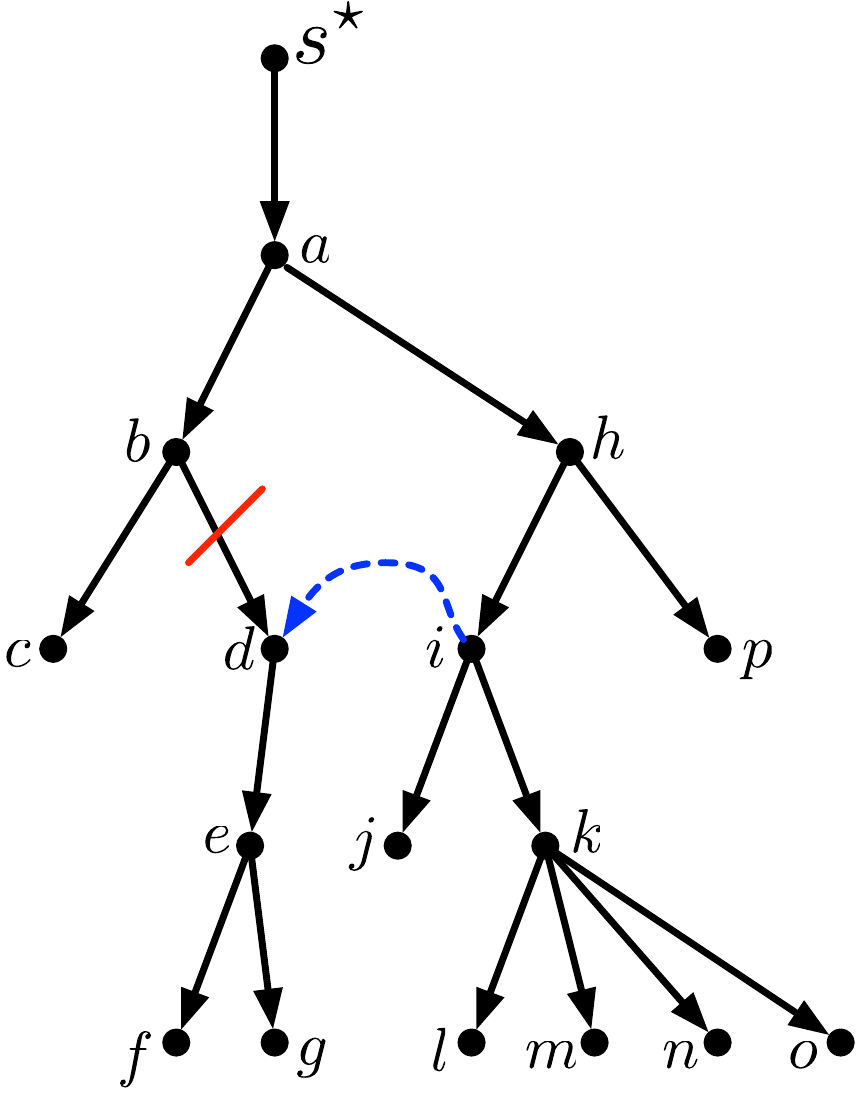}}
    &\scalebox{.4}{\includegraphics{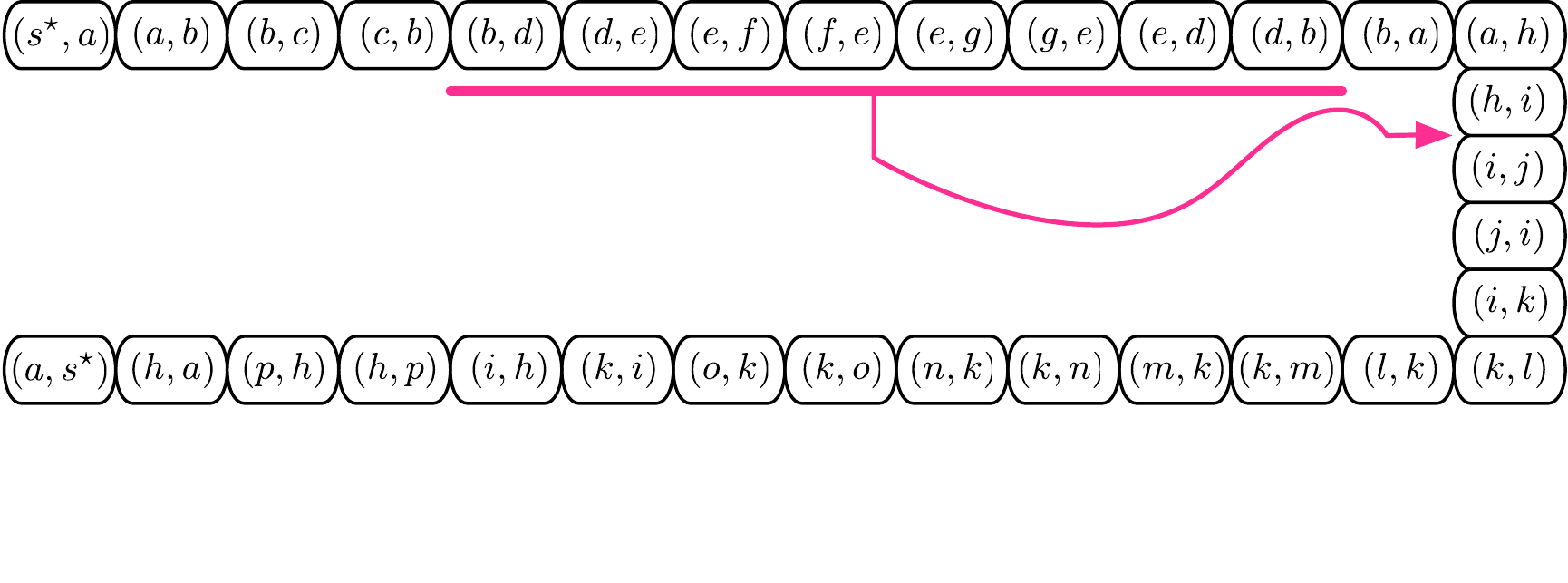}}
    \end{tabular}
    \caption{The effect of \textbf{Swap}$(d,i,\cdot)$ on the Euler Tour.  The interval $((b,d),(d,e),\cdots,(e,d),(d,b))$
    is spliced out and inserted between
    $(h,i)$ and $(i,j)$, and the elements $(b,d),(d,b)$ are renamed $(i,d),(d,i)$.}
    \label{fig:ET}
\end{figure}
Suppose $T_{\operatorname{ante}}$ is the tree before a \textbf{Swap} operation and $T_{\operatorname{post}}$ the tree afterward.
It is easy to see that $\ET(T_{\operatorname{post}})$ can be derived from $\ET(T_{\operatorname{ante}})$ by $O(1)$ splits
and concatenates, and renaming the two elements corresponding to the swapped edge. 
See Figure~\ref{fig:ET}.
We will argue that
the dynamic tree operations \textbf{Swap}, \textbf{Dist}, \textbf{LCA} can be implemented
using the following list operations.

\begin{description}
\item[InitList$(L)$:] Initialize a list $L$ of weighted elements.
\item[Split$(e_0)$:] Element $e_0$ appears in some list $L$.  Split $L$ immediately after element $e_0$, resulting in two lists.
\item[Concatenate$(L_0,L_1)$:] Concatenate $L_0$ and $L_1$, resulting in one list.
\item[Add$(e_0,e_1,\delta)$:] Here $e_0,e_1$ are elements of the same list $L$.
                Add $\delta\in \mathbb{R}$ to the weight of all elements in $L$ between $e_0$ and $e_1$ inclusive.
\item[Weight$(e_0)$:] Return the weight of $e_0$.
\item[RangeMin$(e_0,e_1)$:] Return the minimum-weight element between $e_0$ and $e_1$ inclusive. 
                            If there are multiple minima, return the first one.
\end{description}

To implement \textbf{Dist} and \textbf{LCA} we will 
actually use the list data structure
with different weight functions.  For \textbf{Dist}, 
the weight of an edge $(x,y)$ in $\ET(T)$ is 
$\dist_T(s^\star,y)$.  Thus, \textbf{Dist}
is answered with a call to \textbf{Weight}.  
Each \textbf{Swap}$(v,p,l)$ is effected with $O(1)$ \textbf{Split}
and \textbf{Concatenate} operations, renaming the elements 
of the swapped edge, 
as well as one \textbf{Add}$(e_0,e_1,\delta)$ operation.
Here $(e_0,\ldots,e_1)$ is the sub-list corresponding 
to the subtree rooted at $v$, 
and $\delta = \dist_{T_{\operatorname{post}}}(s^\star,v) - \dist_{T_{\operatorname{ante}}}(s^\star,v)$ is the change
in distance to $v$, and hence all descendants of $v$.  

To handle \textbf{LCA} queries, we use
the list data structure where the weight of $(x,y)$ is the depth of $y$ in $T$, i.e., the distance from $s^\star$ to $y$ under the unit length function.  
Once again, a \textbf{Swap}
is implemented with $O(1)$ \textbf{Split} and
\textbf{Concatenate} operations, and one \textbf{Add} operation.  Consider an \textbf{LCA}$(u,v)$ query.  
Let $e_0=(p_u,u),e_1=(p_v,v)$ be the edges into $u$ and $v$ from their respective parents,
and suppose that $e_0$ appears before $e_1$ in $\ET(T)$.\footnote{As 
we will see, it is easy to determine which comes first.}
A call to \textbf{RangeMin}$(e_0,e_1)$ returns the \emph{first} edge
$\hat{e} = (x,y)$ in the interval $(e_0,\ldots,e_1)$ minimizing the depth of $y$.
It follows that $y$ is the LCA of $u$ and $v$.  Furthermore, by the tiebreaking rule,
if $\hat{e} \neq e_0$ then $\hat{e}=e_u$ is the (reversal of the) edge leading from $y$ towards $u$.
If $\hat{e} = e_0$ then $v$ is a descendant of $u$ and $e_u$ does not exist.
To find $e_v$, we retrieve the edge $\tilde{e}=(y,p_y)$ in $\ET(T)$ from $y$ to its parent
and let $\tilde{e}'$ be its predecessor in $\ET(T)$.
(Note that since $s^\star$ has degree 1, $\tilde{e},\tilde{e}'$ always exist.)
We call \textbf{RangeMin}$(e_1,\tilde{e}')$.
Once again, by the tiebreaking rule it returns the 
\emph{first} edge $e_v = (x',y)$ incident to $y$ in $(e_1,\ldots,\tilde{e}')$, 
which is the (reversal of the) first edge on the path from $y$ to $v$.
See Figure~\ref{fig:ET-LCA}.

\begin{figure}
    \centering
    \scalebox{.4}{\includegraphics{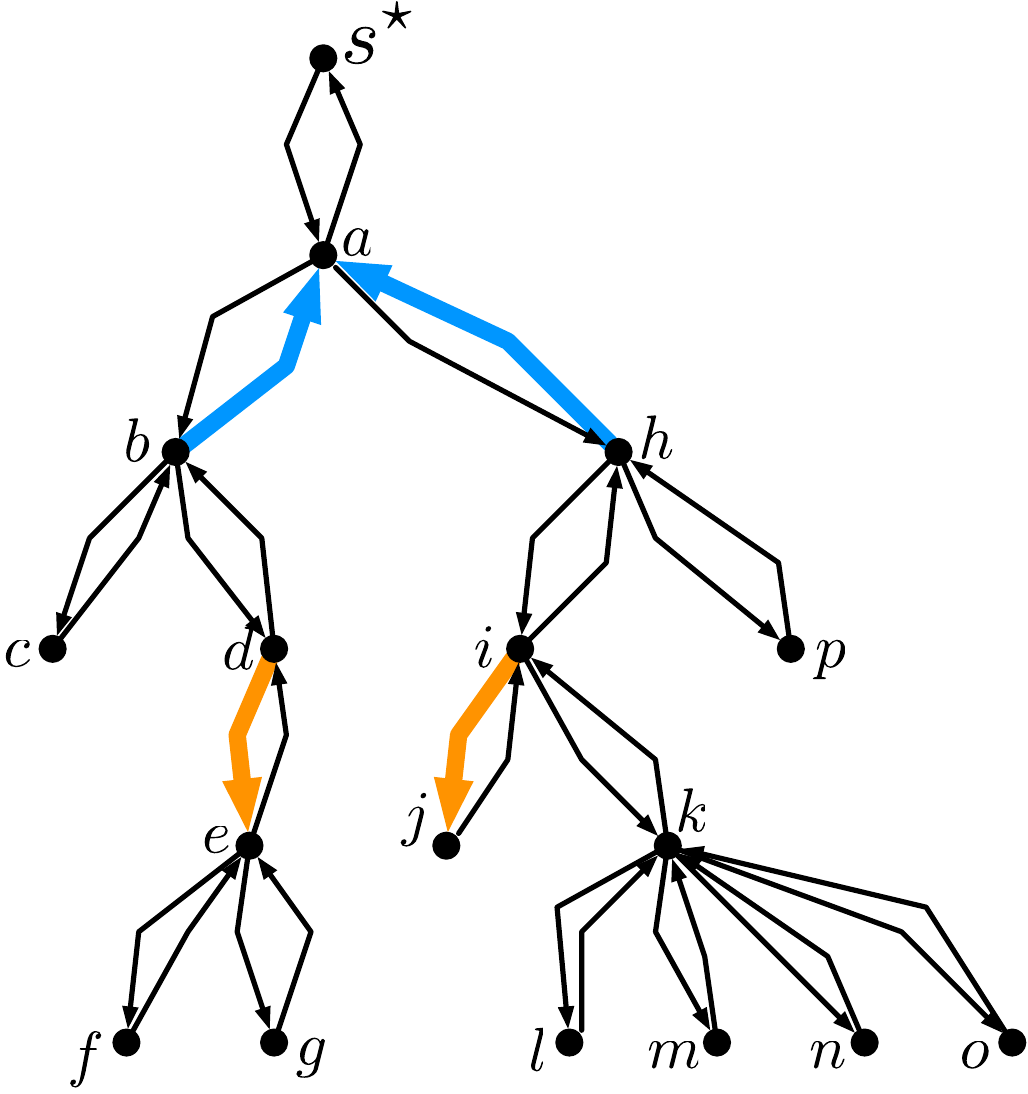}}
    \caption{An illustration of an \textbf{LCA}$(e,j)$
    query.  We do a \textbf{RangeMin} query on
    the interval $e_0=(d,e),\ldots,(i,j)=e_1$ and retrieve the edge $\hat{e}=e_e=(b,g)$ with weight $\operatorname{depth}_T(g)$.
    We then find $\tilde{e}=(g,s^\star)$ and 
    its predecessor $\tilde{e}'=(h,g)$.
    Another \textbf{RangeMin} query on the interval
    $(i,j),\ldots,(h,g)$ returns $e_j = (h,g)$.}
    \label{fig:ET-LCA}
\end{figure}

\medskip

We have reduced our dynamic tree problem to a dynamic
weighted list problem.  We now explain how the dynamic
list problem can be solved with balanced trees.

Fix a parameter $\kappa\ge 1$ and let $n$ be the total number of elements in all lists.
We now argue that \textbf{Split}, \textbf{Concatenate}, and \textbf{Add} can be implemented in
$O(\kappa n^{1/\kappa})$ time and \textbf{Weight} and \textbf{RangeMin} take $O(\kappa)$ time.
We store the elements of each list $L$ at the leaves of a rooted tree $\mathcal{T}(L)$.  
It satisfies the following invariants.
\begin{enumerate}
    \item[I.] Each node $\gamma$ of $\mathcal{T}(L)$ stores a weight 
    offset $w(\gamma)$, a min-weight value $\min(\gamma)$ and 
    a pointer $\operatorname{ptr}(\gamma)$.  The weight of (leaf) 
    $e\in L$ is the sum of the $w(\cdot)$-values of its ancestors, including $e$.
    The sum of $\min(\gamma)$ and the $w(\cdot)$-values of all strict ancestors of $\gamma$
    is exactly the weight of the minimum weight descendant of $\gamma$, and $\operatorname{ptr}(\gamma)$
    points to this element.
    \item[II.] Non-root internal nodes have between $n^{1/\kappa}$ and $3n^{1/\kappa}$ children. 
    In particular, the tree has height at most $\kappa$.
    \item[III.] Each internal node $\gamma$ maintains an $O(1)$-time 
    range minimum structure~\cite{BenderF00}
    over the vector of $\min(\cdot)$-values of its children.
\end{enumerate}

It is easy to show that \textbf{Split} and \textbf{Concatenate} can be implemented to satisfy
Invariant II by destroying/rebuilding $O(1)$ nodes at each level of $\mathcal{T}$.  Each costs
$O(n^{1/\kappa})$ time to update the information covered by Invariants I and III. 
The total time is therefore $O(\kappa n^{1/\kappa})$.  By Invariant I, a \textbf{Weight}$(e_0)$
query takes $O(\kappa)$ time to sum all of $e_0$'s ancestors' $w(\cdot)$-values.  
Consider an \textbf{Add}$(e_0,e_1,\delta)$ or \textbf{RangeMin}$(e_0,e_1)$ operation.
By Invariant II, the interval $(e_0,\ldots,e_1)$ is covered by $O(\kappa n^{1/\kappa})$ $\mathcal{T}$-nodes,
and furthermore, those nodes can be arranged into less than $2\kappa$ contiguous intervals of siblings.
Thus, an \textbf{Add}$(e_0,e_1)$ can be implemented in $O(\kappa n^{1/\kappa})$ time by
adding $\delta$ to the $w(\cdot)$-values of these nodes and rebuilding the affected 
range-min structures from Invariant III.
A \textbf{RangeMin} is reduced to $O(\kappa)$ range-minimum queries (from Invariant III) 
and adjusting the answers by the $w(\cdot)$-values of their ancestors (Invariant I).
Each range-min query takes $O(1)$ time and there are $O(\kappa)$ ancestors with relevant $w(\cdot)$-values.
Thus \textbf{RangeMin} takes $O(\kappa)$ time.

\medskip

We have shown that the dynamic tree operations necessary for an \MSSP{} structure
can be implemented with a flexible tradeoff between update time and query time.
Moreover, this lower bound meets the \Patrascu-Demaine lower bound~\cite{PatrascuD06}.
We leave it as an open problem to implement the full complement of operations supported
by Link-Cut trees, with update time $O(\kappa n^{1/\kappa})$ and query time $O(\kappa)$.

\ignore{
In this section, we present a different way to construct the MSSP data structures for a weighted planar graph $H$ and a specific face $f$ with the same functions and interfaces with those of the traditional MSSP data structures in section 2.3. The new storage occupies $O(\kappa|H|^{1+1/\kappa})$ space and a query can be answered in $O(\kappa\log\log |H|)$ time.

Recall that a \MSSP{} data structure stores a series of SSSP trees rooted at vertices incident to $f$. When we move the root continuously around the boundary surface $f$, the number of changes in the SSSP tree is $O(H)$. We follow the idea that view these SSSP trees as different historical versions of a dynamic tree. By the generic persistence method, we can access different historical versions of the dynamic tree. To reach the tradeoff, instead of applying the powerful standard link-cut tree data structure, we use a concise data structure based on Euler tour technique, which maintains the Euler tour representation (ETR) of the dynamic tree. In what follows, we use $E(u)$ to point to the first appearance of a vertex $u$ in the ETR.

Insertions and deletions of edges happen in the directed dynamic SSSP tree when the root moves around the boundary of $f$. It is well known that each change causes constant times splitting and merging on corresponding Euler tour representations. This data structure should be able to answer two types of queries in lemma \ref{lem:MSSP}. To support the distance queries, we will append a distance field to each vertex (with multiplicity) in the ETR, storing the distance to the root. Each change of edges will cause an overall offset to the distance fields in an interval of the ETR, and the distance query is converted into a field query in the ETR. For the LCA queries, we append a level field to each vertex in the ETR, storing the level of the vertex in the tree. Maintaining the level field is also required the interval offset operation. The LCA query will be converted into a range minimum query, based on the fact that the LCA (denoted by $x$) of two vertices $u$ and $v$ is the vertex with the minimum level between $E(u)$ and $E(v)$. If we want to know $e_{u}$, the edge incident to $x$ on the path from $x$ to $u$, we turn to find $u'$, the child of $x$ on the $x$-$u$ path. In the ETR, $u'$ is exactly the leftmost vertex with the second minimum level between $E(u)$ and $E(v)$. Similarly, $v'$, the child of $v$ on the $x$-$v$ path, is the rightmost vertex with the second minimum level.

Overall, we need a data structure on a sequence, which supports update operations including splitting, merging and adding an offset to an interval. It should also be able to answer field query and interval minimum query (returning the first minimum, the leftmost second minimum and the rightmost second minimum). All those update operations and queries can be supported by a typical balanced binary tree, which brings the update time $O(\log |H|)$ and the query time $O(\log |H|)$. Using typical persistence technique, we can get a \MSSP{} data structure with space complexity $O(|H|\log |H|)$ and query time complexity $O(\log |H|)$, the same with that using persistent link-cut tree.

To get the space-time tradeoff of MSSP data structure, we build a $\kappa$-balanced tree on the ETR with size $L=O(|H|)$ with update time $O(\kappa |H|^{1/\kappa})$ and query time $O(\kappa)$. With the universal persistence technique using Van Emde Boas trees, we get a $O(\kappa|H|^{1+1/\kappa})$-space MSSP data structure with query time $O(\kappa \log\log |H|)$.

Without loss of generality, we assume $\kappa$ is an integer. The $\kappa$-balanced tree has exactly $\kappa + 1$ layers, with indexes starting from $0$ from bottom to top. Each vertex in layer $i\in[0,\kappa]$ represents a consecutive subsequence with size $O(L^{i/\kappa})$ and all the subsequences in layer $i$ form a partition of the ETR. Particularly, the unique vertex in layer $\kappa$ represents the whole ETR and there are exactly $L$ vertices in layer $0$, each of which represents a single element in the ETR. 
Besides, each vertex in layer $i\in[0,\kappa-1]$ has a parent in layer $i+1$, and each vertex's sequence is a subseqence of its parent's. Each vertex has $O(L^{1/\kappa})$ children. For this property, we will guarantee that the subsequence represented by each pair of adjacent vertices in layer $i$ with the same parent has size greater than $L^{i/\kappa}$.

Splitting a $\kappa$-balanced tree and merging two $\kappa$-balanced trees will cause $O(\kappa)$ times splitting and merging over vertices. We shall see that splitting a vertex or merging two vertices needs $O(L^{1/\kappa})$ time, so splitting operations or merging operations of $\kappa$-balanced trees takes $O(\kappa L^{1/\kappa})$ time. Interval offset operations also take $O(\kappa L^{1/\kappa})$ time, based on the fact that an arbitrary interval can be partitioned by $O(\kappa L^{1/\kappa})$ vertices.

To support the query, each vertex will record the distance offset and the level offset over the subsequence it represents, and as well as an RMQ data structure with construction time and size $O(L^{1/\kappa})$ which can answer the minimum in $O(1)$ time over arbitrary subsequence represented by its consecutive children. Therefore, rebuild a vertex use $O(L^{1/\kappa})$ time. For a field query on $E(u)$, the answer is the summation of offsets from $E(u)$ to the root of the $\kappa$-balanced tree, which takes $O(\kappa)$ time. For the interval minimum query between $E(u)$ and $E(v)$, the interval can be partitioned into $O(\kappa)$ parts and the answer of each part can be given by querying the RMQ data structure in the corresponding vertex.
}

\section{Multiple Holes and Nonsimple Cycles}\label{sect:MultipleHoles}

We have assumed for simplicity that all regions are bounded by a
simple cycle, and therefore have a single hole.  We now show
how these assumptions can be removed.  

Let us first illustrate how a 
region $R$ may get
a hole with a non-simple 
boundary cycle.
The hierarchical decomposition 
algorithm of Klein, Mozes, and Sommer~\cite{KleinMS13} produces a binary decomposition tree, 
of which our $\vec{r}$-decomposition is a coarsening.
It proceeds by finding a separating
cycle (as in Miller~\cite{Miller86}), and recursively decomposes
the graph inside the cycle and outside the cycle.\footnote{The Klein et al.~\cite{KleinMS13} algorithm rotates between finding
separators w.r.t.~number of vertices, number of boundary vertices, and number of holes, but this is not relevant to the present discussion.}
At intermediate stages the working graph contains several holes,
but Miller's theorem~\cite{Miller86} only guarantees that
a small cycle separator exists if the graph is triangulated.
To that end, the decomposition \cite{KleinMS13} puts an artificial vertex inside each hole and triangulates the hole.  See Figure~\ref{fig:SimpleCycle}(a,b).  If the cycle separator $C$
(blue cycle in Figure~\ref{fig:SimpleCycle}(b)) includes
a hole-vertex $v$, we splice out $v$ and replace it with an
interval of the boundary of the hole.  If $C$ also includes
edges on the boundary of the hole (Figure~\ref{fig:SimpleCycle}(c)), the modified
cycle may not be simple.  If this is the case, 
we ``cut'' along non-simple parts of the cycle,
replicating all such vertices and their incident cycle edges.
We then join pairs of identical vertices with zero-length
edges (pink edges in~Figure~\ref{fig:SimpleCycle}(c)),
and triangulate with large-length edges.
This transformation clearly preserves planarity and
does not change the underlying metric.\footnote{Given a
$\dist_G(u,v)$ query, we can map it to $\Dist(u',v',R_0)$,
where $u'$ and $v'$ are any of the copies of $u$ and $v$, 
respectively, and $R_0=\{u'\}$.}

\begin{figure}
    \centering
    \begin{tabular}{@{\hspace{1cm}}c@{\hspace{1cm}}c}
    \scalebox{.3}{\includegraphics{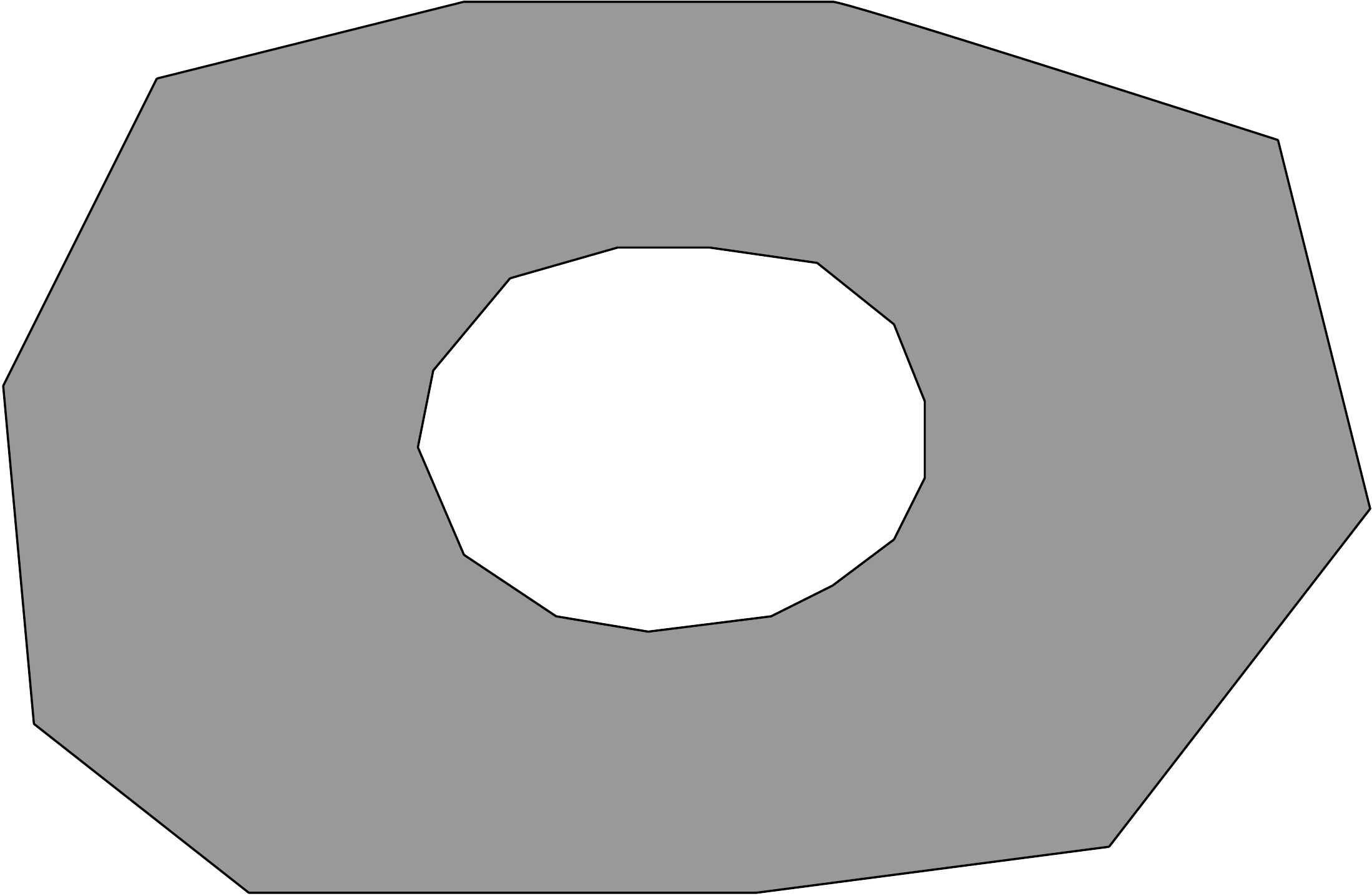}}
    &\scalebox{.3}{\includegraphics{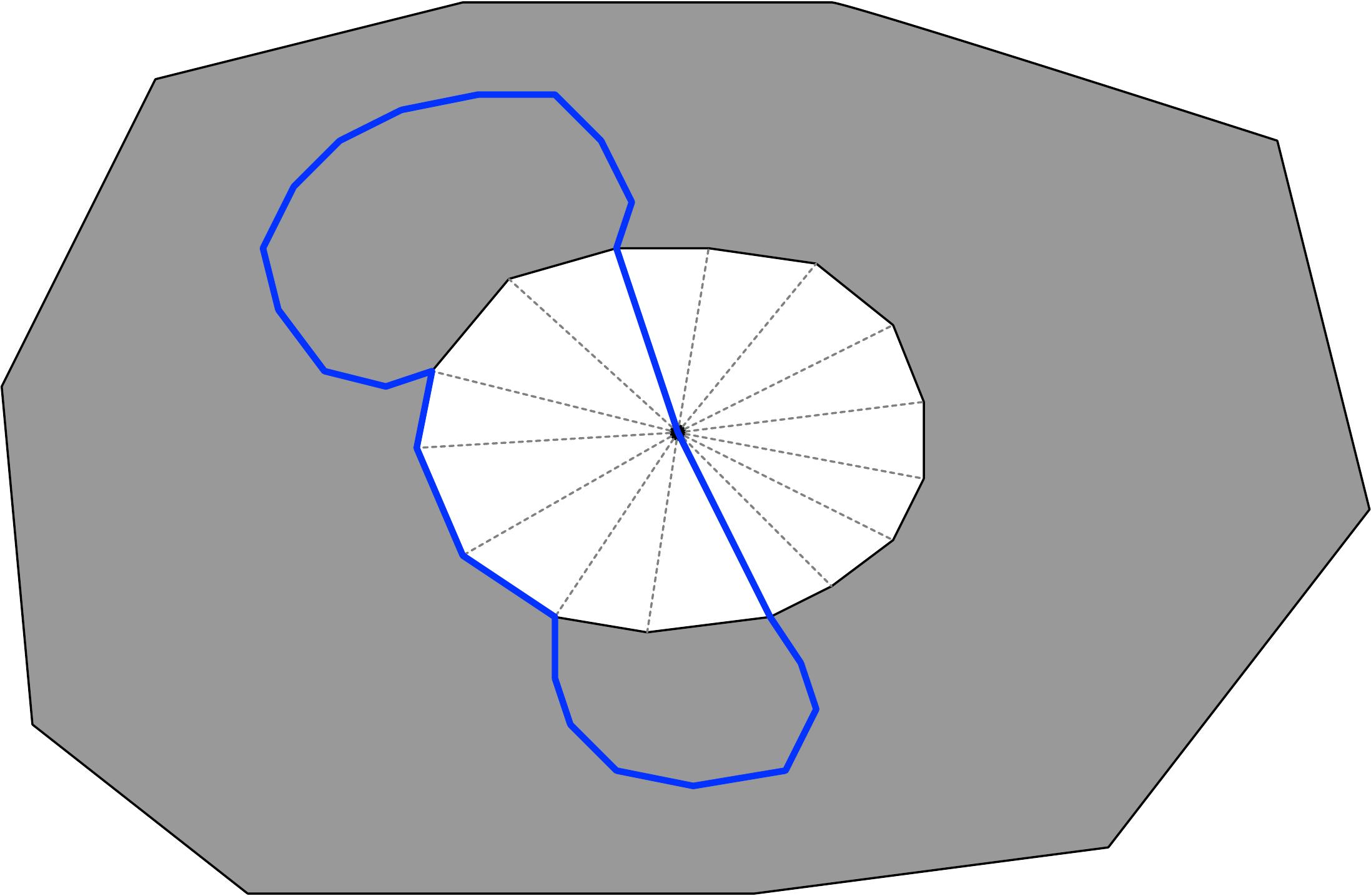}}\\
    &\\
    {\bf (a)} & {\bf (b)}\\
    &\\
    \scalebox{.3}{\includegraphics{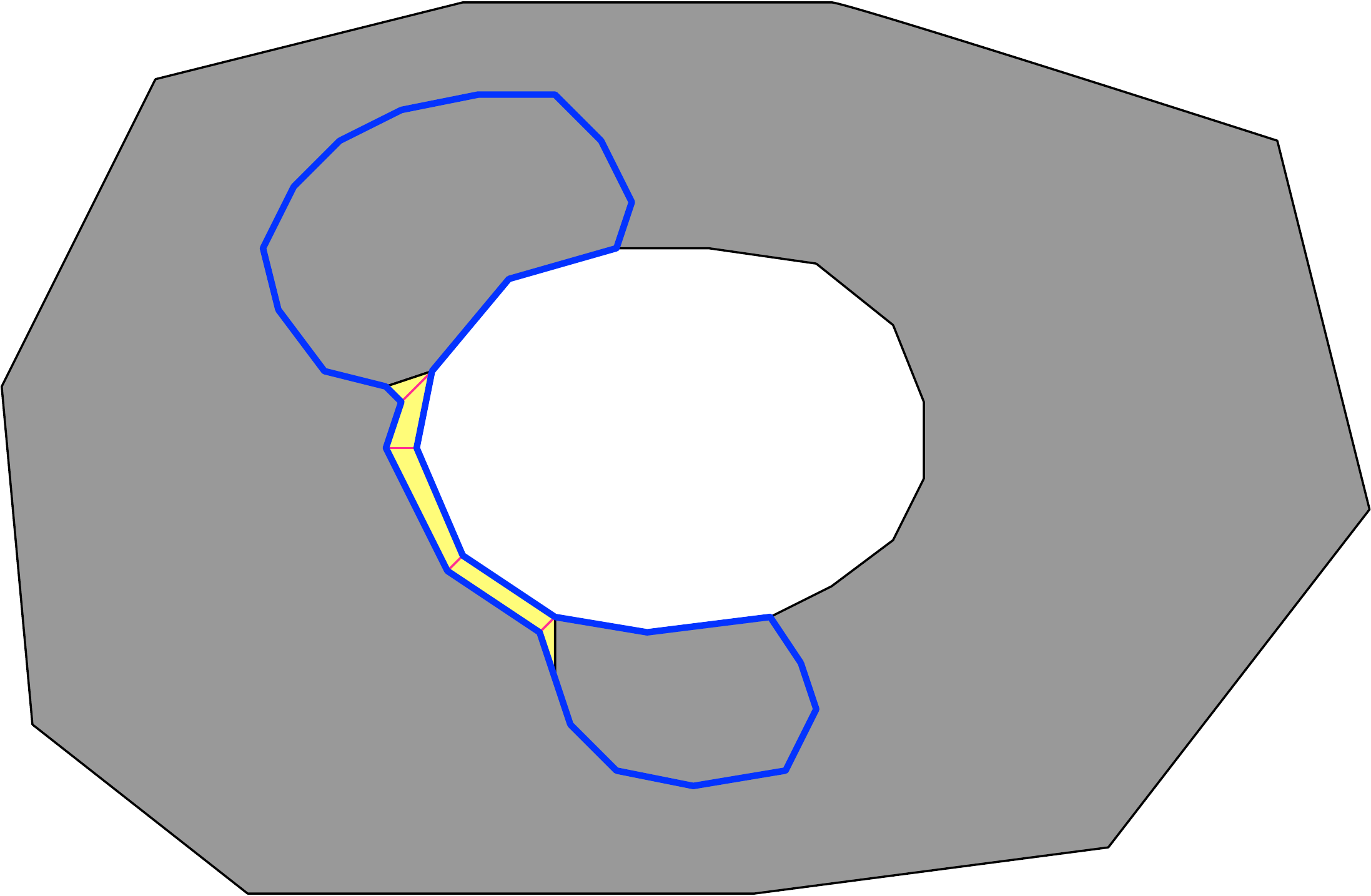}}
    &\scalebox{.3}{\includegraphics{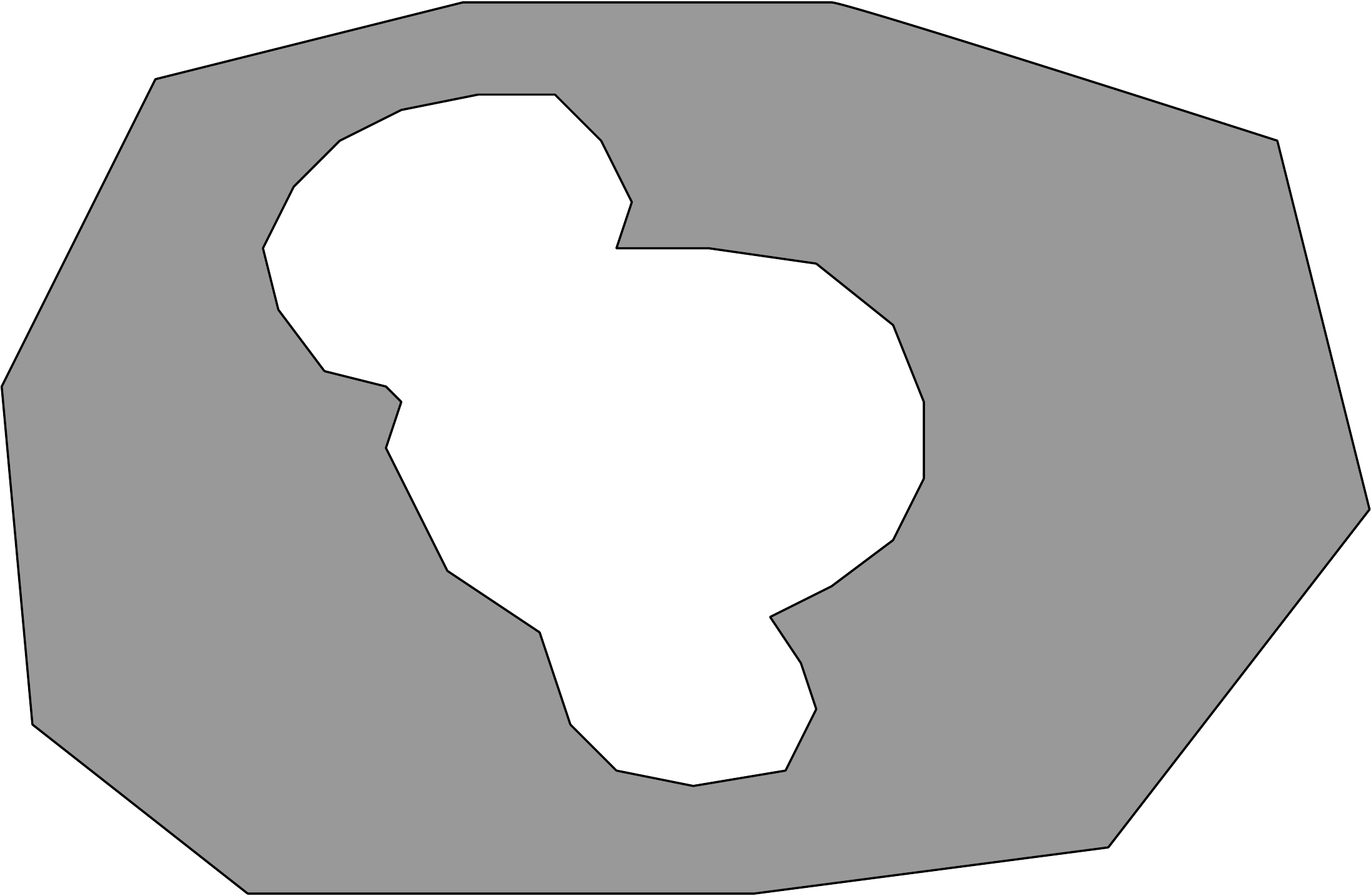}}\\
    &\\
    {\bf (c)} & {\bf (d)}
    \end{tabular}
    \caption{{\bf (a)} A subgraph with two holes. 
    {\bf (b)} We put a vertex in each hole and triangulate the hole.
    (The triangulation of the exterior hole is not drawn, for clarity.)
    A simple cycle sparator (blue curve) is found in this graph.
    {\bf (c)} The cycle is mapped to a possibly non-simple cycle
    in the original graph that avoids hole-vertices.  We cut along non-simple parts of the cycle, duplicating the vertices and their adjacent edges on the cycle.
    {\bf (d)} The graph remaining after removing the subgraph enclosed 
    by the cycle from {\bf (c)}.}
    \label{fig:SimpleCycle}
\end{figure}

Turning to the issue of multiple holes, 
we first make some observations about their structural organization.
Fix any hole $g$ of region $R_{i+1}$ and let $R_i$ be a child of 
$R_{i+1}$.  There is a unique
hole $\parent_{R_i}(g)$ in $R_i$ such that $g$ lies 
in $R_i^{\parent_{R_i}(g),\out}$, 
which we refer to as the \emph{parent} of $g$ in $R_i$.
(Note that the ancestry of holes goes in the opposite direction 
of the ancestry of regions in the $\vec{r}$-decomposition.)
In a distance query we only deal with a series of regions $R_0=\{u\},R_1,\ldots,R_m=G$.
The holes of these regions form a hierarchy, rooted at $\{u\}$, which we view as a degenerate hole.
For notational simplicity we use ``$g$'' to refer to the set of vertices on hole $g$.

\begin{lemma}\label{lem:ancestral-holes} 
(See~\cite[\S 4.3.2]{CharalampopoulosGMW19})
There is an $\tilde{O}(n)$-space data structure that,
given $u,v$ can report in $O(m)$ time the 
regions $R_0=\{u\}, R_1, \cdots, R_{t+1}$
and holes $h_0,h_1,\ldots,h_t$ such that 
$v \in R_{i}^{h_i,\out}$, $v\not\in R_t$, and $v\in R_{t+1}$.
\end{lemma}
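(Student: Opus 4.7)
The plan is to turn the query into (i) an LCA computation in the region tree to identify $t+1$, (ii) a single $O(1)$-time base-case lookup for $h_t$, and (iii) an $O(t)$-step iteration that propagates the hole down to level $0$ using a precomputed hole-parent function.

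I would first build the region tree $\mathcal{T}_R$ whose nodes are all regions across all levels (with $R_i\in\mathcal{R}_i$'s parent being the unique $R_{i+1}\in\mathcal{R}_{i+1}$ with $E(R_i)\subseteq E(R_{i+1})$). Since $\sum_i|\mathcal{R}_i|=O(n)$, $\mathcal{T}_R$ has $O(n)$ nodes, and standard techniques give $O(1)$-time LCA and level-ancestor queries in $O(n)$ space. For every $v\in V(G)$ I would store its ancestor array $R_0(v)=\{v\}, R_1(v),\ldots,R_m(v)=G$ in $\tilde{O}(n)$ total space. For every region $R$, every child $R'$ of $R$ in $\mathcal{T}_R$, and every hole $h$ of $R$, I would tabulate $\parent_{R'}(h)$, the unique hole of $R'$ such that $h$ lies in ${R'}^{\parent_{R'}(h),\out}$; these pointers fit in $O(n)$ space because each region has $O(1)$ holes and $\sum_i n/r_i=O(n)$.

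The query would proceed as follows. Compute $R_{t+1}=\mathrm{LCA}_{\mathcal{T}_R}(\{u\},\{v\})$; its level is $t+1$, and $R_0(u),\ldots,R_{t+1}(u)$ is read off $u$'s array in $O(m)$ time. Retrieve $R_t(v)$ similarly, and look up $h_t$ --- the hole of the sibling $R_t(u)$ whose exterior contains $R_t(v)$ --- in a base-case structure. Then iterate $h_i\leftarrow\parent_{R_i(u)}(h_{i+1})$ for $i=t-1,\ldots,0$. The iteration is correct because if $v\in {R_{i+1}(u)}^{h_{i+1},\out}$ then the face $h_{i+1}$ of the larger region $R_{i+1}(u)$ is nested inside the face $\parent_{R_i(u)}(h_{i+1})$ of the smaller subgraph $R_i(u)$ by the definition of $\parent$, so $v$ lies in ${R_i(u)}^{\parent_{R_i(u)}(h_{i+1}),\out}$ as well. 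Every step is $O(1)$, giving total query time $O(m)$.

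The hardest part will be the base-case structure that maps a pair of sibling regions $(R,R')$ to the hole of $R$ containing $R'$ in $\tilde{O}(n)$ space with $O(1)$ lookup. A direct per-pair table costs $\sum_i(n/r_{i+1})(r_{i+1}/r_i)^2 = O(n \cdot n^{1/m})$, which is only $\tilde{O}(n)$ when $m=\Omega(\log n/\log\log n)$. To handle all parameter regimes uniformly, I would attach to each parent region $R''$ a succinct point-location structure over $R''$'s planar subdivision by its children, and for each $v$ record the relevant label only along $v$'s own ancestor chain in $\mathcal{T}_R$ (one entry per level). Combined with the $O(1)$-holes-per-region bound, this caps the space at $O(nm)=\tilde{O}(n)$ and answers each base-case query in $O(1)$. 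The delicate step is the succinct encoding of this per-parent, per-child hole-label structure, and this is essentially what \cite[\S 4.3.2]{CharalampopoulosGMW19} provides.
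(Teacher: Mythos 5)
Your overall skeleton matches the paper's: an LCA query to locate the level at which $u$ and $v$ separate, followed by an $O(m)$-step walk down the hole-parent pointers $h_i \leftarrow \parent_{R_i}(h_{i+1})$, whose correctness you justify by the same nesting argument the paper relies on. The divergence --- and the gap --- is in how $h_t$ itself is obtained. You perform the LCA in the \emph{coarsened} region tree of the $\vec{r}$-division, which identifies $R_{t+1}$ but says nothing about which hole of $R_t(u)$ the vertex $v$ falls behind; this forces you to invent a separate ``base-case'' structure mapping a sibling pair $(R_t(u), R_t(v))$ to a hole of $R_t(u)$. You correctly observe that the naive per-pair table costs $O(n^{1+1/m})$, but the replacement you sketch does not close the hole: storing ``the relevant label only along $v$'s own ancestor chain'' cannot by itself answer a query about a hole of $R_t(u)$, since $R_t(u)$ lies on $u$'s chain, not $v$'s, and converting ``which face of $R_{t+1}$'s subdivision contains $v$'' into ``which hole of the particular child $R_t(u)$'' is exactly the per-(face, child) table you were trying to avoid. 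You then defer precisely this step to \cite[\S 4.3.2]{CharalampopoulosGMW19}, so the hard part of the lemma is not actually proved.

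The paper avoids the subproblem entirely by doing the LCA in the \emph{full binary decomposition tree} produced by \cite{KleinMS13} (of which the $\vec{r}$-division is a coarsening). That tree has $O(n)$ nodes, each corresponding to one cycle-separator step, and the LCA of the leaves containing $u$ and $v$ is the first separation that puts them on opposite sides; that separator directly identifies $h_t$, after which only the parent-pointer walk remains. A secondary nit: the level of $\mathrm{LCA}_{\mathcal{T}_R}(\{u\},\{v\})$ need not equal $t+1$ when $v$ is a boundary vertex belonging to several level-$i$ regions but attached as a leaf under a region other than $R_i(u)$, since $t$ is defined by subgraph membership $v\in R_i$ rather than by leaf ancestry; this is easy to patch but worth noting.
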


The method of~\cite{CharalampopoulosGMW19} simply involves
doing a least common ancestor query of $\{u\}$ and $\{v\}$ 
in the ``full'' binary decomposition returned by the~\cite{KleinMS13} algorithm (from which our $\vec{r}$-decomposition is a coarsening) 
in order to retrieve $h_t$.  The holes $h_{t-1},\ldots,h_0$
can then be found by following parent pointers in $O(m)$ time.

\subsection{Data Structures}

The following modifications are made to parts (A)--(E) of the data structure.  
In all cases the space usage is unchanged, asymptotically.

\begin{enumerate}
    \item[(A)] {\bf (\MSSP{} Structures)}
    For each $i\in [0,m-1]$, each $R_i \in \mathcal{R}_i$
    with parent $R_{i+1}$ and each hole $h_{i}$ of $R_{i}$, 
    we build a \MSSP{} structure for $R_{i}^{h_{i},\out}$
    that answers distance queries and LCA queries w.r.t.~$R_{i}^{h_i,\out}$
    for vertices in $R_{i}^{h_{i},\out}\cap R_{i+1}$.

    \item[(B)] {\bf (Voronoi Diagrams)}
    For each $i\in [0,m-1]$, each $R_{i}\in\mathcal{R}_{i}$ with parent $R_{i+1}\in \mathcal{R}_{i+1}$,
    each hole $h_{i+1}$ of $R_{i+1}$ with parent $h_i = \parent_{R_i}(h_{i+1})$, and each 
    $q\in h_{i}$, we store the dual representation
    of Voronoi diagram $\VDout(q,R_{i+1},h_{i+1})$ defined to be
    $\VD^*[R_{i+1}^{h_{i+1},\out},h_{i+1},\omega]$ with $\omega(s) = \dist_G(q,s)$.
    
    \item[(C)] {\bf (More Voronoi Diagrams)}
    For each $i\in [1,m-1]$, each $R_{i}\in\mathcal{R}_{i}$, each hole $h_{i}$ of $R_{i}$, and each $q\in h_{i}$, we store $\VDout(q,R_{i},h_{i})$, which is $\VD^*[R_{i}^{h_{i},\out},h_{i},\omega]$ with $\omega(s) = \dist_G(q,s)$.
    
    \item[(D)] {\bf (Chord Trees; Piece Trees)}
    For each $i\in[1,m-1]$, each $R_{i}\in\mathcal{R}_{i}$, each hole $h_{i}$ of $R_{i}$, and source $q\in h_{i}$, we store a chord tree $T_{q}^{R_{i},h_{i}}$ obtained by restricting the SSSP tree with source $q$ to $h_{i}$.
    An edge in $T_{q}^{R_{i},h_{i}}$ is designated a chord if the corresponding path lies in $R_{i}^{h_{i},\out}$ and is internally vertex disjoint from $h_i$. 
    $\mathcal{C}_q^{R_i,h_i},\mathcal{P}_q^{R_i,h_i},\mathcal{T}_q^{R_i,h_i}$
    are defined analogously, and data structures are built to answer $\MaximalChord$
    and $\AdjacentPiece$ with respect to $q,R_i,h_i$.  

    \item[(E)] {\bf (Site Tables; Side Tables)}
    Fix an $i$ and a Voronoi diagram $\VDout = \VDout(u',R_{i},h_{i})$ from part (B) or (C). Let $f^{*}$ be any node in the centroid decomposition of $\VDout$ with $y_{j},s_{j}$ defined as usual, $j\in\{0,1,2\}$. 
    Let $R_{i'}\in\mathcal{R}_{i'}$ be an ancestor of $R_{i}$, $i'> i$, and $h_{i'}$ be a hole of $R_{i'}$ lying in $R_{i}^{h_{i},\out}$.  We store the first and last vertices $q,x$ on the shortest $s_{j}$-$y_{j}$ path that lie on $h_{i'}$, as well as $\dist_G(u',x)$.
    
    We also store whether $R_{i'}^{h_{i'},\out}$ lies to the left or right of the site-centroid-site chord $\chord{s_{j}y_{j}y_{j-1}s_{j-1}}$, or \textbf{Null} if the relationship cannot be determined.

\end{enumerate}

\subsection{Query}

At the first call to $\Dist(u,v,R_0)$ we apply Lemma~\ref{lem:ancestral-holes}
to generate the regions $R_1,\ldots,R_{t+1}$ 
and holes $h_1,\ldots,h_t$ 
that will be accessed in all recursive calls, in $O(m)$ time.

The shortest $u$-$v$ path in $G$ will eventually cross $h_{1},\ldots,h_{t}$.
The vertex $u_i$ is now defined to be the last vertex in $h_i$ on 
the shortest $u$-$v$ path.
Given $u_i$, we find $u_{i+1}$ by solving a point location
problem in $\VDout(u_{i},R_{i+1},h_{i+1})$.
The $\Navigation$ routine focuses on the subgraph 
$R_{t}^{h_{t},\out}$ rather than $R_{t}^{\out}$. 
The general problem is no different than the single hole case, 
except that there may be $O(1)$ holes of $R_{t+1}$ lying in $R_{t}^{h_{t},\out}$, which does not cause further complications.

\newcommand{\DDG}{\mathsf{DDG}}

\section{Construction}\label{sect:construction}

As in~\cite{CharalampopoulosGMW19}, we use dense distance graphs as a tool to build our oracle. To simplify the description, we still assume that $\boundary R$ lies on a single simple cycle for every region $R$ in the $\vec{r}$-division. Generalizing to multiple holes and nonsimple cycles
is straightforward.

The \textit{dense distance graph} of a region $R$ (denoted by $\DDG[R]$) is a complete directed graph on the vertices 
of $\partial R$, in which the length of $(u,v)$ is $\dist_R(u,v)$. 
We say that this kind of DDGs are \textit{internal} and, similarly, define 
the \textit{external} DDG of a region $R$ (denoted by $\DDG[R^{\out}]$) as a complete directed graph 
on $\partial R$, in which the length of $(u,v)$ is $\dist_{R^{\out}}(u,v)$. 

The FR-Dijkstra algorithm~\cite{FakcharoenpholR06} is an efficient implementation of Dijkstra's algorithm \cite{Dijkstra59} on DDGs.
In particular, it simulates the behavior of the heap in Dijkstra's algorithm
without explicitly scanning every edge in the DDGs.
In fact, the FR-Dijkstra algorithm can run on a union of DDGs \cite{FakcharoenpholR06}. Moreover, it is shown in \cite{BorradaileSW15} that it can also run compatibly with a traditional Dijkstra algorithm.
Suppose we have a graph $H$ that consists of a subgraph of $G$ on $n_0$ vertices, 
and $k$ DDGs on $n_1,n_2,\ldots,n_k$ vertices.  
The FR-Dijkstra algorithm can be implemented on $H$ in $\tilde{O}(N)$ time,
where $N=\sum_i n_i$.


Before the construction of DDGs and our oracle, we first prepare Klein's \MSSP{} structures (part (F) below). Note that \MSSP{} structures in part (F) are only used in the constructions of DDGs and part (E). They are not stored in our oracle and unrelated to the \MSSP{} structures from part (A). 
\begin{enumerate}
    \item [(F)] ({\bf More \MSSP{} Structures})
    For each $i\in[0,m-1]$, each $R_{i}\in\mathcal{R}_{i}$ with parent $R_{i+1}\in\mathcal{R}_{i+1}$, we build two \MSSP{} structures for $R_{i}^{\out}\cap R_{i+1}$ with sources on $\boundary R_{i}$ and $\boundary R_{i+1}$, respectively, 
    and an \MSSP{} structure for $R_{i}$ with sources on $\boundary R_{i}$. 
    
    All these \MSSP{} structures are constructed using Klein's \MSSP{} algorithm \cite{Klein05} or the one in 
    Appendix~\ref{sect:Euler} (with $\kappa=\log n$) in $\tilde{O}(\sum_{i}\frac{n}{r_{i}}r_{i+1})=\tilde{O}(mn^{1+1/m})$ time.
\end{enumerate}

We then compute, for each region $R_{i}$ in the $\vec{r}$-division, 
the internal DDG, the external DDG, 
and the DDG of $R_{i}^{\out}\cap R_{i+1}$ 
(denoted by $\DDG[R_{i}^{\out}\cap R_{i+1}]$) 
defined as the complete graph with 
vertices $\boundary R_{i}$ and $\boundary R_{i+1}$ and 
edge weights the distances in $R_{i}^{\out}\cap R_{i+1}$. 
The internal DDG and $\DDG[R_{i}^{\out}\cap R_{i+1}]$ 
for each region $R_{i}$ can be computed 
using \MSSP{} structures in part (F)
in $\tilde{O}(r_{i})$ and $\tilde{O}(r_{i+1})$ time respectively, 
so it takes $\tilde{O}(\sum_{i}\frac{n}{r_{i}}(r_{i}+r_{i+1}))=\tilde{O}(mn^{1+1/m})$ time over all regions. 
To compute the external DDGs, 
we consider a top-down process on the $\vec{r}$-division. 
The external DDG for $R_{i}$ can be computed 
by running the FR-Dijkstra algorithm sourced from $\boundary R_{i}$
on the union of $\DDG[R_{i+1}^{\out}]$ and $\DDG[R_{i}^{\out} \cap R_{i+1}]$. 
The size of the union is
$O(\sqrt{r_{i+1}})$, 
so computing $\DDG[R_{i}^{\out}]$ takes $\tilde{O}(\sqrt{r_{i}r_{i+1}})$ time, 
and the construction time over all external DDGs is
$\tilde{O}(\sum_{i}\frac{n}{r_{i}}\sqrt{r_{i}r_{i+1}})=\tilde{O}(mn^{1+1/(2m)})$. The total construct time for all DDGs is $\tilde{O}(mn^{1+1/m})$.

With dense distance graphs, all components in the oracle can be constructed as follows.

\begin{enumerate}

\item[(A)] {\bf \MSSP{} Structures}

Recall that our \MSSP{} structure for $R_{i}^{\out}$ with sites $\boundary R_{i}$ is obtained by contracting subpaths in $R_{i+1}^{\out}$ of the SSSP trees into single edges. In order to build the \MSSP{} structure using dynamic trees, 
it suffices to compute the contracted shortest path tree for every source on $\boundary R_{i}$ and then compare the differences between the trees of two adjacent sources on $\partial R_{i}$. 

For a single source on $\boundary R_{i}$, the contracted shortest path tree can be computed with 
FR-Dijkstra algorithm on the union of subgraph $R_{i}^{\out}\cap R_{i+1}$ and $\DDG[R_{i+1}^{\out}]$ in time $\tilde{O}(r_{i+1})$. Thus, the time for constructing and comparing the shortest path trees is $\tilde{O}(r_{i+1}\sqrt{r_{i}})$. After that, an \MSSP{} structure for $R_{i}^{\out}$ can be built in time $\tilde{O}((r_{i+1}+\sqrt{r_{i}r_{i+1}})\kappa n^{1/\kappa})$. The total time to construct all \MSSP{} structures is $\tilde{O}(\sum_{i}\frac{n}{r_{i}}(r_{i+1}\sqrt{r_{i}}+r_{i+1}\kappa n^{1/\kappa}))=\tilde{O}(n^{3/2+1/m}m+n^{1+1/\kappa+1/m}m\kappa)$.

\begin{remark}
    Notice that in our \MSSP{} structures for $R_{i}^{\out}$, a contracted subpath should be ``strictly'' inside $R_{i+1}^{\out}$, which contains no vertices belonging to $R_{i}^{\out}\cap R_{i+1}$ except its endpoints. However, the underlying shortest paths represented by edges in $\DDG[R_{i+1}^{\out}]$ may not satisfy this condition. To fix this problem, we add small perturbation to all edge weights in DDGs.
    Note that this will not break the Monge property of a DDG's adjacency matrix, on which the FR-Dijkstra algorithm relies. In fact, all different shortest paths from the source to $v$ in the union represent the same unique underlying shortest path in $R_{i}^{\out}$, and we choose the one passing as many as possible vertices in the union, on which each edge of the DDG is ``strictly'' inside $R_{i+1}^{\out}$. This mechanism will also be used below.
\end{remark}


\item[(B/C)]{\bf Voronoi Diagrams}

The additive weights of all Voronoi diagrams can be computed by a FR-Dijkstra algorithm running on a union of proper DDGs. Specific to $\VDout(u_{i},R_{i+1})$ in (B), additive weights are given by considering the union of $\DDG[R_{i}],\DDG[R_{i}^{\out}\cap R_{i+1}],\DDG[R_{i+1}^{\out}]$ in $\tilde{O}(\sqrt{r_{i+1}})$ time. For $\VDout(u_{i},R_{i})$ in (C), we focus on the union of $\DDG[R_{i}],\DDG[R_{i}^{\out}]$ and additive weights can be computed in time $\tilde{O}(\sqrt{r_{i}})$. The overall time to compute additive weights is $\tilde{O}(\sum_{i}\frac{n}{r_{i}}\sqrt{r_{i}}\sqrt{r_{i+1}})=\tilde{O}(mn^{1+1/(2m)})$.

An efficient algorithm to compute the dual representation of a Voronoi diagram is presented in \cite{CharalampopoulosGMW19}, by considering the complete recursive decomposition of $G$. Note that the complete recursive decomposition of $G$ is a binary decomposition tree, and the $\vec{r}$-division is a coarse version of it. It can be obtained in $O(n)$ time \cite{KleinMS13}.

\begin{lemma}\label{lem:constructVD}
(Cf.~Charalampopoulos et al.~\cite{CharalampopoulosGMW19}, Theorem 12)
Given a complete recursive decomposition of $G$, in which every region has been preprocessed for the FR-Dijkstra algorithm as in \cite{FakcharoenpholR06}, the dual representation of Voronoi diagrams on the complement of a specific region $R$ with sites $\boundary R$ and arbitrary input additive weights can be computed in time $\tilde{O}(\sqrt{|G|\cdot|\boundary R|})$.
\end{lemma}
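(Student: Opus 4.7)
The plan is divide-and-conquer on the complete recursive decomposition tree $\mathcal{T}$, restricted to those pieces that intersect $R^{\out}$. The invariant, maintained bottom-up, is that at every piece $P$ we store the dual representation of the \emph{local} Voronoi diagram on $P$ whose sites are $\boundary P \cup (\boundary R \cap V(P))$ and whose additive weights on $\boundary P$ equal $\min_{s\in\boundary R}\omega(s)+\dist_{R^{\out}\setminus \operatorname{int}(P)}(s,v)$. A short argument shows this local VD coincides with the restriction of the target $\VD^{*}[R^{\out},\boundary R,\omega]$ to $V(P)$: any shortest $s$-to-$v$ path from $s\in\boundary R$ to $v\in V(P)$ either stays inside $P$ (when $s\in V(P)$) or first enters $P$ through some interface vertex $u\in\boundary P$, whose additive weight $\omega_P(u)$ records the contribution of the prefix up to $u$.

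Leaves of $\mathcal{T}$ are of constant size, so the local VD is computed directly. For an internal node $P$ with children $P_1,P_2$, I first update the additive weights on $\boundary P$. This is a multi-source computation launched from the sites of $\boundary R\setminus V(P)$ and from the existing interface sites of $P_1,P_2$, carried out with FR-Dijkstra on the union of $\DDG[P_1]$, $\DDG[P_2]$ and $\DDG[(P)^{\out}]$, all of which are provided by the preprocessing hypothesis. Once $\omega_P$ is known, I merge the two dual VDs: because each dual representation is a tree (Lemma 4.1 of \cite{GawrychowskiMWW18}), one traverses the two trees simultaneously along the shared interface $\boundary P_1\cap\boundary P_2$, stitching bichromatic dual edges into the merged tree on $P$ while re-classifying Voronoi cells whose ownership changes when sites become reachable through the other half. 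Monotonicity of the assignment under weight improvements lets the re-classification be charged to the edges it touches in the merged dual.

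The principal difficulty is the amortized time analysis: a naive summation of merge costs over all pieces of $\mathcal{T}$ yields $\tilde{O}(|G|)$, not $\tilde{O}(\sqrt{|G|\cdot|\boundary R|})$. The desired bound is recovered by balancing the recursion against an $r$-division of $G$ at the level $r=|G|/|\boundary R|$, where the total boundary $\sum_P |\boundary P|$ across pieces equals exactly $\sqrt{|G|\cdot|\boundary R|}$. Above this level, the FR-Dijkstra work at each piece is bounded by its own $\tilde{O}(\sqrt{|P|})$ compressed size times a logarithmic factor, and the dual VDs at each piece have size only $O(|\boundary P|+|\boundary R\cap V(P)|)$; below this level, the per-piece work telescopes since relevant site counts shrink geometrically as pieces shrink. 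Summing the geometric series across $O(\log|G|)$ levels gives the claimed bound. The main technical obstacle is showing that the dual-VD merge truly runs in time linear (up to logarithmic factors) in the size of the merged dual, rather than in the size of the original local VDs, which requires Lemma 4.1 of \cite{GawrychowskiMWW18} plus a careful argument that cells whose ownership changes can be visited only $O(1)$ times each during the sweep along $\boundary P_1\cap\boundary P_2$.
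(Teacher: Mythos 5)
The paper does not actually prove this lemma; it imports it wholesale from Charalampopoulos et al.\ (Theorem~12 of \cite{CharalampopoulosGMW19}), whose algorithm works top-down over the recursive decomposition, pruning pieces that cannot meet the dual tree after a global FR-Dijkstra has fixed the Voronoi assignment of piece boundaries. Your proposal runs in the opposite direction --- a bottom-up merge of local diagrams --- and as written it has gaps I do not see how to close.

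First, the invariant is not maintainable bottom-up. The additive weight you assign to an interface vertex $u\in\boundary P$, namely $\min_{s}\omega(s)+\dist_{R^{\out}\setminus\operatorname{int}(P)}(s,u)$, is not the quantity that makes the local diagram agree with the restriction of $\VD^*[R^{\out},\boundary R,\omega]$ to $P$: the true shortest path from a site to $u$ may enter and leave $P$ several times, so the correct weight is the unrestricted $\min_s\omega(s)+\dist_{R^{\out}}(s,u)$, which is only known once the \emph{whole} of $R^{\out}$ has been processed. Consequently each merge can lower the weights of interface sites of pieces already handled, invalidating the stored local diagrams arbitrarily deep inside them. The step you yourself flag as the main obstacle --- that re-classification during a merge can be charged to the edges of the merged dual --- is exactly the claim that needs proof, and it is the reason the known constructions avoid merging Voronoi diagrams across a graph separator: \cite{GawrychowskiMWW18} divides on the \emph{sites}, so the two sub-diagrams are glued along a single bisector path of the dual tree, and \cite{CharalampopoulosGMW19} recurses top-down so that no merge is ever performed.

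Second, the cost accounting does not survive the change of direction. Visiting every leaf of the complete recursive decomposition already costs $\Omega(|G|)$, and your fix --- balancing at the $r$-division with $r=|G|/|\boundary R|$ --- presupposes that below (and at) that level only the $O(|\boundary R|)$ pieces meeting the dual tree are processed. That pruning criterion is available top-down, after the global distances are computed, but a bottom-up pass has no way to identify those pieces in advance, so "the per-piece work telescopes" is asserted rather than established. (The balancing computation itself is the right one: a multi-source FR-Dijkstra over the DDGs of that division costs $\tilde{O}(\sqrt{|G|\cdot|\boundary R|})$, and the portion of $\VD^*$ inside a piece $P$ has size $O(|\boundary P|)$; what is missing is the mechanism that keeps only $O(|\boundary R|)$ pieces per level alive.) A smaller but real point: you invoke $\DDG[P^{\out}]$ at every internal node, but external dense distance graphs are not part of the FR-Dijkstra preprocessing assumed in the hypothesis, and materializing them for every piece of the complete recursive decomposition is itself super-linear; they must be simulated by unions of internal DDGs, which again costs $\tilde{\Omega}(\sqrt{|G|})$ per node if done at every node.
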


By Lemma \ref{lem:constructVD}, the total construction time for the dual representations is 
\[
\tilde{O}\left(\sum_{i}\frac{n}{r_{i}}\sqrt{r_{i}}\sqrt{n\sqrt{r_{i+1}}} + \sum_{i}\frac{n}{r_{i}}\sqrt{r_{i}}\sqrt{n\sqrt{r_{i}}}\right)
= \tilde{O}\left(n^{3/2+1/(4m)}\right),
\]
which is also the construction time for parts (B) and (C).


\item[(D)]{\bf Chord Trees and Piece Trees}

Recall that the chord tree $T_{q}^{R_{i}}$ is obtained from the shortest path tree in $G$ sourced from $q\in \boundary R_{i}$ by contracting all paths between vertices in $\boundary R_{i}$ into
single edges. Thus, it can be computed by running FR-Dijkstra on the union of 
$\DDG[R_{i}]$ and $\DDG[R_{i}^{\out}]$ in $\tilde{O}(\sqrt{r_{i}})$ time.

Regarding the construction of piece tree $\mathcal{T}_{q}^{R_{i}}$, we first extract all the chords on $T_{q}^{R_{i}}$ in $R_{i}^{\out}$, i.e. the chord set $\mathcal{C}_{q}^{R_{i}}$. 
We treat each chord in $\mathcal{C}_{q}^{R_{i}}$ as an undirected edge and consider the undirected planar graph $Q$ which is the union of $\mathcal{C}_{q}^{R_{i}}$ and $\boundary R$. Observe that each piece in $\mathcal{P}_{q}^{R_{i}}$ relates to a face of $Q$. The piece tree $\mathcal{T}_{q}^{R_{i}}$ can be built straightforwardly in time $\tilde{O}(\sqrt{r_{i}})$. With the graph $Q$ and the piece tree $\mathcal{T}_{q}^{R_{i}}$, the data structure supporting $\MaximalChord$ and $\AdjacentPiece$ in Lemma \ref{lem:partD} can also be constructed in time $\tilde{O}(\sqrt{r_{i}})$ for the given $q, R_{i}$.

The total time for the whole part (D) is $\tilde{O}(\sum_{i}\frac{n}{r_{i}}\sqrt{r_{i}}\sqrt{r_{i}})=\tilde{O}(nm)$.


\item[(E)]{\bf Site Tables and Side Tables}

We focus on the site table and side table for a specific $\VDout(u,R_{i})$, and do some preparations.
\begin{itemize}
    \item Observe that the union of 
\[
\DDG[R_{i}^{\out}\cap R_{i+1}],\DDG[R_{i+1}^{\out}\cap R_{i+2}],\ldots,\DDG[R_{m-1}^{\out}]
\]
contains exactly all boundary vertices in $R_{i}^{\out}$ of ancestors $R_{i},R_{i+1},\ldots,R_{m-1}$. 
We use $H$ to denote this union with an artificial super-source $u'$ connected to each site $s\in \boundary R_{i}$ with weight $\omega(s)$, and construct the shortest path tree $T_{H}$ in $H$ sourced from the super-source $u'$ by FR-Dijkstra algorithm, which costs $\tilde{O}(\sqrt{n})$ time.
\end{itemize}

Remember that the site table stores the first and last vertices of each site-centroid $s$-$y$ path on the boundary of each ancestor $R_{i'}$ ($i'\geq i$). We first find the last vertex $x$ on the $s$-$y$ path belonging to $H$. Assume that $y\in R_{k+1}$ but $y\notin R_{k}$, where $R_{k},R_{k+1}$ are ancestors of $R_{i}$. We can observe that $x$ is the vertex in $\boundary R_{k}\cup\boundary R_{k+1}$ with the minimal $\dist_{H}(u',x)+\dist_{R_{k}^{\out}\cap R_{k+1}}(x,v)$ (breaking ties in favor of larger $\dist_{H}(u',x)$). The former is given by $T_{H}$ and the latter can be found by querying \MSSP{} structures in (F) for $R_{k}^{\out}\cap R_{k+1}$.
The calculation of $x$ needs time $\tilde{O}(|\boundary R_{k+1}|)=\tilde{O}(\sqrt{n})$.
Observe that the $u'$-$x$ path on $T_{H}$ includes all boundary vertices of upper regions on the $s$-$y$ path. By retrieving the $u'$-to-$x$ path on $T_{H}$ in $O(\sqrt{n})$ time, we can get the required information for the site table. The construction time of a site table for $\VDout(u,R_{i})$ is $\tilde{O}(\sqrt{r_{i}}\sqrt{n})$.

In the side table, we will store the relationship (left/right/{\bf Null}) between each site-centroid-site chord $C=\overrightarrow{s_{j}y_{j}y_{j-1}s_{j-1}}$ (using the notations in Figure \ref{fig:CentroidSearch}) and each ancestor $R_{i'}^{\out}$ $(i'\geq i)$. With the technique used in the construction of site tables, we can extract all vertices of $C$ on each $\boundary R_{i'}$ from $T_{H}$, and then determine the relationship between $C$ and each $R_{i'}^{\out}$ with boundary vertices on $C$. For each $R_{i'}^{\out}$ that $C$ contains no vertices on $\boundary R_{i'}$, we pick an arbitrary vertex $z$ on $\boundary R_{i'}$. We can retrieve from $T_{H}$ the $u'$-$z$ path and find the site $s_{z}$ s.t. $z\in \Vor(s_{z})$. This can be done in $O(\sqrt{n})$ time. With $T_{H}$ and \MSSP{} structures in part (F), we can determine the pairwise relationships among $s_{j}$-$y_{j}$, $s_{j-1}$-$y_{j-1}$ and $s_{z}$-$z$ shortest paths and know whether $z$ lies to the left or right of $C$, which immediately shows the relationship between $C$ and $R_{i'}^{\out}$. The construction time for a side table of $\VDout(u,R_{i})$ is $\tilde{O}(\sqrt{r_{i}}m\sqrt{n})$.

The total time for building all site tables and side tables
is $\tilde{O}(\sum_{i}\frac{n}{r_{i}}\sqrt{r_{i}}\sqrt{r_{i+1}}m\sqrt{n})=\tilde{O}(n^{3/2+1/(2m)}m^{2})$.

\end{enumerate}

The overall construction time is $\tilde{O}(n^{3/2+1/m}+n^{1+1/m+1/\kappa})$ since $m$ and $\kappa$ should be functions of $n$ that are $O(\log n)$.

\end{document}